\newcommand{\gvec}[1]{\boldsymbol{#1}}
\newcommand{\Mat}[1]{\mathbf{\MakeUppercase{#1}}}
\newcommand{\norm}[1]{\left\lVert#1\right\rVert} 
\newcommand{\set}[1]{\mathcal{#1}} 
\newcommand{\setdef}[2][]{
	\left\{
		\ifblank{#1}{}{#1 \hspace{.1cm} \middle| \hspace{.1cm}}
		#2
	\right\}
} 
\newcommand{\rv}[1]{\mathrm{#1}} 
\newcommand{\lr}[1]{\left(#1\right)} 
\newcommand{\sign}[1]{\text{sign}\left(#1\right)} 
\newcommand{\T}[0]{T}
\DeclarePairedDelimiterXPP\onenorm[1]{}\lVert\rVert{_1}{\ifblank{#1}{\:\cdot\:}{#1}} 
\DeclarePairedDelimiterXPP\twonorm[1]{}\lVert\rVert{_2}{\ifblank{#1}{\:\cdot\:}{#1}} 
\newtheorem{theorem}{Theorem}
\newtheorem{assumption}{Assumption}
\newtheorem{remark}{Remark}
\newtheorem{definition}{Definition}
\newtheorem{lemma}{Lemma}
\newacronym{ocp}{OCP}{optimal control problem}
\newacronym{MPC}{MPC}{Model Predictive Control}
\newacronym{MPCa}{MPC algorithm}{MPC algorithm}
\newacronym{RMPC}{RMPC}{Robust Model Predictive Control}
\newacronym{SMPC}{SMPC}{Stochastic Model Predictive Control}
\newacronym{SCMPC}{SCMPC}{Scenario Stochastic Model Predictive Control}
\newacronym{MILP}{MILP}{Mixed Integer Linear Program}
\newacronym{PIT}{PIT}{Pointwise\hyp{}In\hyp{}Time}
\newacronym{POMDP}{POMDP}{Partially Observable Markov Decision Process}
\newacronym{MDP}{MDP}{Markov Decision Process}
\newacronym{KKT}{KKT}{Karush\hyp{}Kuhn\hyp{}Tucker}
\newacronym{SMPCFT}{SMPC+FTP}{\textit{Stochastic Model Predictive Control + fail-safe trajectory planning}}
\newacronym{FT}{FTP}{fail-safe trajectory planning}
\newacronym{tv}{TV}{target vehicle}
\newacronym{ev}{EV}{ego vehicle}
\newglossaryentry{chiUk}{type=symbols,
	sort={control},
	name={\ensuremath{\boldsymbol{\chi}_h^{\bm{U}_h}}},
	description={?}
}
\newglossaryentry{chiX}{type=symbols,
	sort={control},
	name={\ensuremath{\boldsymbol{\chi}}},
	description={?}
}
\newglossaryentry{xik}{type=symbols,
	sort={control},
	name={\ensuremath{\bm{\xi}_k}},
	description={?}
}
\newglossaryentry{xi0}{type=symbols,
	sort={control},
	name={\ensuremath{\bm{\xi}_0}},
	description={?}
}
\newglossaryentry{sk}{type=symbols,
	sort={control},
	name={\ensuremath{s_k}},
	description={?}
}
\newglossaryentry{phik}{type=symbols,
	sort={control},
	name={\ensuremath{\phi_k}},
	description={?}
}
\newglossaryentry{dk}{type=symbols,
	sort={control},
	name={\ensuremath{d_k}},
	description={?}
}
\newglossaryentry{vk}{type=symbols,
	sort={control},
	name={\ensuremath{v_k}},
	description={?}
}
\newglossaryentry{uk}{type=symbols,
	sort={control},
	name={\ensuremath{\bm{u}_k}},
	description={?}
}
\newglossaryentry{ak}{type=symbols,
	sort={control},
	name={\ensuremath{a_k}},
	description={?}
}
\newglossaryentry{deltak}{type=symbols,
	sort={control},
	name={\ensuremath{\delta_k}},
	description={?}
}
\newglossaryentry{xitv}{type=symbols,
	sort={control},
	name={\ensuremath{\bm{\xi}^{\text{TV}}_k}},
	description={?}
}
\newglossaryentry{Rtv}{type=symbols,
	sort={control},
	name={\ensuremath{\mathcal{R}^{\text{TV}}_k}},
	description={?}
}
\newglossaryentry{bRtv}{type=symbols,
	sort={control},
	name={\ensuremath{\overline{\mathcal{R}}^{\text{TV}}_k}},
	description={?}
}
\newglossaryentry{xitv0}{type=symbols,
	sort={control},
	name={\ensuremath{\bm{\xi}^{\text{TV}}_0}},
	description={?}
}
\newglossaryentry{dxEVTV}{type=symbols,
	sort={control},
	name={\ensuremath{ \Delta x^\text{EV,TV}_0 }},
	description={?}
}
\newglossaryentry{hxitv0}{type=symbols,
	sort={control},
	name={\ensuremath{\bm{\hat{\xi}}^{\text{TV}}_0}},
	description={?}
}
\newglossaryentry{hxitv}{type=symbols,
	sort={control},
	name={\ensuremath{\bm{\hat{\xi}}^{\text{TV}}}},
	description={?}
}
\newglossaryentry{xitvref}{type=symbols,
	sort={control},
	name={\ensuremath{\bm{\xi}^{\text{TV}}_{\text{ref},k}}},
	description={?}
}
\newglossaryentry{utv}{type=symbols,
	sort={control},
	name={\ensuremath{\bm{u}^{\text{TV}}_k}},
	description={?}
}
\newglossaryentry{tutv}{type=symbols,
	sort={control},
	name={\ensuremath{\bm{\tilde{u}}^{\text{TV}}_k}},
	description={?}
}
\newglossaryentry{wtv}{type=symbols,
	sort={control},
	name={\ensuremath{\bm{w}^{\text{TV}}_k}},
	description={?}
}
\newglossaryentry{wsens}{type=symbols,
	sort={control},
	name={\ensuremath{\bm{w}^{\text{sens}}_0}},
	description={?}
}
\newglossaryentry{Wsens}{type=symbols,
	sort={control},
	name={\ensuremath{\mathcal{W}^{\text{sens}}}},
	description={?}
}
\newglossaryentry{xtv}{type=symbols,
	sort={control},
	name={\ensuremath{x^{\text{TV}}_k}},
	description={?}
}
\newglossaryentry{ytv}{type=symbols,
	sort={control},
	name={\ensuremath{y^{\text{TV}}_k}},
	description={?}
}
\newglossaryentry{vxtv}{type=symbols,
	sort={control},
	name={\ensuremath{v_{x,k}^{\text{TV}}}},
	description={?}
}
\newglossaryentry{vytv}{type=symbols,
	sort={control},
	name={\ensuremath{v_{y,k}^{\text{TV}}}},
	description={?}
}
\newglossaryentry{xtv0}{type=symbols,
	sort={control},
	name={\ensuremath{x^{\text{TV}}_0}},
	description={?}
}
\newglossaryentry{ytv0}{type=symbols,
	sort={control},
	name={\ensuremath{y^{\text{TV}}_0}},
	description={?}
}
\newglossaryentry{vxtv0}{type=symbols,
	sort={control},
	name={\ensuremath{v_{x,0}^{\text{TV}}}},
	description={?}
}
\newglossaryentry{vytv0}{type=symbols,
	sort={control},
	name={\ensuremath{v_{y,0}^{\text{TV}}}},
	description={?}
}
\newglossaryentry{evlane}{type=symbols,
	sort={control},
	name={\ensuremath{y_{\text{lane},0}^\text{EV}}},
	description={?}
}
\newglossaryentry{tvlane}{type=symbols,
	sort={control},
	name={\ensuremath{y_{\text{lane},0}^\text{TV}}},
	description={?}
}
\newglossaryentry{dclose}{type=symbols,
	sort={control},
	name={\ensuremath{r_\text{close}}},
	description={?}
}
\newglossaryentry{dcloseft}{type=symbols,
	sort={control},
	name={\ensuremath{r^\text{FTP}_\text{close}}},
	description={?}
}
\newglossaryentry{msafe}{type=symbols,
	sort={control},
	name={\ensuremath{\varepsilon_\text{safe}}},
	description={?}
}
\newglossaryentry{fclose}{type=symbols,
	sort={control},
	name={\ensuremath{r_\text{close}}},
	description={?}
}
\newglossaryentry{ffclose}{type=symbols,
	sort={control},
	name={\ensuremath{f_\text{close}}},
	description={?}
}
\newglossaryentry{fcloseft}{type=symbols,
	sort={control},
	name={\ensuremath{r^\text{FTP}_\text{close}}},
	description={?}
}
\newglossaryentry{ffcloseft}{type=symbols,
	sort={control},
	name={\ensuremath{f^\text{FTP}_\text{close}}},
	description={?}
}
\newglossaryentry{dt}{type=symbols,
	sort={control},
	name={\ensuremath{T}}, 
	description={?}
}
\newglossaryentry{wcov}{type=symbols,
	sort={control},
	name={\ensuremath{\bm{\Sigma}^{\text{TV}}_{\bm{w}}}},
	description={?}
}
\newglossaryentry{wcov_t}{type=symbols,
	sort={control},
	name={\ensuremath{\tilde{\Sigma}^{\text{TV}}_{\bm{w}}}},
	description={?}
}
\newglossaryentry{wsenscov}{type=symbols,
	sort={control},
	name={\ensuremath{\Sigma^{\text{sens}}}},
	description={?}
}
\newglossaryentry{Xilane}{type=symbols,
	sort={control},
	name={\ensuremath{\mathcal{D}^{\text{lane}}}},
	description={?}
}
\newglossaryentry{UU}{type=symbols,
	sort={control},
	name={\ensuremath{\mathcal{U}}},
	description={?}
}
\newglossaryentry{XX}{type=symbols,
	sort={control},
	name={\ensuremath{\bm{\Xi}}},
	description={?}
}
\newglossaryentry{XXf}{type=symbols,
	sort={control},
	name={\ensuremath{\bm{\Xi}_\textnormal{f}}},
	description={?}
}
\newglossaryentry{XXsafe}{type=symbols,
	sort={control},
	name={\ensuremath{\bm{\Xi}_\textnormal{safe}}},
	description={?}
}
\newglossaryentry{Usafe}{type=symbols,
	sort={control},
	name={\ensuremath{\bm{U}_\textnormal{safe}}},
	description={?}
}
\newglossaryentry{Usafeh}{type=symbols,
	sort={control},
	name={\ensuremath{\bm{U}_{\textnormal{safe},h}}},
	description={?}
}
\newglossaryentry{Usmpc}{type=symbols,
	sort={control},
	name={\ensuremath{\bm{U}_\text{SMPC}}},
	description={?}
}
\newglossaryentry{Uft}{type=symbols,
	sort={control},
	name={\ensuremath{\bm{U}_\text{FTP}}},
	description={?}
}
\newglossaryentry{Uft2}{type=symbols,
	sort={control},
	name={\ensuremath{\bm{U}'
	_\text{FTP}}},
	description={?}
}
\newglossaryentry{Ubrake}{type=symbols,
	sort={control},
	name={\ensuremath{\bm{U}_\text{brake}}},
	description={?}
}
\newglossaryentry{Nsmpc}{type=symbols,
	sort={control},
	name={\ensuremath{N_\text{SMPC}}},
	description={?}
}
\newglossaryentry{Nft}{type=symbols,
	sort={control},
	name={\ensuremath{N_\text{FTP}}},
	description={?}
}
	\newglossaryentry{state}{type=symbols,
		sort={control},
		name={\ensuremath{\gvec{\xi}}},
		description={State vector}
	}
	\newglossaryentry{admissible_states}{type=symbols,
		sort={control},
		name={\ensuremath{\set{\Xi}_k}},
		description={Set of admissible states at time $k$}
	}
	\newglossaryentry{position_vector}{type=symbols,
		sort={control},
		name={\ensuremath{\gvec{\zeta}}},
		description={Position vector}
	}
	\newglossaryentry{rectangle_vertex}{type=symbols,
		sort={control},
		name={\ensuremath{\gvec{\zeta}^{\text{vert}}}},
		description={Vertex point of the safety rectangle on which the linear constraint is based}
	}
	\newglossaryentry{position}{type=symbols,
		sort={control},
		name={\ensuremath{x,y}},
		description={Longitudinal and lateral position}
	}
	\newglossaryentry{velo}{type=symbols,
		sort={control},
		name={\ensuremath{v_x, v_y}},
		description={Longitudinal and lateral velocity}
	}
	\newglossaryentry{control_input}{type=symbols,
		sort={control},
		name={\ensuremath{\vec{u}}},
		description={Control input vector}
	}
	\newglossaryentry{delta_control_input}{type=symbols,
		sort={control},
		name={\ensuremath{\triangle \vec{u}}},
		description={Rate of the control input vector \ensuremath{\vec{u}}}
	}
    \newglossaryentry{stacked_inputs}{type=symbols,
		sort={control},
		name={\ensuremath{U}},
		description={Sequence of input vectors}
	}
	\newglossaryentry{admissible_inputs}{type=symbols,
		sort={control},
		name={\ensuremath{\mathcal{U}_k}},
		description={Set of admissible inputs at time $k$}
	}
	\newglossaryentry{disturbance_Sens}{type=symbols,
		sort={control},
		name={\ensuremath{\vec{w}^{\text{Sens}}}},
		description={Disturbance vector corrupting the TV state measurement}
	}
	\newglossaryentry{disturbance_TV}{type=symbols,
		sort={control},
		name={\ensuremath{\vec{w}^{\text{TV}}}},
		description={Disturbance vector corrupting the TV inputs}
	}
	\newglossaryentry{disturbance_set}{type=symbols,
		sort={control},
		name={\ensuremath{\set{W}^{\text{Sens}}}},
		description={Compact, bounded set of measurement disturbances}
	}
	\newglossaryentry{Sens_covariance}{type=symbols,
		sort={control},
		name={\ensuremath{\set{\Sigma}^{\text{Sens}}}},
		description={Covariance matrix of the TV state measurement disturbance}
	}
	\newglossaryentry{TV_covariance}{type=symbols,
		sort={control},
		name={\ensuremath{\set{\Sigma}^{\text{TV}}_w}},
		description={Covariance matrix of the TV disturbance}
	}
	\newglossaryentry{sampling_time}{type=symbols,
		sort={control},
		name={\ensuremath{\triangle t}},
		description={Sampling time}
	}
	\newglossaryentry{vehicle_dim}{type=symbols,
		sort={control},
		name={\ensuremath{l_x, l_y}},
		description={Vehicle dimensions}
	}
	\newglossaryentry{lanewidth}{type=symbols,
		sort={control},
		name={\ensuremath{l_{\text{lane}}}},
		description={Lane width}
	}
	\newglossaryentry{lane_nr}{type=symbols,
		sort={control},
		name={\ensuremath{n_{\text{lane}}}},
		description={Number of lanes}
	}
	\newglossaryentry{rectangle_dim}{type=symbols,
		sort={control},
		name={\ensuremath{a,b}},
		description={Minimum safety rectangle dimensions}
	}
	\newglossaryentry{rectangle_length}{type=symbols,
		sort={control},
		name={\ensuremath{\tilde{a}}},
		description={Approximated length of the safety rectangle}
	}
	\newglossaryentry{delta_rectangle_dim}{type=symbols,
		sort={control},
		name={\ensuremath{\triangle a}},
		description={Velocity dependent longitudinal safety distance}
	}
	\newglossaryentry{delta_v_squared}{type=symbols,
		sort={control},
		name={\ensuremath{\triangle v^2}},
		description={Difference between the quadratic velocities of the EV and TV}
	}
    \newglossaryentry{approximation_c}{type=symbols,
		sort={control},
		name={\ensuremath{c}},
		description={Approximation parameter for the velocity dependent safety distance}
	}
	\newglossaryentry{steepness_parameter}{type=symbols,
		sort={control},
		name={\ensuremath{c_{\text{sd}}}},
		description={Parameter defining the steepness of the linear constraints}
	}
	\newglossaryentry{constraint_identification}{type=symbols,
		sort={control},
		name={\ensuremath{c_{\text{sc}}}},
		description={Parameter for the choice between regular and special case constraints}
	}
	\newglossaryentry{constraint}{type=symbols,
		sort={control},
		name={\ensuremath{d_k}},
		description={State constraint at time step $k$}
	}
    \newglossaryentry{pred_horizon}{type=symbols,
		sort={control},
		name={\ensuremath{N}},
		description={Length of the prediction horizon}
	}
    \newglossaryentry{prediction_error}{type=symbols,
		sort={control},
		name={\ensuremath{\vec{e}}},
		description={State prediction error}
	}
	\newglossaryentry{error_covariance}{type=symbols,
		sort={control},
		name={\ensuremath{\set{\Sigma}^e}},
		description={State prediction error covariance matrix}
	}
	\newglossaryentry{system_dynamics}{type=symbols,
		sort={control},
		name={\ensuremath{\Mat{A},\Mat{B},\Mat{K}}},
		description={Linear vehicle system dynamics}
	}
    \newglossaryentry{tolerance_level}{type=symbols,
		sort={control},
		name={\ensuremath{\kappa}},
		description={Tolerance level, defines the contour line of the bivariate Gaussian distribution}
	}
	\newglossaryentry{probability parameter}{type=symbols,
		sort={control},
		name={\ensuremath{\epsilon}},
		description={Probability parameter that defines the level of confidence with which a TV lies within the elliptic region around its predicted state}
	}
	\newglossaryentry{trajectory}{type=symbols,
		sort={control},
		name={\ensuremath{\chi_k^U}},
		description={Trajectory starting at $\gvec{\xi}_k$ and applying $U$}
	}
	\newglossaryentry{trajectory_solution}{type=symbols,
		sort={control},
		name={\ensuremath{\chi(k,\gvec{\xi}_0,U)}},
		description={State of a trajectory $\chi_0^U$ at time $k$}
	}
	\newglossaryentry{feasible_trajectories}{type=symbols,
		sort={control},
		name={\ensuremath{\Gamma_k}},
		description={Set of feasible trajectories $\chi_k^U$ resulting in $\set{\Xi}_f$}
	}
	\newglossaryentry{reachable_set}{type=symbols,
		sort={control},
		name={\ensuremath{\mathcal{R}_k^{\text{TV}}}},
		description={Set of states that can be reached by the TV at time $k$}
	}
	\newglossaryentry{set_of_initial_states}{type=symbols,
		sort={control},
		name={\ensuremath{\set{\Xi}^{\text{TV}}_0}},
		description={Set of possible initial TV states}
	}
	\newglossaryentry{safety_sets}{type=symbols,
		sort={control},
		name={\ensuremath{\set{\Xi}^{\text{occ}/\text{ex}/\text{safe}}}},
		description={Occupancy, excluded and safe set of states}
	}
	\newglossaryentry{inv_set}{type=symbols,
		sort={control},
		name={\ensuremath{\set{\Xi}^{\text{inv}},\set{\Xi}_{\text{f}}}},
		description={Safe control invariant sets with $\set{\Xi}_{\text{f}} \in \set{\Xi}^{\text{safe}}$}
	}
    \newglossaryentry{safety_rectangle}{type=symbols,
		sort={control},
		name={\ensuremath{\set{\Xi}^{\text{Rect}}}},
		description={Set of positions describing the SMPC safety rectangle}
	}
    \newglossaryentry{switching_velo}{type=symbols,
		sort={control},
		name={\ensuremath{v_{\text{s}}}},
		description={Switching velocity, defines the maximum engine power}
	}
	\newglossaryentry{max_velo_bounds}{type=symbols,
		sort={control},
		name={\ensuremath{v^{\text{up}}_x}},
		description={Maximum longitudinal velocity bound}
	}
	\newglossaryentry{velo_bounds}{type=symbols,
		sort={control},
		name={\ensuremath{v^{\text{low/up},A_i}_{x,N}}},
		description={Lower and upper velocity bounds at time $N$ of region $A_i$}
	}
	\newglossaryentry{sum_velo_bounds}{type=symbols,
		sort={control},
		name={\ensuremath{v^-, v^+}},
		description={Summarized velocity bounds for all regions $A_i$}
	}
    \newglossaryentry{velo_ex}{type=symbols,
		sort={control},
		name={\ensuremath{v^{\text{ex}}_{x,\text{min/max},N}}},
		description={Minimally and maximally possible velocities of the TV at time $N$}
	}
	\newglossaryentry{velo_regions}{type=symbols,
		sort={control},
		name={\ensuremath{A_i}},
		description={Region for which the velocity bounds $v^{\text{low/up},A_i}_{x,N}$ hold}
	}
	\newglossaryentry{transition}{type=symbols,
		sort={control},
		name={\ensuremath{\delta}},
		description={Transition width between inner and outer velocity bounds}
	}
	\newglossaryentry{cost}{type=symbols,
		sort={control},
		name={\ensuremath{J}},
		description={Average cost per step}
	}
	\newglossaryentry{quad_input_rate}{type=symbols,
		sort={control},
		name={\ensuremath{\mu_{\triangle ^2}, \sigma^2_{\triangle ^2}}},
		description={Mean and variance of the quadratic longitudinal input rate $\triangle u_x^2$}
	}
\newglossaryentry{vector}{type=notation,
	sort={vector},
	name={\ensuremath{\bm{z}_k}},
	description={Vector at timestep $i$}
}
\newglossaryentry{seqi}{type=notation,
	sort={seqi},
	name={\ensuremath{(\cdot)_{(1:k)}}},
	description={Combined variable up to $i$-th step if $(\cdot)^{(i)}$ exists; e.g.: $\bm{z}^{(1:i)}=\begin{bmatrix}\bm{z}^{(1)} & \cdots & \bm{z}^{(i)}\end{bmatrix}^\T$}
}	
\newglossaryentry{seqN}{type=notation,
	sort={seqN},
	name={\ensuremath{(\cdot)}},
	description={Combined variable for \gls{N}/\gls{Nmpc} steps if $(\cdot)^{(i)}$ exists; e.g.: $\bm{z}=\begin{bmatrix}\bm{z}^{(1)} & \cdots & \bm{z}^{(N)}\end{bmatrix}^\T$}
}	
\newglossaryentry{seqNi}{type=notation,
	sort={seqNi},
	name={\ensuremath{(\cdot)_k}},
	description={Combined variable for \gls{Nmpc} steps if $(\cdot)^{(i)}$ exists; e.g.: $\bm{z}_i=\begin{bmatrix}\bm{z}_i^{(i+1)} & \cdots & \bm{z}_i^{(i+N)}\end{bmatrix}^\T$}
}	
\newglossaryentry{matrix}{type=notation,
	sort={matrix},
	name={\ensuremath{Z}},
	description={Matrix}
}		
\newglossaryentry{rv}{type=notation,
	sort={rv},
	name={\ensuremath{\bm{\rv{Z}}_k}},
	description={Random vector at timestep $i$}
}	
\newglossaryentry{set}{type=notation,
	sort={set},
	name={\ensuremath{\set{Z}_k}},
	description={Set of admissable \gls{vector}}
}
\newglossaryentry{I}{type=notation,
	sort={I},
	name={\ensuremath{I_{k\times k}}},
	description={$k$-by-$k$ identity matrix}
}
\begin{document}

\title{Stochastic Model Predictive Control with a Safety Guarantee for Automated Driving: Extended Version}

\author{Tim~Br\"udigam,~
        Michael~Olbrich,~
        Dirk~Wollherr,~
        Marion~Leibold
\thanks{
The authors gratefully acknowledge the financial and scientific support by the BMW Group.}
\thanks{T. Br\"udigam, D. Wollherr, and M. Leibold are with the Chair of Automatic Control Engineering at the Technical University of Munich, Munich, Germany (email: \{tim.bruedigam;~dw;~marion.leibold\}@tum.de).

M. Olbrich is with the Department of Computer Science at the University of Augsburg, Augsburg, Germany (email: michael.olbrich@informatik.uni-augsburg.de).}
}

\maketitle

\begin{abstract}
Automated vehicles require efficient and safe planning to maneuver in uncertain environments. Largely this uncertainty is caused by other traffic participants, e.g., surrounding vehicles. Future motion of surrounding vehicles is often difficult to predict. Whereas robust control approaches achieve safe, yet conservative motion planning for automated vehicles, Stochastic Model Predictive Control (SMPC) provides efficient planning in the presence of uncertainty. Probabilistic constraints are applied to ensure that the maximal risk remains below a predefined level. However, safety cannot be ensured as probabilistic constraints may be violated, which is not acceptable for automated vehicles. Here, we propose an efficient trajectory planning framework with safety guarantees for automated vehicles. SMPC is applied to obtain efficient vehicle trajectories for a finite horizon. Based on the first optimized SMPC input, a guaranteed safe backup trajectory is planned using reachable sets. This backup is used to overwrite the SMPC input if necessary for safety. Recursive feasibility of the safe SMPC algorithm is proved. Highway simulations show the effectiveness of the proposed method regarding performance and safety.
\end{abstract}

\begin{IEEEkeywords}
model predictive control, stochastic model predictive control, failsafe trajectory planning, automated vehicles
\end{IEEEkeywords}

\IEEEpeerreviewmaketitle

\section{Introduction}
\label{sec:introduction}

\vspace{-14.8cm}
\mbox{\small This~work~was~accepted~for~publication~in~the~IEEE~Transactions~on~Intelligent~Vehicles.}
\vspace{14.3cm}

\vspace{-14.7cm}
\mbox{\small The~accepted~version~is~available~at~https://doi.org/10.1109/TIV.2021.3074645.}
\vspace{14.3cm}

Within the past decades, research has made significant progress in the area of self-driving cars. Improvements in computer vision-based sensing and the use of this sensor data in control algorithms enable automated vehicles to detect and react to hazards in dynamic traffic and a constantly changing environment. A majority of road accidents is still caused by human errors, therefore, increasing the level of vehicle autonomy has great potential to reduce the overall number of accidents. Automated vehicles are especially relevant in critical situations where a significant number of human drivers is incapable of performing necessary maneuvers in time \cite{Keller20142}. 

The safety of fully automated vehicles depends on the ability of the vehicle control algorithm to handle uncertainty of other traffic participants and the environment. While there are various control methods to plan vehicle trajectories, \gls{MPC} has proved to be a suitable approach for automated vehicle trajectory planning \cite{LevinsonEtalThrun2011, KatrakazasEtalDeka2015}. \gls{MPC} iteratively solves an optimal control problem on a finite prediction horizon, based on minimizing a cost function, while considering constraints on the vehicle input and state. Uncertainties in the prediction model are addressed by \gls{RMPC} \cite{Mayne2014}.

\gls{RMPC} approaches were designed for trajectory planning in automated vehicles \cite{SolopertoEtalMueller2019,DixitEtalFallah2020}, however, robustly accounting for uncertainty yields conservative vehicle behavior. Conservatism resulting from robustly handling uncertainty in MPC is reduced by \gls{SMPC} \cite{Mesbah2016, FarinaGiulioniScattolini2016}, where robust constraints are reformulated into probabilistic constraints. This probabilistic reformulation enables optimistic trajectory planning in a majority of scenarios, but it also allows a small probability of constraint violation, i.e, a probability of collision for vehicles. Various types of \gls{SMPC} have been studied in the area of automated vehicles, showing the advantage of planning optimistic trajectories in a majority of scenarios \cite{CarvalhoEtalBorrelli2014, CesariEtalBorrelli2017}.

In comparison to \gls{SMPC}, trajectory planning based on reachability analysis provides formal safety guarantees \cite{Magdici2016, Althoff2018}. Here, worst-case predictions are obtained for other surrounding vehicles in order to plan fail-safe vehicle trajectories, referred to as \gls{FT}, which is closely related to \gls{RMPC}.

This paper is the extended version of \cite{BruedigamEtalLeibold2021a}. In this work, we tackle the challenge of planning efficient and safe trajectories for automated vehicles. We present a novel \gls{MPC} trajectory planner which combines the advantages of \gls{SMPC} and fail-safe trajectory planning for environments with uncertainty. A trajectory is planned with \gls{SMPC}, providing optimistic and efficient planning. In a regular setting, the first optimized \gls{SMPC} input is then applied to the vehicle and a new \gls{SMPC} optimal control problem is solved at the next time step with a shifted horizon. In addition to \gls{SMPC}, for every time step a fail-safe trajectory is planned, given the first optimized \gls{SMPC} input. The optimistic \gls{SMPC} input is only applied to the vehicle if it is still possible to find a fail-safe backup trajectory after having applied the first \gls{SMPC} input. This ensures that the efficient \gls{SMPC} trajectory is executed as long as a backup exists, therefore guaranteeing safety. The proposed method is referred to as \gls{SMPCFT}.

The contributions of this work are as follows. 
\begin{itemize}
\item Novel \gls{SMPCFT} method providing efficient and safe trajectory planning including lane change decisions.
\item Elaborate case differentiation for safety constraints.
\item Proof of recursive feasibility of the \gls{SMPCFT} method.
\item Simulation study with complex highway traffic situations.
\end{itemize}
We present the proposed \gls{SMPCFT} approach in detail, ensuring safety for the vehicle while exploiting the benefits of efficient \gls{SMPC} trajectory planning. The design of the \gls{SMPCFT} method guarantees recursive feasibility, i.e., if a solution exists at a time step, it is guaranteed that a solution also exists at the next time step. This property is necessary to ensure safe trajectory planning with \gls{MPC}. This extended version present a detailed case differentiation to generate safety constraints with respect to other surrounding vehicles, both for the \gls{SMPC} and the \gls{FT} optimal control problem. A simulation study of two complex scenarios demonstrates the benefits of optimistic trajectory planning in a regular highway scenario, while the ability of \gls{SMPCFT} to guarantee safety is shown in an emergency scenario.

The paper is structured as follows. In Section~\ref{sec:related_work} related work is presented. Then, the basics of \gls{SMPC} and \gls{FT} are introduced in Section~\ref{sec:preliminaries}, while the relevant vehicle models are presented in Section~\ref{sec:vehiclemodels}. The proposed \gls{SMPCFT} approach is derived in Section~\ref{sec:method}, and details for the respective \gls{SMPC} and \gls{FT} optimal control problems are provided in Section~\ref{sec:details}. The simulation results are presented in Section~\ref{sec:results}. A discussion and conclusive remarks are given in Section~\ref{sec:discussion} and Section~\ref{sec:conclusion}.

\subsection{Related Work}
\label{sec:related_work}

Trajectory planning for automated vehicles is a widely studied research area. There are various methods in non-\gls{MPC} related fields, such as using partially observable Markov decision processes (POMDP) \cite{HubmannEtalStiller2018} or reinforcement learning \cite{MirchevskaEtalBoedecker2018}. Learning based methods are also popular for autonomous racing \cite{RosoliaCarvalhoBorrelli2017, KabzanEtalZeilinger2019, WischnewskiBetzLohmann2019, StahlEtalLienkamp2019}. When considering automated road vehicles, planning trajectories with \gls{MPC} has the advantage of iteratively replanning the vehicle trajectory with constraints, accounting for a changing environment. 

Standard \gls{MPC} has been developed for cooperative adaptive cruise control, focusing on cooperative driving \cite{MasseraTerraWolf2017, KazemiEtalFallah2018}. \gls{MPC} is also designed specifically to plan trajectories for a single autonomous vehicle \cite{GutjahrGroellWerling2017} or for combined maneuver and trajectory planning \cite{YiEtalStiller2019}. \gls{MPC} was also combined with potential-field methods in order to avoid static and dynamic obstacles \cite{RasekhipourEtalLitkouhi2017}.

The main focus of this work is trajectory planning with robust or stochastic \gls{MPC}, as well as \acrfull{FT}. Fail-safe trajectory planning is defined as planning collision-free vehicle trajectories, accounting for any legal future motion of surrounding vehicles \cite{PekAlthoff2018}. For bounded uncertainties in real-world applications, \gls{FT} is applied based on finding worst-case sets. Combined with reachability analysis, formal safety guarantees are given \cite{Althoff2018}. The computation of these reachable sets is connected to control invariant sets in \gls{RMPC} as stated in \cite{Gullila2013}. An approach to include reachability analysis into \gls{MPC} is given in \cite{SchuermannKochdumperAlthoff2018}. 

In \cite{Althoff2013} a method is proposed to compute the set of all future locations possibly occupied by traffic participants. The remaining safe space is admissible to plan emergency trajectories. This \gls{FT} is presented in \cite{Magdici2016}. First, given the most likely motion of surrounding vehicles, an optimal trajectory is determined. Then, an emergency trajectory is connected to the last point of the optimal trajectory for which collision avoidance is still guaranteed. The safe space is determined by an over-approximated set of any possible future vehicle motion. The fail-safe trajectory is generated in such a way that the controlled vehicle comes to a standstill. In \cite{PekAlthoff2018} an \gls{FT} method is introduced which generates fail-safe trajectories in real-time. The method is tested in various simulations based on the CommonRoad benchmark framework \cite{AlthoffKoschiManzinger2017}. A motion planning framework is introduced in \cite{ManzingerPekAlthoff2020} which combines reachability analysis with optimization-based trajectory planning. In \cite{DixitEtalFallah2020} an \gls{RMPC} method is suggested which uses a combination of a potential field like function and reachability sets to obtain safe zones on the road. A further \gls{RMPC} method for collision avoidance with moving obstacles is presented in~\cite{SolopertoEtalMueller2019}.

\acrfull{SMPC} has been intensively studied in the context of automated vehicles. These works focus on the trade-off between risk and conservatism, defined by probabilistic constraints, so called chance constraints \cite{SchwarmNikolaou1999}. On the one hand, taking into account unlikely uncertainty realizations drastically reduces efficacy, on the other hand, planning too optimistically increases risk. A major challenge in \gls{SMPC} is reformulating the probabilistic chance constraint into a tractable constraint, which can be handled by a solver.

An \gls{SMPC} particle approach is shown in \cite{BlackmoreEtalWilliams2010} with a simple vehicle braking scenario, where particles approximate the uncertainty. An \gls{SMPC} trajectory planner for automated vehicles in the presence of fixed obstacles is presented in \cite{LenzEtalKnoll2015}. In \cite{CarvalhoEtalBorrelli2014} the environment is modeled by an Interacting Multiple Model Kalman filter. Given the most likely prediction for surrounding vehicles, a vehicle trajectory is then planned with \gls{SMPC} assuming Gaussian uncertainty. Varying risk parameters, denoting the level of accepted risk, are studied, illustrating the trade-off between risk and conservatism. In \cite{SuhChaeYi2018} an \gls{SMPC} lane change controller is presented, where the lane change risk is considered using predicted time-to-collision.

A different \gls{SMPC} approach is utilized in \cite{SchildbachBorrelli2015, CesariEtalBorrelli2017}, focusing on \gls{SCMPC} based on \cite{CampiGaratti2011, SchildbachEtalMorari2014}. In \gls{SCMPC} samples of the uncertainty are drawn, which must then satisfy the constraints to find a tractable chance constraint expression. Arbitrary probability distributions are handled by \gls{SCMPC}, while standard \gls{SMPC} usually requires Gaussian distributions to analytically reformulate the chance constraint.  While \cite{SchildbachBorrelli2015} focuses on simple lane change scenarios, the work is extended in \cite{CesariEtalBorrelli2017} and experimental results are presented. 

A combination of \gls{SMPC} and \gls{SCMPC} is given in \cite{BruedigamEtalWollherr2018b}, exploiting the individual advantages of \gls{SMPC} and \gls{SCMPC}. \gls{SCMPC} is used to cope with maneuver uncertainty of surrounding vehicles, while \gls{SMPC} addresses the maneuver execution uncertainty, described by a Gaussian probability distribution. A further approach to \gls{SMPC} is presented in \cite{BruedigamEtalLeibold2020c}, where a grid-based \gls{SMPC} method is applied to plan vehicle trajectories, based on occupancy grids \cite{SteyerTanzmeisterWollherr2018, SteyerEtalWollherr2019}.

In summary, \gls{SMPC} approaches provide efficient vehicle trajectories for the majority of uncertainty realizations in regular situations. However, for unlikely uncertainty realizations, safety issues occur.

In this work, the benefit of efficiently planning trajectories with \gls{SMPC} is combined with the safety guarantee of \gls{FT}. The \gls{FT} in this work is inspired by the ideas of \cite{Althoff2013,Magdici2016,PekAlthoff2018}. In the following, \gls{SMPC} and \gls{FT} are introduced. Then the proposed \gls{SMPCFT} method is derived in detail.

\subsection{Notation}
\label{sec:notation}

Regular letters denote scalars, bold lowercase letters indicate vectors, and bold uppercase letters are used for matrices, e.g., $a$, $\bm{a}$, $\bm{A}$, respectively. 
The Euclidean norm is denoted by~$\norm{.}_2$ and $\norm{\bm{z}}_{\bm{Z}} = \bm{z}^\top \bm{Z} \bm{z}$. The probability of an event is given by $\text{Pr}(.)$. 
The state at time step $h$ is denoted by $\bm{x}_h$. For \gls{MPC} optimal control problems the prediction step $k$ is used, i.e., $\bm{x}_k$. If clear from context, the time step $h$ is not specifically mentioned in the context of optimal control problems due to clarity. Estimated states are indicated by $\hat{\bm{x}}$.

\section{Preliminaries}
\label{sec:preliminaries}

In the following we briefly introduce the general \gls{MPC} optimal control problems for \gls{SMPC} and \gls{FT} in safety-critical applications. Details on \gls{SMPC} and \gls{FT}, which are relevant for application in automated driving, are then given in Section~\ref{sec:detailsSMPC} and Section~\ref{sec:detailsFST}, respectively.

\gls{MPC} iteratively solves an \gls{ocp} with a finite prediction horizon $N$ subject to input and state constraints. After solving the \gls{MPC} optimal control problem, only the first input $\bm{u}_0$ of the optimized input sequence $\bm{U} = \left(\bm{u}_0,\dots,\bm{u}_{N-1}\right)^\top$ is applied. At the next time step the updated \gls{MPC} optimal control problem is solved again. We distinguish between regular time steps $h$ and prediction steps $k$ within the \gls{MPC} \gls{ocp}. If clear from context, the time step $h$ is not specifically mentioned. In the following we only explicitly denote the prediction time step $k$. The current time step $h$, at which the \gls{MPC} optimal control problem is computed, is omitted to improve clarity.

\subsection{SMPC with Chance Constraints}
While standard \gls{MPC} considers hard constraints, this is problematic if uncertainties are present. Avoiding the worst-case realizations for uncertainties results in an overly conservative solution of the optimal control problem, which is unfavorable for automated driving. Hard constraints subject to uncertainty can be considered by chance constraints. This yields the \gls{SMPC} optimal control problem
\begin{IEEEeqnarray}{rll}
\IEEEyesnumber \label{eq:smpcOCP}
	V^* &= \min_{\bm{U}} \sum_{k=0}^{N-1} l(\bm{\xi}_k,\bm{u}_k) +V_{\text{f}}&\lr{\bm{\xi}_N} \IEEEyessubnumber \\
	\text{s.t. } & \bm{\xi}_{k+1} = \bm{f}\lr{\bm{\xi}_k,\bm{u}_k}& \IEEEyessubnumber \IEEEeqnarraynumspace \label{eq:smpcOCPdynamics}\\
	& \bm{\xi}_k \in \bm{\Xi}_k &~\forall k \in \{1,\dots,N\} , \IEEEyessubnumber\\
	& \bm{u}_k \in \mathcal{U}_k &~\forall k \in \{0,\dots,N-1\} ,\IEEEyessubnumber \IEEEeqnarraynumspace\\
	& \text{Pr}\left(\bm{\xi}_k \in \bm{\Xi}'_{k,\text{safe}}(\bm{w}) \right) \geq \beta &~\forall k \in \{1,\dots,N\} \IEEEyessubnumber \label{eq:smpcOCPcc}
\end{IEEEeqnarray}
with prediction step $k$, states $\bm{\xi}_k$, input sequence $\bm{U}= [\bm{u}_{0}, \bm{u}_{1}, ..., \bm{u}_{N-1}]^\top$, system dynamics $\bm{f}$, and the normally distributed, zero mean uncertainty $\bm{w} \sim \mathcal{N}\left(\bm{0}, \bm{\Sigma}^{w}\right)$ with covariance matrix $\bm{\Sigma}^{w}$. The cost function consists of the stage cost $l(\bm{\xi}_k,\bm{u}_k)$ and the terminal cost $V_{\text{f}}\lr{\bm{\xi}_N}$. States and inputs are bounded by the state and input constraint sets $\bm{\Xi}_k$ and $\mathcal{U}_k$, respectively, and the safety constraint $\bm{\Xi}'_{k,\text{safe}}(\bm{w})$ depends on the uncertainty $\bm{w}$. The probabilistic chance constraint is given by \eqref{eq:smpcOCPcc}. The safety constraint $\bm{\xi}_k \in \bm{\Xi}'_{k,\text{safe}}(\bm{w})$ is required to hold according to the risk parameter $\beta$, i.e., the probability of $\bm{\xi}_k \in \bm{\Xi}'_{k,\text{safe}}(\bm{w})$ must be larger for a larger risk parameter $\beta$. For $\beta < 1$ a certain constraint violation probability is therefore allowed.

The chance constraint \eqref{eq:smpcOCPcc} cannot be handled by a solver directly, but is required to be reformulated into a deterministic expression. Details on the reformulation are given in Section~\ref{sec:detailsSMPC}. 

In this work, uncertainty is considered regarding surrounding vehicles. The safe set $\bm{\Xi}'_{\text{safe}}(\bm{w})$ therefore depends on how the uncertainty $\bm{w}$ affects the surrounding vehicles. This is described in Section~\ref{sec:vehiclemodels}. Note that \gls{SMPC} optimal control problems can consider an expectation value in the cost function, however, this is omitted here as no expectation value will be necessary for the automated driving optimal control problem, i.e., the system dynamics \eqref{eq:smpcOCPdynamics} do not include uncertainty. In the simulation study in Section~\ref{sec:results} the uncertainty in the safe set $\bm{\Xi}'_{k,\text{safe}}(\bm{w})$ is governed by uncertainty in the behavior of surrounding vehicles.

\subsection{Fail-safe Trajectory Planning}
We also consider an \gls{MPC} optimal control problem for \gls{FT}, i.e., a fail-safe \gls{MPC} \gls{ocp}. In contrast to \gls{SMPC}, \gls{FT} considers the worst-case realizations of the uncertainty, resulting in safe, yet conservative optimized inputs. The \gls{FT} optimal control problem is given by
\begin{IEEEeqnarray}{rll}
\IEEEyesnumber \label{eq:ftOCP}
	V^* &= \min_{\bm{U}} \sum_{k=0}^{N-1} l(\bm{\xi}_k,\bm{u}_k) +V_{\text{f}}&\lr{\bm{\xi}_N} \IEEEyessubnumber \\
	\text{s.t. } & \bm{\xi}_{k+1} = \bm{f}\lr{\bm{\xi}_k,\bm{u}_k}& \IEEEyessubnumber \IEEEeqnarraynumspace \\
	& \bm{\xi}_k \in \bm{\Xi}_k &~\forall k \in \{1,\dots,N\} , \IEEEyessubnumber\\
	& \bm{u}_k \in \mathcal{U}_k &~\forall k \in \{0,\dots,N-1\} ,\IEEEyessubnumber \IEEEeqnarraynumspace\\
	& \bm{\xi}_k \in \bm{\Xi}_{k,\text{safe}}(\bm{w}) &~\forall k \in \{1,\dots,N-1\} \IEEEyessubnumber \label{eq:ftOCPcc}\\
	& \bm{\xi}_N \in \bm{\Xi}_{N,\text{safe}}(\bm{w}) & \IEEEyessubnumber \label{eq:ftOCPterm}
\end{IEEEeqnarray}
which is similar to the \gls{SMPC} problem \eqref{eq:smpcOCP}. However, the safety constraint \eqref{eq:ftOCPcc} is not a chance constraint, as in \eqref{eq:smpcOCPcc}, but a hard constraint. In contrast to \gls{SMPC}, for \gls{FT} the safe set $\bm{\Xi}_{k,\text{safe}}(\bm{w})$ is constructed based on reachability analysis to ensure formal safety guarantees. This is strongly connected to the computation of invariant sets in \gls{RMPC}. In addition to constraint \eqref{eq:ftOCPcc}, a terminal constraint \eqref{eq:ftOCPterm} is required, which ensures that the terminal prediction state $\bm{\xi}_N$ allows to remain in a safe state beyond the prediction horizon. Based on this safe terminal set $\bm{\Xi}_{N,\text{safe}}(\bm{w})$ it is guaranteed that there exist system inputs $\bm{u}_{k^+}$ with $k^+>N$ resulting in safe states $\bm{\xi}_{k^+}$. Note that the safe set $\bm{\Xi}'_{k,\text{safe}}(\bm{w})$ in \eqref{eq:smpcOCPcc} is not necessarily computed in the same way as the safe set $\bm{\Xi}_{k,\text{safe}}(\bm{w})$ in \eqref{eq:ftOCPcc},\eqref{eq:ftOCPterm}. Details on the \gls{FT} optimal control problem are provided in Section~\ref{sec:detailsFST}.

\section{Vehicle Models}
\label{sec:vehiclemodels}

\gls{MPC} requires a system model for the controlled vehicle, known as the \gls{ev}, and surrounding vehicles, referred to as \glspl{tv}, in order to predict future states within the optimal control problem. 

\subsection{Ego Vehicle Model}
\label{sec:egovehicle}
We use a kinematic bicycle model to predict the \gls{ev} states on a finite horizon, as suggested in~\cite{KongEtalBorrelli2015}. 
The continuous-time system is given by
\begin{IEEEeqnarray}{rl}
\IEEEyesnumber \label{eq:bicyclemodel}
\dot{s} &= v \cos(\phi + \alpha), \IEEEyessubnumber \label{eq:bm_s}\\
\dot{d} &= v \sin(\phi + \alpha), \IEEEyessubnumber \label{eq:bm_d} \\
\dot{\phi} &= \frac{v}{l_\text{r}}\sin{\alpha}, \IEEEyessubnumber \label{eq:bm_phi} \\
\dot{v} &= a, \IEEEyessubnumber \label{eq:bm_a} \\
\alpha &= \arctan \lr{\frac{l_\text{r}}{l_\text{r}+l_\text{f}} \tan \delta}, \IEEEyessubnumber \label{eq:bm_alpha}
\end{IEEEeqnarray}
where $l_\text{r}$ and $l_\text{f}$ are the distances from the vehicle center of gravity to the rear and front axles, respectively. The state vector is $\bm{\xi} = [s,d,\phi,v]^\top$ and the input vector is $\bm{u} = [a,\delta]^\top$. The vehicle velocity is given by $v$, acceleration and steering angle are denoted by $a$ and $\delta$, respectively. We consider the longitudinal position $s$ of the vehicle along the road, the lateral vehicle deviation $d$ from the centerline of the right lane, and the orientation $\phi$ of the vehicle with respect to the road. The nonlinear vehicle model \eqref{eq:bicyclemodel} is summarized as $\dot{\bm{\xi}} = \bm{f}^\text{c}\lr{\bm{\xi},\bm{u}}$.

Each \gls{MPC} optimal control problem is initialized with a linearization of the nonlinear prediction model \eqref{eq:bicyclemodel} around the current vehicle state $\bm{\xi}^* = \bm{\xi}_0$ and the input $\bm{u}^* = [0,0]^\top$. 
Selecting a non-zero reference input $\bm{u}^*$ often results in large differences $\Delta \bm{u} = \bm{u}_k - \bm{u}^*$ for prediction steps far ahead, increasing the inaccuracy of the linearization. The linearized continuous-time vehicle model is then given by
\begin{IEEEeqnarray}{c}
\dot{\bm{\xi}}^* + \Delta \dot{\bm{\xi}} = \bm{f}^\text{c} \lr{\bm{\xi}^*, \bm{0}} + \bm{A}_\text{l} \lr{\bm{\xi} - \bm{\xi}^*}  + \bm{B}_\text{l} \bm{u}  \label{eq:lin_evsys} \IEEEeqnarraynumspace
\end{IEEEeqnarray}
with the Jacobi matrices
\begin{IEEEeqnarray}{c}
\bm{A}_\text{l} = \left. \left[ \frac{\partial \bm{f}^\text{c}}{\partial \bm{\xi}} \right] \right|_{\lr{\bm{\xi}^*, \bm{u}^*}}, ~~~~ \bm{B}_\text{l} = \left. \left[ \frac{\partial \bm{f}^\text{c}}{\partial \bm{u}} \right] \right|_{\lr{\bm{\xi}^*, \bm{u}^*}}. \IEEEeqnarraynumspace
\end{IEEEeqnarray}

A discrete-time model is required for \gls{MPC}, therefore the linearized prediction model \eqref{eq:lin_evsys} is discretized with sampling time \gls{dt}. This yields the discrete states $\gls{xik}~=~[\gls{sk},\gls{dk},\gls{phik},\gls{vk}]^\top$ and inputs $\gls{uk} = [\gls{ak},\gls{deltak}]^\top$ for prediction step $k$, as well as the linearized, discretized system
\begin{IEEEeqnarray}{rl}
\IEEEyesnumber \label{eq:d_evsys}
\bm{\xi}_{k+1} &= \bm{\xi}_0 + \gls{dt} \bm{f}^\text{c} \lr{\bm{\xi}_0, \bm{0}} + \bm{A}_\text{d} \lr{\bm{\xi}_k - \bm{\xi}_0}  + \bm{B}_\text{d} \bm{u}_k \IEEEyessubnumber \IEEEeqnarraynumspace \\ 
&= \bm{f}^\text{d}\lr{\bm{\xi}_0, \bm{\xi}_k, \bm{u}_k}  \IEEEyessubnumber 
\end{IEEEeqnarray}
where $\bm{A}_\text{d}$ and $\bm{B}_\text{d}$ are matrices of the linearized system obtained from $\bm{A}_\text{l}$, $\bm{B}_\text{l}$ with zero-order hold. The nonlinear term $\bm{f}^\text{c} \lr{\bm{\xi}^*, \bm{u}^*}$ in \eqref{eq:lin_evsys} is approximated by a forward Euler method since $\bm{\xi}_0$ is known. The linearized, discretized matrices $\bm{A}_\text{d}$ and $\bm{B}_\text{d}$ are given in Appendix \ref{sec:appendixd}. In the following, for $k=0$ in \eqref{eq:d_evsys}, i.e., $\bm{\xi}_k = \bm{\xi}_0$, the argument $\bm{\xi}_0$ is only mentioned once, i.e., $\bm{f}^\text{d}\lr{\bm{\xi}_0, \bm{\xi}_0, \bm{u}_0}$ is abbreviated as $\bm{f}^\text{d}\lr{\bm{\xi}_0, \bm{u}_0}$.

The following sections derive an \gls{SMPC} method and constraints to avoid collisions with surrounding vehicles. However, even if no other vehicles are present, certain constraints are required. Acceleration and steering angle are bounded by
\begin{IEEEeqnarray}{rrl}
\IEEEyesnumber \label{eq:constraints_input}
\bm{u}_\text{min} \leq & \gls{uk} &\leq \bm{u}_\text{max} \IEEEyessubnumber \label{eq:con_u}\\
\Delta\bm{u}_\text{min} \leq & \Delta\gls{uk} &\leq \Delta\bm{u}_\text{max} \IEEEyessubnumber \label{eq:con_du}
\end{IEEEeqnarray}
with $\Delta \bm{u}_{k+1} = \bm{u}_{k+1} - \gls{uk}$ and $\bm{u}_\text{max} = [a_\text{max}, \delta_\text{max}]^\top$, $\bm{u}_\text{min} = [a_\text{min}, \delta_\text{min}]^\top$. Further, road and velocity constraints are considered, resulting in
\begin{IEEEeqnarray}{rl}
\IEEEyesnumber \label{eq:constraints_state}
&\gls{dk} \in \gls{Xilane} \IEEEyessubnumber \label{eq:constraints_lane} \\
&0 \leq v_k \leq v_\text{max} \IEEEyessubnumber \label{eq:constraints_velocity}
\end{IEEEeqnarray}
where \gls{Xilane} represents road boundaries and $v_\text{max}$ is the maximal velocity. Negative velocities are not allowed, i.e., $\gls{vk} \geq 0$.

In the following we refer to input constraints by the set of admissible inputs \gls{UU} and state constraints are denoted by the set of admissible states \gls{XX}.

\subsection{Target Vehicle Model}
\label{sec:targetvehicle}
In order to avoid collisions, the \gls{ev} is also required to predict the future states of surrounding \glspl{tv}. The prediction model for the \glspl{tv} used by the \gls{ev} is a linear, discrete-time point-mass model given by
\begin{IEEEeqnarray}{rl}
\IEEEyesnumber \label{eq:tvmodel}
\bm{\xi}^{\text{TV}}_{k+1} &= \bm{A} \gls{xitv}+\bm{B} \gls{utv} \IEEEyessubnumber \label{eq:tvsys}\\
\gls{utv} &= \gls{tutv} + \gls{wtv}\IEEEyessubnumber \label{eq:tvu}
\end{IEEEeqnarray}
where $\gls{xitv} = [\gls{xtv}, \gls{vxtv}, \gls{ytv}, \gls{vytv}]^\top$ is the \gls{tv} state with longitudinal position and velocity \gls{xtv}, \gls{vxtv} and lateral position and velocity \gls{ytv}, \gls{vytv}. The linear \gls{tv} model allows to propagate the uncertainty, which is necessary for the \gls{MPC} approach in the following sections. The \gls{tv} model used in this work is only one possible option. Other linear \gls{tv} prediction models can be utilized. 

The system and input matrices are
\begin{IEEEeqnarray}{c}
\bm{A} = \begin{bmatrix} 1 & \gls{dt} & 0 & 0  \\
											0 & 1 & 0 & 0 \\
											0 & 0 & 1 & \gls{dt} \\
											0 & 0 & 0 & 1 \end{bmatrix},
	~\bm{B} = \begin{bmatrix} 0.5 \gls{dt}^2 & 0 \\
											\gls{dt} & 0 \\ 0 & 0.5 \gls{dt}^2 \\ 0 & \gls{dt} \end{bmatrix}
  \label{eq:tv_AB}
\end{IEEEeqnarray}
with sampling time \gls{dt}. The \gls{tv} input consists of a feedback controller \gls{tutv} and a perturbation on the input, which is assumed to be an independent, identically distributed disturbance vector \gls{wtv}. This setup assumes that the \gls{tv} is following a given reference while deviations are allowed. The \gls{tv} feedback controller is given by
\begin{IEEEeqnarray}{rl}
\IEEEyesnumber
\gls{tutv} &= \bm{K} \lr{\gls{xitv} - \gls{xitvref}}, \IEEEyessubnumber\label{eq:tv_feedback}\\
\bm{K} &= \begin{bmatrix} 0 & k_{12} & 0 & 0 \\ 0 & 0 & k_{21} & k_{22} \end{bmatrix} \IEEEyessubnumber
\end{IEEEeqnarray}
with the \gls{tv} reference \gls{xitvref}. The feedback matrix $\bm{K}$ is obtained by a linear quadratic control strategy. 
As we assume that the \gls{tv} follows a reference velocity in $x$-direction instead of reference $x$-positions, the \gls{tv} input does not need to directly affect \gls{xtv}. The state \gls{xtv} is only used as a measured quantity. 
If the \gls{tv} input computed by \eqref{eq:tv_feedback} exceeds the limits  $\bm{u}^\text{TV}_\text{max} = [a_\text{max}, a_{y,\text{max}}]^\top$ and $\bm{u}^\text{TV}_\text{min} = [a_\text{min}, a_{y,\text{min}}]^\top$, summarized as $\mathcal{U}^\text{TV}$, the \gls{tv} inputs are bounded to satisfy $\mathcal{U}^\text{TV}$.

We assume that \gls{wtv} is subject to a Gaussian distribution with zero mean and covariance matrix \gls{wcov}, which is denoted by $\gls{wtv} \sim \mathcal{N}\lr{0,\gls{wcov}}$. 
We also consider sensor noise in the measurement of the \gls{tv} state, i.e.,
\begin{IEEEeqnarray}{c}
\gls{hxitv0} = \gls{xitv0} + \gls{wsens}  \label{eq:w_sens}
\end{IEEEeqnarray}
where \gls{hxitv0} is the measured initial state of the \gls{tv} by the \gls{ev}. The sensor noise $\gls{wsens} = [w^\text{sens}_{0,x}, w^\text{sens}_{0,v_x}, w^\text{sens}_{0,y}, w^\text{sens}_{0,v_y}]^\top$ is assumed to be a truncated Gaussian noise with $\gls{wsens} \sim \mathcal{N}\lr{0,\gls{wsenscov}}$ and $\gls{wsens} \in \gls{Wsens}$, where \gls{Wsens} is a compact, convex and bounded set.

\section{Stochastic Model Predictive Control with Safety Guarantee}
\label{sec:method}

\gls{SMPC} and fail-safe trajectory planning both have their individual advantages, i.e., efficient trajectories in an uncertain environment and guaranteed safe motion planning, respectively. However, both methods come with certain disadvantages. \gls{SMPC} allows a predefined probability of constraint violation and thus potentially collisions, while \gls{FT} can result in overly conservative trajectories. In the following we will present a combined \gls{SMPC} and \gls{FT} framework, \gls{SMPCFT}, which exploits advantages of both methods to enable efficient and safe trajectories for autonomous vehicles. This section introduces the general setup of the \gls{SMPCFT} framework and gives a proof for recursive feasibility. The subsequent Sections~\ref{sec:detailsSMPC}~and~\ref{sec:detailsFST} provide details on the individual \gls{SMPC} and \gls{FT} algorithms, respectively, which are required for the combined \gls{SMPCFT} approach.

\subsection{\gls{SMPCFT} Method}

Before presenting the \gls{SMPCFT} method, we need to define requirements for a safe ego vehicle state as well as a safe input sequence $\gls{Usafe} = [\bm{u}_{\text{safe},0}, \bm{u}_{\text{safe},1}, ..., \bm{u}_{\text{safe},m}]^\top$ with $m+1$ individual inputs. Note that $m$ is not directly related to the \gls{MPC} prediction horizon.

\begin{definition}[Safe State]
\label{def:safestate}
The state of an ego vehicle, fully located in one lane, is considered to be safe if there is no lateral vehicle motion, i.e., $\phi = 0$, and if the ego vehicle velocity is lower than the velocity of the target vehicle in front on the same lane (or if the ego vehicle velocity is zero). The set of safe states is indicated by $\bm{\Xi}_\textnormal{safe}$. 
\end{definition}

\begin{definition}[Safe Input Sequence]
\label{def:safeinputsequence}
An input sequence \gls{Usafe} is considered safe if consecutively applying all elements of \gls{Usafe} results in a state trajectory that avoids collisions and eventually leads to zero velocity.
\end{definition}

The definition of safe states and safe input sequences results in assumptions for \glspl{tv}.

\begin{assumption}[Traffic Rules]
\label{ass:trafficrules}
The target vehicles adhere to the traffic rules that are assumed for the prediction of future target vehicle behavior.
\end{assumption}

\begin{assumption}[Vehicle Deceleration]
\label{ass:veh_decelerate}
The maximum absolute value of the ego vehicle deceleration is at least as large as the maximum absolute value of the target vehicle deceleration.
\end{assumption}

Given a safe \gls{ev} state, there exists a safe input sequence \gls{Usafe}, consisting of deceleration and zero steering, which results in an \gls{ev} zero velocity state in the current \gls{ev} lane, i.e., zero velocity in $x$-direction and $y$-direction. This is based on the assumptions that the other \glspl{tv} adhere to traffic rules and that the \gls{tv} deceleration is not larger than the \gls{ev} deceleration. \Glspl{tv} behaving against traffic rules cannot be reliably accounted for by any prediction and the deceleration assumption is necessary to avoid colliding with a braking \gls{tv} in front.

In the following we focus on the \gls{SMPCFT} optimal control problems which are solved at each time step $h$. Within the \gls{ocp} only the prediction steps $k$ are relevant. For clarity we therefore omit explicitly denoting the current time step $h$ in the following. Within the \gls{ocp} the current \gls{ev} state at time step $h$ and prediction step $k=0$ is denoted by \gls{xi0}.

At the initialization of each \gls{ocp}, the current \gls{ev} state \gls{xi0} and the current \gls{tv} state \gls{xitv0} are known to the \gls{ev}. Additionally, 
a safe input sequence \gls{Usafe} is available from the \gls{SMPCFT} problem solved at the previous time step. Later we will focus on obtaining a safe input sequence for the \gls{SMPCFT} iteration at the next time step, given the safe input sequence of the current time step.

Now, the \gls{SMPCFT} method is derived, consisting of two parts, \gls{SMPC} and \gls{FT}, i.e., at every time step an \gls{SMPC} \gls{ocp} and an \gls{FT} \gls{ocp} are solved. The general idea is that the first input $\bm{u}_{\text{SMPC},0}$ of the \gls{SMPC} input sequence $\gls{Usmpc}= [\bm{u}_{\text{SMPC},0}, ..., \bm{u}_{\text{SMPC},N-1}]^\top$ must only be applied if, based on the first \gls{SMPC} input $\bm{u}_{\text{SMPC},0}$, a fail-safe trajectory can be found as described below. Compared to regular \gls{SMPC} methods, this approach guarantees that applying the optimistic \gls{SMPC} input $\bm{u}_{\text{SMPC},0}$ does not lead to unsafe behavior. Therefore, at each time step one \gls{SMPC} \gls{ocp} and one \gls{FT} \gls{ocp} is solved. The algorithm outline is shown in Figure~\ref{fig:smpcft}.
\begin{figure}
\centering
\includegraphics[width = \columnwidth]{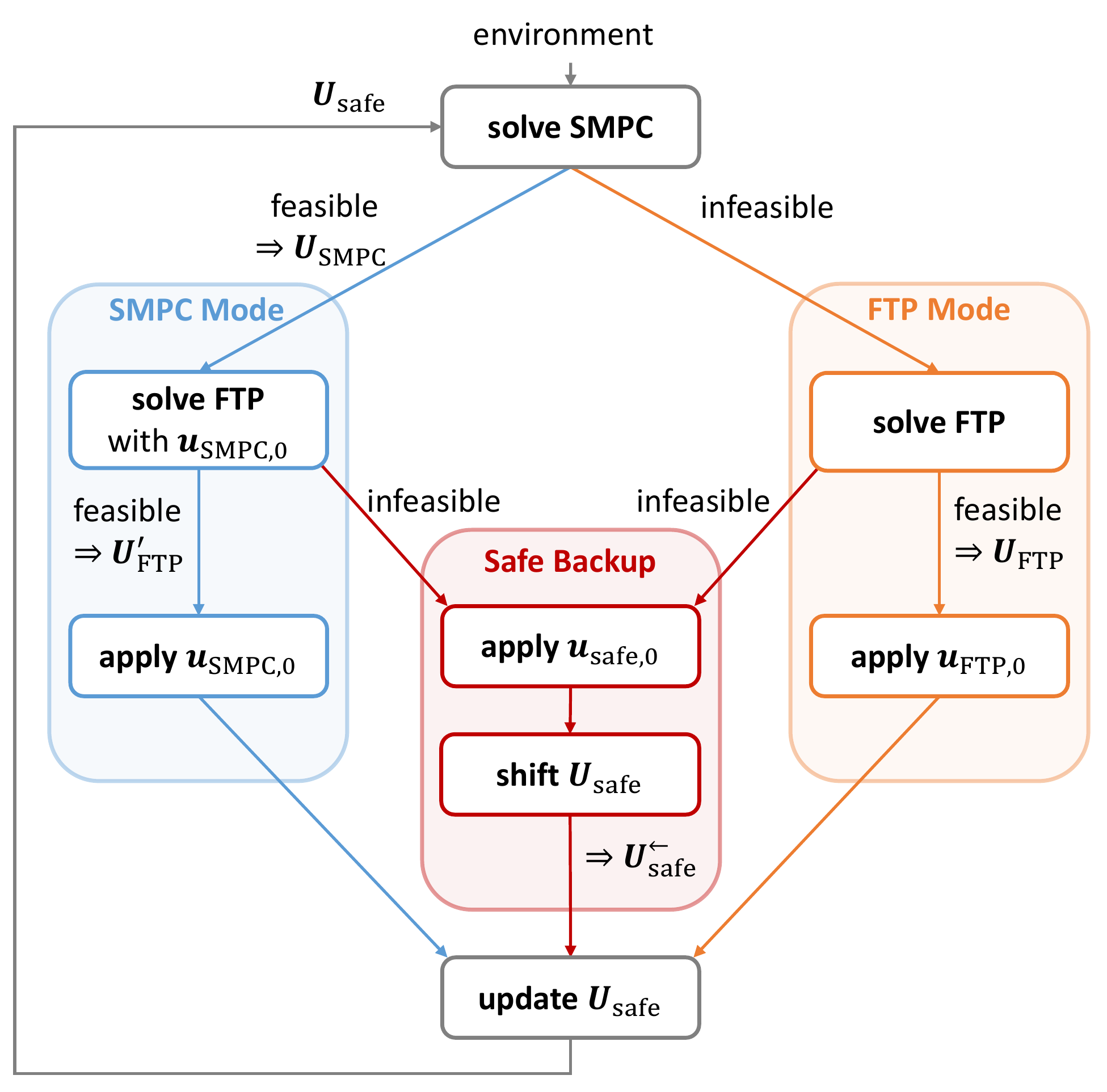}
\caption{\gls{SMPCFT} procedure for each time step. Blue shows the ideal mode with an applied \gls{SMPC} input, orange represents the safe alternative mode with an applied \gls{FT} input, and red indicates an infeasible \gls{FT} problem, which requires applying a safe backup input.} 
\label{fig:smpcft}
\end{figure}

\subsubsection{\gls{SMPC}}

In the first part of \gls{SMPCFT} an \gls{SMPC} problem is solved on a finite horizon \gls{Nsmpc}, yielding the input sequence $\gls{Usmpc} = [\bm{u}_{\text{SMPC},0}, ..., \bm{u}_{\text{SMPC},\gls{Nsmpc}-1}]^\top$. This \gls{SMPC} optimization takes into account the uncertain environment and constraints due to other traffic participants, i.e., target vehicles. 
Collision constraints are formulated as chance-constraints, based on a probabilistic \gls{tv} prediction. Therefore, the planned \gls{SMPC} trajectory provides an efficient and optimistic future trajectory for the \gls{ev}, as it is not required to avoid collision with \glspl{tv} for worst-case scenarios. However, chance-constraints allow a small probability of collision in the future, depending on the predefined \gls{SMPC} risk parameter.

\subsubsection{\gls{FT}}

The second part of \gls{SMPCFT} is based on \gls{FT} to ensure that the planned \gls{ev} trajectory remains safe. First, a worst-case \gls{tv} prediction is performed. Then, a fail-safe \gls{MPC} problem on a finite horizon \gls{Nft} is solved, resulting in an input sequence $\gls{Uft}= [\bm{u}_{\text{FTP},0}, ..., \bm{u}_{\text{FTP},\gls{Nft}-1}]^\top$. The fail-safe trajectory is required to avoid collision with the worst-case \gls{tv} prediction and after applying the full fail-safe input sequence \gls{Uft}, the terminal state $\boldsymbol{\xi}_{\gls{Nft}}$ must be a safe state according to Definition \ref{def:safestate}. 

The exact \gls{FT} formulation depends on the feasibility of the \gls{SMPC} \gls{ocp}.

\paragraph{Feasible SMPC} If the \gls{SMPC} \gls{ocp} yields a solution, \gls{FT} is used to decide whether applying the first \gls{SMPC} input $\bm{u}_{\text{SMPC},0}$ is safe.  Therefore, an \gls{FT} \gls{ocp} is formulated starting with the \gls{ev} state obtained by applying the first \gls{SMPC} input $\bm{u}_{\text{SMPC},0}$, i.e., the initial state for the \gls{FT} \gls{ocp} is 
\begin{IEEEeqnarray}{c}
\bm{\xi}'_0 = \bm{f}\lr{\bm{\xi}_0, \bm{u}_{\text{SMPC},0}}
\end{IEEEeqnarray}
with $\bm{f}\lr{\bm{\xi}_0,\bm{u}_{\text{SMPC},0}}$ 
according to \eqref{eq:d_evsys}. 

If feasible, the \gls{FT} \gls{ocp} yields a fail-safe input sequence \gls{Uft2}, based on $\bm{\xi}'_0$. Therefore, the first element $\bm{u}_{\text{SMPC},0}$ of the \gls{SMPC} input sequence is applied safely, as shown by the blue path in Figure~\ref{fig:smpcft}. The resulting new safe input sequence is given by
\begin{IEEEeqnarray}{rl}
\IEEEyesnumber \label{eq:usafe1}
\gls{Usafe} &= [\gls{Uft2}, \gls{Ubrake}] \IEEEyessubnumber \\
\gls{Ubrake} & = \begin{bmatrix} \begin{bmatrix} a_\text{min}\\ 0 \end{bmatrix}, \begin{bmatrix} a_\text{min} \\ 0 \end{bmatrix}, \dots \end{bmatrix} \IEEEyessubnumber \label{eq:ubrake}
\end{IEEEeqnarray}
where $a_\text{min}$ is the maximal deceleration and \gls{Ubrake} is a braking sequence to bring the \gls{ev} to a standstill. The safe input sequence \gls{Usafe} ensures a safe state after the full fail-safe input sequence \gls{Uft2} was applied and then initiates braking in order to reach zero velocity. Note that $a_\text{min}$ is only applied in \gls{Ubrake} until a standstill is reached, subsequently no deceleration is applied.

\paragraph{Infeasible SMPC} If the \gls{SMPC} \gls{ocp} is infeasible, the \gls{FT} optimal control problem is solved with initial state \gls{xi0} for the \gls{FT} \gls{ocp}. If an \gls{FT} solution \gls{Uft} is found, the first element of \gls{Uft}, i.e., $\bm{u}_{\text{FTP},0}$, is applied, as indicated by the orange path in Figure~\ref{fig:smpcft}. The updated safe input sequence follows from
\begin{IEEEeqnarray}{c}
\gls{Usafe} = [\bm{U}_{\text{FTP},1:\gls{Nft}}, \gls{Ubrake}]  \label{eq:usafe2}
\end{IEEEeqnarray}
with \gls{Ubrake} according to \eqref{eq:ubrake} where
\begin{IEEEeqnarray}{c}
\bm{U}_{\text{FTP},1:\gls{Nft}} = [\bm{u}_{\text{FTP},1}, ..., \bm{u}_{\text{FTP},\gls{Nft}-1}]
\end{IEEEeqnarray}
consists of all input elements of \gls{Uft} except the first input $\bm{u}_{\text{FTP},0}$.

\subsubsection{Infeasible \gls{FT}} \label{sec:safeinputsequence}
In case of an infeasible \gls{FT} \gls{ocp} no new input is generated at the current time step $h$. However, by definition the safe input sequence obtained at the previous time step $h-1$ remains safe for the current time step $h$. Therefore, in case that no solution exists to the \gls{FT} \gls{ocp}, the first element of the still valid, safe input sequence $\gls{Usafe}$ is applied, which is denoted by $\bm{u}_{\text{safe},0}$. This procedure is highlighted in red in Figure~\ref{fig:smpcft}.

Continuously applying the elements of \gls{Usafe} results in a safe trajectory according to Definition~\ref{def:safeinputsequence}. If the \gls{FT} \gls{ocp} remains infeasible for consecutive time steps, multiple subsequent input elements of a single safe input sequence are potentially applied until the \gls{FT} \gls{ocp} becomes feasible again.

This procedure requires shifting \gls{Usafe} after each \gls{SMPCFT} iteration where the \gls{FT} \gls{ocp} was infeasible, i.e., the first input element $\bm{u}_{\text{safe},0}$ of \gls{Usafe} was applied. The shifted updated input sequence is obtained by
\begin{IEEEeqnarray}{c}
\gls{Usafe}^{\leftarrow} = \gls{Usafe} 
\begin{bmatrix}
\bm{0}_{m}  \\ \bm{I}_{m}
\end{bmatrix} = [\bm{u}_{\text{safe},1}, \bm{u}_{\text{safe},2}, ..., \bm{u}_{\text{safe},m}] \IEEEeqnarraynumspace \label{eq:shift}
\end{IEEEeqnarray}
with $\gls{Usafe} \in \mathbb{R}^{2 \times (m+1)}$, identity matrix $\bm{I}_{m} \in \mathbb{R}^{m \times m}$, and $\bm{0}_{m} \in \mathbb{R}^{1\times m}$. The shifted safe input sequence $\gls{Usafe}^{\leftarrow}$ consists of all elements of \gls{Usafe} except the already applied input $\bm{u}_{\text{safe},0}$.

Then, the safe input sequence is updated at the end of the \gls{SMPCFT} iteration by selecting
\begin{IEEEeqnarray}{c}
\gls{Usafe} = \gls{Usafe}^{\leftarrow},\label{eq:usafe3}
\end{IEEEeqnarray}
which initializes the safe input sequence for the next \gls{SMPCFT} iteration.

\subsubsection{Summary of \gls{SMPCFT}}

Within the \gls{SMPCFT} method four cases are considered. These cases are summarized in the following.
\paragraph{\gls{SMPC} and \gls{FT} feasible} The first \gls{SMPC} input $\bm{u}_{\text{SMPC},0}$ is applied and a new safe input sequence \gls{Usafe} is obtained according to \eqref{eq:usafe1}.

\paragraph{\gls{SMPC} infeasible and \gls{FT} feasible} The first \gls{FT} input $\bm{u}_{\text{FTP},0}$ is applied and a new safe input sequence \gls{Usafe} is obtained according to \eqref{eq:usafe2}.

\paragraph{\gls{SMPC} feasible and \gls{FT} infeasible} No new input sequence is obtained. The first input element of the safe input sequence $\bm{u}_{\text{safe},0}$ is applied. The safe input sequence \gls{Usafe} remains valid for the next time step and is updated according to~\eqref{eq:usafe3}. 

\paragraph{\gls{SMPC} infeasible and \gls{FT} infeasible} As in the previous case, no new input sequence is obtained. The input $\bm{u}_{\text{safe},0}$ is applied and $\gls{Usafe}$ is generated based on \eqref{eq:usafe3} for the next time step.

In summary, the \gls{SMPC} solution is applied as long as a fail-safe backup trajectory exists. Safety is guaranteed by solving an \gls{FT} \gls{ocp}, based on the first \gls{SMPC} input. In cases where the \gls{FT} \gls{ocp} is infeasible, the safe input sequence of the previous time step is still valid. Following this procedure, in regular cases the efficient \gls{SMPC} inputs are applied, resulting in good performance, while \gls{FT} guarantees safety for all possible cases, including rare events.

\subsection{Recursive Feasibility}

A disadvantage of various \gls{SMPC} algorithms is that recursive feasibility of the optimization problem cannot be guaranteed. In this section recursive feasibility of the \gls{SMPCFT} method is proved, i.e., if the optimization problem can be solved at step $h$, it can also be solved at step $h+1$ for all $h \in \mathbb{N}$. In this section it is necessary to denote the time step $h$. The safe input sequence updated at time step $h$ is denoted by $\bm{U}_{\textnormal{safe},h}$.

\begin{definition}[Safe Feasible Trajectory]
Let there exist a safe set \gls{XXsafe} and let \gls{XXf} be a control invariant set. Let $\gls{chiUk} = [\bm{\xi}_h, ..., \bm{\xi}_{h+N}]$ denote a trajectory starting at initial state $\bm{\xi}_h$ at time step $h$ with $N$ trajectory steps obtained by applying the input sequence $\bm{U}_h = [\bm{u}_h, ..., \bm{u}_{h+N-1}]$ with $\bm{\xi}_{h+1} = \bm{f}\lr{\bm{\xi}_h,\bm{u}_h}$. Then, the set $\Gamma_h$ of safe feasible trajectories, leading into the control invariant set \gls{XXf}, is defined as
\begin{IEEEeqnarray}{rl}
&\Gamma_h = \IEEEnonumber\\ &\setdef[\gls{chiUk}]{\bm{\xi}_{h+i} \in \gls{XXsafe},~i \in \{0,\dots,N\},~\bm{\xi}_{h+N} \in \gls{XXf}}.  \IEEEeqnarraynumspace \label{eq:safe_traj} 
\end{IEEEeqnarray}
\end{definition}
A safe feasible trajectory satisfies all constraints given by \gls{XXsafe} and ends in the control invariant set \gls{XXf}.

\begin{assumption}[System Models]
\label{ass:sysmodel}
The ego vehicle system models \eqref{eq:bicyclemodel} and \eqref{eq:d_evsys} correspond to the dynamics of the real system. The target vehicle model \eqref{eq:tvmodel} represents an over-approximation of the real target vehicle dynamics.
\end{assumption}

Here, over-approximation means that the possible states reachable with the \gls{tv} model include all possible states obtained with the real \gls{tv} dynamics.

\begin{assumption}[Initial Safe Input Sequence]
\label{ass:FTinitial}
At the initial time step $h=0$ the initial ego vehicle state 
is safe and there exists a known initial safe input sequence $\bm{U}_{\textnormal{safe},\textnormal{init}}$, such that $\gls{chiX}_0^{\bm{U}_{\textnormal{safe},\textnormal{init}}}$ is a safe feasible trajectory, i.e., $\gls{chiX}_0^{\bm{U}_{\textnormal{safe},\textnormal{init}}} \in \Gamma_0$.
\end{assumption}

We can now show recursive feasibility of the proposed method.

\begin{theorem}
\label{t:recfeas}
Let Assumptions \ref{ass:sysmodel} and \ref{ass:FTinitial} hold. Then for the \gls{SMPCFT} approach there exists a feasible trajectory $\gls{chiUk} \in \Gamma_h$ that is guaranteed to be safe at all time steps $h \in \mathbb{N}$.
\end{theorem}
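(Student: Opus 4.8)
The plan is to prove recursive feasibility by induction on the time step $h$, taking Assumption~\ref{ass:FTinitial} as the base case and exploiting the four-case structure of the \gls{SMPCFT} method for the inductive step. The induction hypothesis is that at time $h$ there exists a safe input sequence $\bm{U}_{\textnormal{safe},h}$ whose induced trajectory $\boldsymbol{\chi}_h^{\bm{U}_{\textnormal{safe},h}}$ lies in $\Gamma_h$, i.e., it remains in the safe set \gls{XXsafe} at every step and terminates in the control invariant set \gls{XXf}. The base case $h=0$ follows immediately from Assumption~\ref{ass:FTinitial}. For the inductive step I would show that, whichever of the four cases the algorithm executes, a safe feasible trajectory exists at time $h+1$, so that $\Gamma_{h+1}\neq\emptyset$ and a valid $\bm{U}_{\textnormal{safe},h+1}$ is produced.

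First I would dispatch the two cases in which the \gls{FT} \gls{ocp} is feasible. The decisive point is that~\eqref{eq:ftOCP} explicitly enforces the terminal constraint~\eqref{eq:ftOCPterm}, so the optimized fail-safe trajectory ends in the safe terminal set, from which the braking block \gls{Ubrake} brings the \gls{ev} to a standstill in its own lane. Concatenating these pieces as in~\eqref{eq:usafe1} or~\eqref{eq:usafe2} yields a sequence $\bm{U}_{\textnormal{safe},h+1}$ whose trajectory satisfies all constraints encoded in \gls{XXsafe} and reaches \gls{XXf}, hence lies in $\Gamma_{h+1}$. Assumption~\ref{ass:sysmodel} is what lets this collision-free certificate transfer to reality: since the \gls{tv} model over-approximates the true \gls{tv} dynamics, a trajectory avoiding the predicted worst-case \gls{tv} occupancy also avoids the actual \gls{tv}.

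The harder cases are those in which the \gls{FT} \gls{ocp} is infeasible, where no new plan is computed and the algorithm instead applies the first element of the stored sequence and shifts it via~\eqref{eq:shift}--\eqref{eq:usafe3}. The crux is to show that the shifted sequence $\bm{U}_{\textnormal{safe},h}^{\leftarrow}$ still induces a trajectory in $\Gamma_{h+1}$. By the induction hypothesis $\boldsymbol{\chi}_h^{\bm{U}_{\textnormal{safe},h}}\in\Gamma_h$, so applying $\bm{u}_{\textnormal{safe},0}$ advances the \gls{ev} to the next, already-certified state of this safe trajectory, which still lies in \gls{XXsafe}, while the remaining shifted inputs are precisely the tail of a trajectory terminating in \gls{XXf}. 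I expect this to be the main obstacle, for two reasons. First, one must verify that the worst-case \gls{tv} reachable sets certifying safety at step $h$ remain valid over-approximations when re-anchored at step $h+1$, which again rests on Assumption~\ref{ass:sysmodel}. Second, one must invoke the control invariance of \gls{XXf} to extend safety beyond the finite horizon, i.e., to guarantee that the standstill state can be held indefinitely without being struck from behind, which relies on Assumptions~\ref{ass:trafficrules} and~\ref{ass:veh_decelerate}. Once these are in place, a safe $\bm{U}_{\textnormal{safe},h+1}$ with $\boldsymbol{\chi}_{h+1}^{\bm{U}_{\textnormal{safe},h+1}}\in\Gamma_{h+1}$ exists in all four cases, closing the induction.
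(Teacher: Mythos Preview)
Your proposal is correct and follows essentially the same approach as the paper's proof: induction on $h$, base case supplied by Assumption~\ref{ass:FTinitial}, and an inductive step that branches on whether the \gls{FT} \gls{ocp} is feasible, constructing $\bm{U}_{\textnormal{safe}}$ via~\eqref{eq:usafe1}/\eqref{eq:usafe2} when it is and via the shift~\eqref{eq:shift}--\eqref{eq:usafe3} when it is not. Your write-up is in fact more explicit than the paper's terse argument in spelling out why the constructed sequences land in $\Gamma_{h+1}$ (terminal constraint plus \gls{Ubrake}, control invariance of \gls{XXf}, and the over-approximation in Assumption~\ref{ass:sysmodel}); just be aware that the paper indexes the stored sequence as $\bm{U}_{\textnormal{safe},h-1}$ (available at step $h$) rather than $\bm{U}_{\textnormal{safe},h}$, so align your notation accordingly.
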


\begin{proof} 
The derivation of the proof is given in Appendix~\ref{sec:appendixa}. 
\end{proof}

Note that the worst-case behavior of the \glspl{tv} depends on the traffic rules. Therefore, safety and recursive feasibility of the \gls{SMPCFT} method can only be guaranteed if surrounding \glspl{tv} adhere to the underlying traffic rules, as stated in Assumption~\ref{ass:trafficrules}. However, no specific traffic rules are required to prove Theorem~\ref{t:recfeas}. 
\section{Trajectory Planning Algorithms}
\label{sec:details}
The two \gls{MPC} optimal control problems, \gls{SMPC} and \gls{FT}, are solved independently. In the following, the respective optimal control problems are derived, specifically focusing on the safety constraints.

\subsection{Stochastic Model Predictive Control}
\label{sec:detailsSMPC}
\gls{SMPC} solves an optimal control problem with chance constraints, accounting for \gls{tv} uncertainty, depending on a risk factor $\beta$. First, a safety area is defined around each predicted \gls{tv} state which accounts for the \gls{ev} and \gls{tv} shape. Then, this safety area is increased to account for \gls{tv} uncertainty, given a predefined risk parameter. Eventually, a linear constraint is generated for each \gls{tv}, depending on the positioning of the \gls{ev} and the \gls{tv}.

\subsubsection{Deterministic Target Vehicle Prediction}
\label{sec:smpcTVpred}
For \gls{SMPC} a simple \gls{tv} prediction is applied, representing the most likely \gls{tv} behavior with $\gls{wtv} = \bm{0}$, i.e., $\gls{utv} = \gls{tutv}$. It is assumed that the current \gls{tv} maneuver continues for the prediction horizon \gls{Nsmpc}. Therefore, \gls{tv} model \eqref{eq:tvmodel} is applied where the \gls{tv} reference \gls{xitvref} depends on the current \gls{tv} maneuver. The reference velocity $v_{x,\text{ref},k}^\text{TV}$ is set to the current \gls{tv} velocity $v_{x,0}^\text{TV}$. The \gls{tv} reference lateral velocity is chosen to be $v_{y,\text{ref},k}^\text{TV} = 0$. The reference lateral position $y_{\text{ref},k}^\text{TV}$ is the current \gls{tv} lane center. A new reference lane is selected if part of the \gls{tv} shape lies in this adjacent lane and the lateral velocity moves the \gls{tv} towards this adjacent lane. If both of these requirements are fulfilled, the adjacent lane center is selected as the lateral position reference.

\subsubsection{Target Vehicle Safety Area}
\label{sec:smpc_rectangle}
Collisions with \glspl{tv} are avoided by ensuring the necessary distance between the \gls{ev} and \gls{tv}. Here, a safety rectangle around the \gls{tv} is defined. While it is possible to choose other shapes, e.g., ellipses as in \cite{BruedigamEtalWollherr2018b}, rectangles allow to easily generate linear constraints, as described in Section~\ref{sec:constraintgeneration}. 

The safety rectangle with length $a_\text{r}$ and width $b_\text{r}$ is illustrated in Figure~\ref{fig:safetyrectangle}.
\begin{figure}
\centering
\includegraphics[width = 0.53\columnwidth]{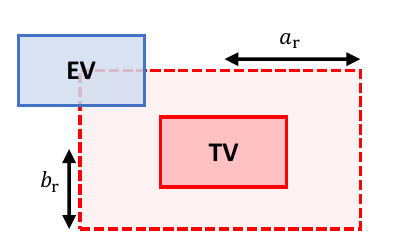}
\caption{Target vehicle safety rectangle.}
\label{fig:safetyrectangle}
\end{figure}
 In order to ensure that the vehicle shapes do not intersect, the vehicle centers need to be distanced at least by the vehicle length $l_\text{veh}$ and width $w_\text{veh}$. For the safety rectangle width this yields 
\begin{IEEEeqnarray}{c}
b_\text{r} = w_\text{veh} + \gls{msafe}
\end{IEEEeqnarray}
where $\gls{msafe}$ is a possible additional safety margin.

Calculating the safety rectangle length $a_\text{r}$ requires a velocity dependent part $\tilde{a}_\text{r}\lr{\bm{\xi}, \bm{\xi}^\text{TV}}$, compensating for a potential velocity difference between the \gls{ev} and the \gls{tv}, resulting in
\begin{IEEEeqnarray}{c}
a_\text{r} = l_\text{veh} + \gls{msafe} + \tilde{a}_\text{r}\lr{\bm{\xi}, \bm{\xi}^\text{TV}}.
\end{IEEEeqnarray}
The velocity dependent part $\tilde{a}_\text{r}$ needs to account for the difference in traveled distance between the \gls{ev} and \gls{tv} if both vehicles initiate maximal braking, e.g., in an emergency braking scenario. Here, in addition to Assumption~\ref{ass:veh_decelerate}, zero reaction time is assumed. The traveled distance $\Delta x$ of a vehicle until standstill is described by 
\begin{IEEEeqnarray}{rl}
\IEEEyesnumber
\Delta x \lr{t_\text{stop}} &= v_{x} t_\text{stop} + 0.5 a_{\text{min}} \lr{t_\text{stop}}^2 \nonumber \\
&= - \frac{1}{a_{\text{min}}} \lr{v_{x}}^2  \IEEEyessubnumber \\
t_\text{stop} &= - \frac{v_{x}}{a_{\text{min}}} \IEEEyessubnumber
\end{IEEEeqnarray}
with maximal longitudinal deceleration $a_{\text{min}}$, time to standstill $t_\text{stop}$, given the initial \gls{ev} and \gls{tv} velocity $v$ and $v^\text{TV}_{x}$, respectively. Based on the difference in traveled distance
\begin{IEEEeqnarray}{c}
\Delta x^\text{EV} \lr{t_\text{stop}^\text{EV}} - \Delta x^\text{TV} \lr{t_\text{stop}^\text{TV}} = - \frac{1}{a_{\text{min}}} \lr{v^2 - \lr{v_{x}^\text{TV}}^2   }, \IEEEeqnarraynumspace
\end{IEEEeqnarray}
assuming similar maximal deceleration for the \gls{ev} and \gls{tv}, the velocity dependent safety distance is obtained by
\begin{IEEEeqnarray}{c}
\tilde{a}_\text{r}\lr{\bm{\xi}, \bm{\xi}^\text{TV}} = - \frac{1}{a_{\text{min}}} \max \left\lbrace 0, \lr{v^2 - \lr{v_{x}^\text{TV}}^2   }  \right\rbrace
\end{IEEEeqnarray}
where the $\max$-operator ensures that the safety rectangle length does not decrease for $v_{x}^\text{TV} > v$.

For the \gls{SMPC} optimal control problem the safety rectangle length and width is calculated for prediction time step $k$, based on the \gls{tv} prediction \gls{xitv} described in Section~\ref{sec:smpcTVpred}. However, only the initial \gls{ev} state $\bm{\xi}_0$ is considered in the velocity depended part $\tilde{a}_\text{r}$. This is necessary in order to generate linear safety constraints. While it would be possible to use predicted \gls{ev} states, these would yield nonlinear constraints. The resulting safety rectangle parameters are 
\begin{IEEEeqnarray}{rl}
\IEEEyesnumber \label{eq:ab_rectangle}
b_{\text{r},k} &= w_\text{veh} + \gls{msafe} \IEEEyessubnumber \\
a_{\text{r},k} &= l_\text{veh} + \gls{msafe} + \tilde{a}_\text{r}\lr{\bm{\xi}_0, \gls{xitv}}. \IEEEyessubnumber
\end{IEEEeqnarray}

\subsubsection{Chance Constraint Reformulation}
\label{sec:cc}
The \gls{tv} safety rectangle given by \eqref{eq:ab_rectangle} only considers the deterministic prediction, but it does not account for \gls{tv} uncertainty. In the following the safety rectangle is enlarged, based on a chance constraint depending on the \gls{tv} uncertainty and a predefined risk parameter $\beta$. The chance constraint, similar to \eqref{eq:smpcOCPcc}, is given by 
\begin{IEEEeqnarray}{c}
\text{Pr}\left(\bm{\xi}_k \in \bm{\Xi}'_{k,\text{safe}}\lr{\bm{w}_k^\text{TV}} \right) \geq \beta \label{eq:cc_ev}
\end{IEEEeqnarray}
where the safe set $\bm{\Xi}'_{k,\text{safe}}\lr{\bm{w}_k^\text{TV}}$ for the \gls{ev} state depends on the previously defined safety rectangle parameters of~\eqref{eq:ab_rectangle} and the \gls{tv} uncertainty $\bm{w}_k^\text{TV}$. In other words, the previously defined safety area is now enlarged to account for \gls{tv} uncertainty. A large risk parameter reduces risk by generating a large safety area around the \gls{tv}.

The chance constraint \eqref{eq:cc_ev} cannot be solved directly. In the following, a deterministic approximation is determined for this probabilistic expression, inspired by other \gls{SMPC} approaches \cite{CarvalhoEtalBorrelli2014, BruedigamEtalWollherr2018b}.

According to \eqref{eq:tvmodel} the \gls{tv} state follows
\begin{IEEEeqnarray}{c}
\bm{\xi}^{\text{TV}}_{k+1} = \bm{A} \gls{xitv} + \bm{B}  \bm{K} \lr{\gls{xitv} - \gls{xitvref}}  + \bm{B} \gls{wtv},
\end{IEEEeqnarray}
while the predicted \gls{tv} state is given by 
\begin{IEEEeqnarray}{c}
\hat{\bm{\xi}}^{\text{TV}}_{k+1} = \bm{A} \hat{\bm{\xi}}^{\text{TV}}_{k} + \bm{B}  \bm{K} \lr{\hat{\bm{\xi}}^{\text{TV}}_{k} - \gls{xitvref}}, \label{eq:tvpred}
\end{IEEEeqnarray}
yielding the prediction error
\begin{IEEEeqnarray}{c}
\bm{e}_k = \hat{\bm{\xi}}^{\text{TV}}_{k} - \gls{xitv}.
\end{IEEEeqnarray}

The \gls{tv} prediction \eqref{eq:tvpred} is now split into a deterministic and a stochastic part
\begin{IEEEeqnarray}{c}
\hat{\bm{\xi}}^{\text{TV}}_{k+1} = \bm{\xi}^{\text{TV}}_{k+1} + \lr{\bm{A} + \bm{B}\bm{K}}\bm{e}_{k} - \bm{B} \gls{wtv} = \bm{\xi}^{\text{TV}}_{k+1} + \bm{e}_{k+1} \IEEEeqnarraynumspace
\end{IEEEeqnarray}
which results in the prediction error update
\begin{IEEEeqnarray}{c}
\bm{e}_{k+1} = \lr{\bm{A} + \bm{B}\bm{K}}\bm{e}_{k} - \bm{B} \gls{wtv}.
\end{IEEEeqnarray}

Given the sensor noise \gls{wsens} according to~\eqref{eq:w_sens}, the initial error follows $\bm{e}_0 \sim \mathcal{N}\lr{0, \bm{\Sigma}_0^{\bm{e}}}$ with $\bm{\Sigma}_0^{\bm{e}} = \bm{\Sigma}^{\text{sens}}$. As both the \gls{tv} uncertainty, with covariance matrix \gls{wcov}, and the sensor noise are assumed to be Gaussian distributions, a recursive computation of the prediction error covariance matrix $\bm{\Sigma}_{k}^{\bm{e}}$ is possible, yielding
\begin{IEEEeqnarray}{c}
\bm{\Sigma}_{k+1}^{\bm{e}} = \bm{B} \gls{wcov} \bm{B}^\top + \lr{\bm{A} + \bm{B}\bm{K}} \bm{\Sigma}_{k}^{\bm{e}} \lr{\bm{A} + \bm{B}\bm{K}}^\top. \label{eq:errcov_propagation}
\end{IEEEeqnarray}

Based on the prediction error covariance matrix $\bm{\Sigma}_{k}^{\bm{e}}$, the \gls{tv} safety rectangle is increased. Given a predefined \gls{SMPC} risk parameter $\beta$, the aim is to find a region around the predicted \gls{tv} state which contains the true \gls{tv} state with probability $\beta$. This then allows to consider the probabilistic safety constraint \eqref{eq:cc_ev} as a deterministic substitute constraint. As the \gls{tv} safety rectangle only considers positions, we define the reduced error $\tilde{\bm{e}}_k = [e_{x,k}, e_{y,k}]^\top$ with the reduced covariance matrix 
\begin{IEEEeqnarray}{c}
\tilde{\bm{\Sigma}}_{k}^{\bm{e}} = 
\begin{bmatrix}
\sigma^2_{x,k} & 0 \\ 0 & \sigma^2_{y,k}
\end{bmatrix}
\end{IEEEeqnarray}
with variances $\sigma^2_{x,k}$ and $\sigma^2_{y,k}$ for the longitudinal and lateral \gls{tv} position, corresponding to the first and third diagonal element of $\bm{\Sigma}_{k}^{\bm{e}}$.

\begin{lemma}
The reduced error covariance matrix $\tilde{\bm{\Sigma}}_{k}^{\bm{e}}$ corresponding to the position coordinates is obtained from $\bm{\Sigma}_{k}^{\bm{e}}$ by omitting the correlation to the respective velocities.
\end{lemma}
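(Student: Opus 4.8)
The plan is to exploit the decoupling between the longitudinal coordinates $\lr{x,v_x}$ and the lateral coordinates $\lr{y,v_y}$ of the target vehicle model. First I would observe that, with $\bm{A}$, $\bm{B}$ from \eqref{eq:tv_AB} and $\bm{K}$ from \eqref{eq:tv_feedback}, the closed-loop matrix $\bm{\Phi} = \bm{A} + \bm{B}\bm{K}$ is block diagonal with respect to the partition of the state into longitudinal indices $1,2$ and lateral indices $3,4$. A direct multiplication shows that $\bm{B}\bm{K}$ has nonzero entries only within each $2\times 2$ diagonal block, so $\bm{\Phi}$ never mixes the two subsystems.

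Next I would inspect the two terms of the covariance recursion \eqref{eq:errcov_propagation}. Written in block form, the propagated term $\bm{\Phi}\,\bm{\Sigma}_{k}^{\bm{e}}\,\bm{\Phi}^\top$ preserves whatever block-diagonal structure is already present in $\bm{\Sigma}_{k}^{\bm{e}}$. For the driving term $\bm{B}\,\gls{wcov}\,\bm{B}^\top$ I would use that the first column of $\bm{B}$ acts only on the longitudinal indices and the second column only on the lateral indices; hence, as long as $\gls{wcov}$ carries no cross-correlation between the longitudinal and lateral input disturbances, this term is itself block diagonal.

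With these two observations the proof is an induction on $k$. The base case relies on the initial covariance $\bm{\Sigma}_0^{\bm{e}} = \bm{\Sigma}^{\text{sens}}$ having no longitudinal--lateral cross-correlation, and the inductive step simply adds two block-diagonal matrices in \eqref{eq:errcov_propagation}, so $\bm{\Sigma}_{k}^{\bm{e}}$ stays block diagonal for every $k$. In particular its $(1,3)$ entry, namely the covariance between the longitudinal position error $e_{x,k}$ and the lateral position error $e_{y,k}$, vanishes. Restricting to the position rows and columns $1$ and $3$ then produces exactly the claimed diagonal matrix, with $\sigma^2_{x,k}$ and $\sigma^2_{y,k}$ taken as the $(1,1)$ and $(3,3)$ entries of $\bm{\Sigma}_{k}^{\bm{e}}$, while the velocity rows and columns $2,4$ are dropped, which is precisely the asserted ``omitting the correlation to the respective velocities.''

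The main point to flag, rather than compute, is the block structure of the noise covariances: the decoupling of $\bm{\Phi}$ is automatic from the model, but the diagonality of $\tilde{\bm{\Sigma}}_{k}^{\bm{e}}$ would fail if either $\gls{wcov}$ or $\bm{\Sigma}^{\text{sens}}$ correlated longitudinal and lateral components. Everything else is routine block-matrix bookkeeping on \eqref{eq:errcov_propagation}.
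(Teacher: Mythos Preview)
Your argument is correct, but it emphasizes the opposite half of the claim compared with the paper's proof. The paper simply asserts the block-diagonal splitting of $\bm{\Sigma}_{k}^{\bm{e}}$ into a longitudinal block $\bm{\Sigma}^e_{x,k}$ and a lateral block $\bm{\Sigma}^e_{y,k}$ (``as longitudinal and lateral motion are uncorrelated'') and then spends its effort on the marginalization step: it writes down the bivariate density of $(e_{x,k},e_{v_x,k})$ and explicitly integrates out the velocity coordinate to show that the resulting univariate density of $e_{x,k}$ is $\mathcal{N}(0,\sigma^2_{x,k})$, independent of the cross term $\sigma^2_{xv_x,k}$. You do the reverse: you prove the block-diagonality rigorously by induction on the recursion \eqref{eq:errcov_propagation}, using the block structure of $\bm{\Phi}=\bm{A}+\bm{B}\bm{K}$ and of $\bm{B}\gls{wcov}\bm{B}^\top$, and then treat the marginalization as the standard fact that the covariance of a Gaussian sub-vector is the corresponding principal submatrix. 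Your route is arguably tighter where it matters, since the decoupling is the only nontrivial structural claim and you actually verify it (and correctly flag that it hinges on $\gls{wcov}$ and $\bm{\Sigma}^{\text{sens}}$ having no longitudinal--lateral cross terms); the paper's explicit integral, on the other hand, makes visible to a reader why the position--velocity correlation within each block does not survive the marginalization.
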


\begin{proof}
The proof is given in Appendix~\ref{sec:appendixc}. 
\end{proof}

The reduced error covariance matrix $\tilde{\bm{\Sigma}}_{k}^{\bm{e}}$ is now used to enlarge the safety rectangle to account for uncertainty. Note that the error covariance matrix $\bm{\Sigma}_{k}^{\bm{e}}$ is still required to compute~\eqref{eq:errcov_propagation}.

The bivariate Gaussian distribution described by $\tilde{\bm{\Sigma}}_{k}^{\bm{e}}$ with mean $\bm{\mu} = [\mu_{x},\mu_{y}]^\top = \bm{0}$ consists of independent random variables for longitudinal and lateral position. This allows to find a confidence region around the predicted \gls{tv} state mean, bounded by an ellipsoidal isoline enclosing the highest density region as illustrated in Figure~\ref{fig:isoline}. 
\begin{figure}
\centering
\includegraphics[width = 0.8\columnwidth]{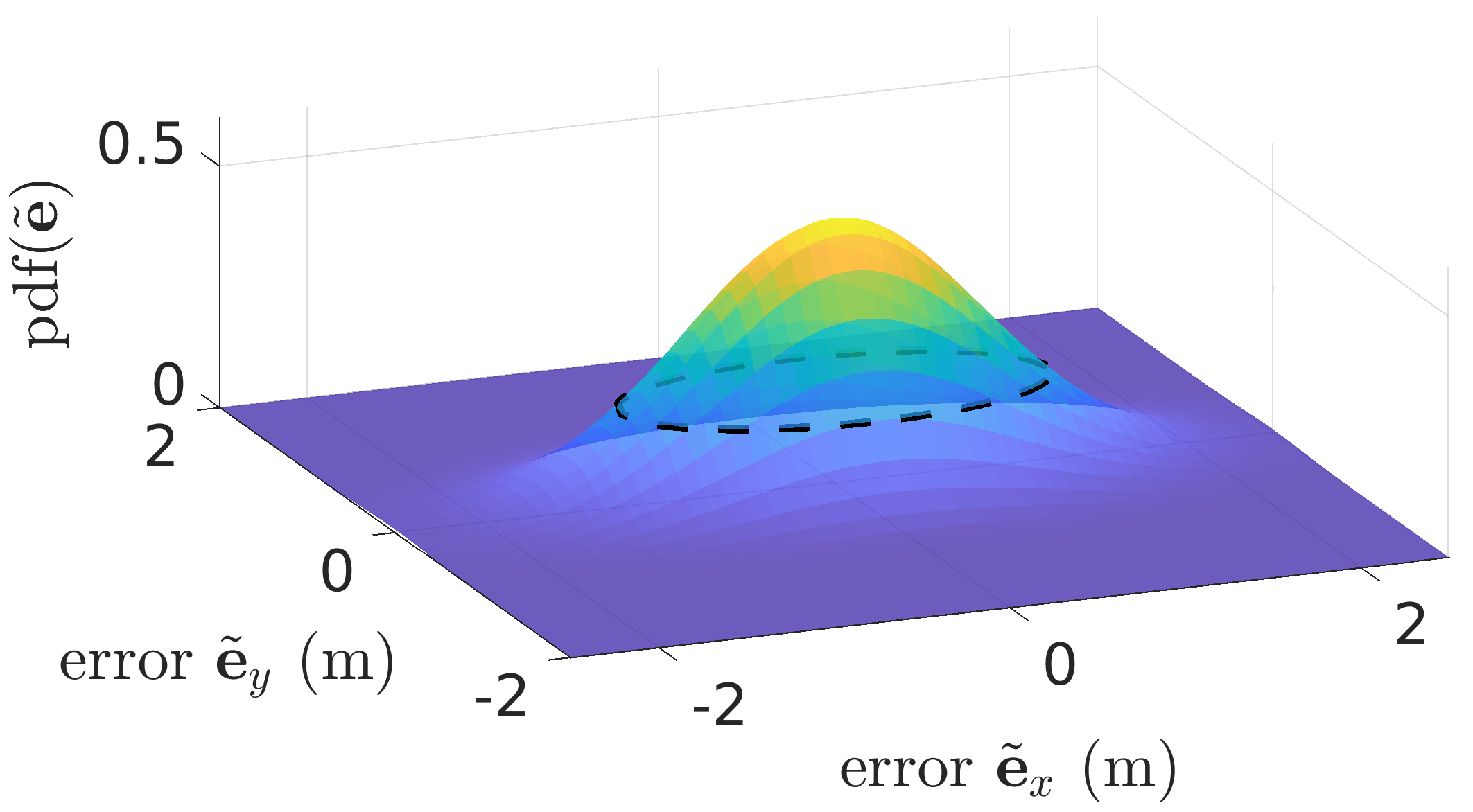}
\caption{Exemplary bivariate Gaussian probability distribution function of the prediction error $\tilde{\bm{e}}$, including an isoline (dotted black line).}
\label{fig:isoline}
\end{figure}
The aim is to find an isoline which contains the prediction error with a probability according to risk parameter $\beta$. The isoline ellipse equation is denoted by 
\begin{IEEEeqnarray}{rl}
\IEEEyesnumber \label{eq:iso_ellipse}
\lr{\tilde{\bm{e}}_k - \bm{\mu}}^\top \lr{\tilde{\bm{\Sigma}_{k}}^{\bm{e}}}^{-1}\lr{\tilde{\bm{e}}_k - \bm{\mu}} &= \kappa \IEEEyessubnumber \\
\frac{\lr{e_{x,k}-\mu_{x}}^2}{\sigma_{x,k}^2} + \frac{\lr{e_{y,k}-\mu_{y}}^2}{\sigma_{y,k}^2} &= \kappa  \IEEEyessubnumber
\end{IEEEeqnarray}
with 
tolerance level $\kappa$. 
The tolerance level $\kappa$ is determined based on the \textit{chi-squared distribution} $\chi^2_n (1-\beta)$, given the risk parameter $\beta$ and 
the number of degrees of freedom $n$. In this case, $n=2$ as the reduced error $\tilde{\bm{e}}_k$ consists of two elements. It then follows that
\begin{IEEEeqnarray}{c}
\kappa = \chi^2_2 (1-\beta),
\end{IEEEeqnarray}
which ensures that the probability of the true \gls{tv} state not lying within the isoline is 
$100(1-\beta)\%$, where the isoline is defined by the tolerance level $\kappa$. The ellipse semi-major and semi-minor axes are then given by 
\begin{IEEEeqnarray}{rl}
\IEEEyesnumber \label{eq:ellipse_axes}
e_{x,k,\kappa}= \sigma_{x,k} \sqrt{\kappa}  \IEEEyessubnumber \\
e_{y,k,\kappa}= \sigma_{y,k} \sqrt{\kappa}. \IEEEyessubnumber
\end{IEEEeqnarray}
These ellipse parameters are now used to increase the \gls{tv} safety rectangle.

\begin{remark}
Instead of using the \textit{chi-squared distribution}, in this case the tolerance level can also be obtained by $\kappa = -2 \ln{\beta}$.
\end{remark}

While an ellipse, according to \eqref{eq:iso_ellipse}, describes the desired confidence region, the constraint generation method used in this work requires a rectangular \gls{tv} safety area. We therefore over-approximate the ellipse by a rectangle. In order to include this uncertainty consideration in the rectangle parameters $a_{\text{r},k}$ and $b_{\text{r},k}$ of \eqref{eq:ab_rectangle}, the rectangle parameters are increased based on the ellipse semi-major axis $e_{x,k,\kappa}$ and semi-minor axis $e_{x,k,\kappa}$, resulting in 
\begin{IEEEeqnarray}{rl}
\IEEEyesnumber \label{eq:ab_rectangle_cc}
b_{\text{r},k} &= w_\text{veh} + \gls{msafe} + e_{y,k,\kappa} \IEEEyessubnumber \\
a_{\text{r},k} &= l_\text{veh} + \gls{msafe} + \tilde{a}_\text{r}\lr{\bm{\xi}_0, \gls{xitv}} + e_{x,k,\kappa}. \IEEEyessubnumber
\end{IEEEeqnarray}
The updated safety rectangle parameters are now utilized to generate the safety constraints for the \gls{SMPC} optimal control problem.

\subsubsection{\gls{SMPC} Constraint Generation}
\label{sec:constraintgeneration}
Given the safety rectangles for each \gls{tv}, linear constraints to avoid collisions can be defined for each prediction step and for each \gls{tv}. Each linear constraint is of the form
\begin{IEEEeqnarray}{c}
0 \geq q_y\lr{\bm{\xi}_0, \gls{xitv}} d_k + q_x\lr{\bm{\xi}_0, \gls{xitv}} s_k + q_t\lr{\bm{\xi}_0, \gls{xitv}} \label{eq:c_smpc}
\end{IEEEeqnarray}
where $q_y$ and $q_x$ are the coefficients for the \gls{ev} states $d_k$ and $s_k$, respectively, and $q_t$ is the intercept. The coefficients $q_y$, $q_x$, and $q_t$ of the linear constraint depend on the current \gls{ev} state $\bm{\xi}_0$ and the predicted mean \gls{tv} states \gls{xitv}. This results in multiple constraint generation cases, where the major cases are displayed in Figure~\ref{fig:smpc_constraintcases}. 
\begin{figure}
\centering
\includegraphics[width = 0.9\columnwidth,trim=0 2 0 1]{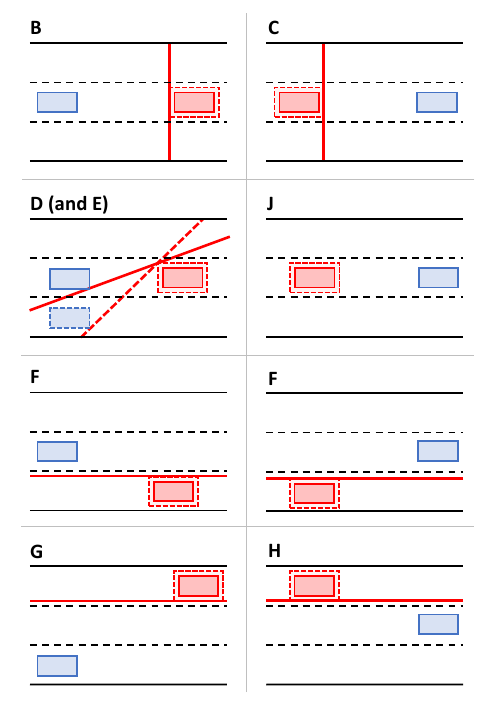}
\caption{Selected constraint generation cases for SMPC. Driving direction is from left to right. The \gls{ev} and \gls{tv} are shown in blue and red, respectively. The dashed red line represents the sfaety area around the \gls{tv}.}
\label{fig:smpc_constraintcases}
\end{figure}
The cases are distinguished based on the initial vehicle configuration at the beginning of the optimal control problem, i.e., $k=0$. As mentioned in Section~\ref{sec:smpc_rectangle}, while the predicted \gls{tv} state $\bm{\xi}^\text{TV}_k$ at prediction step $k$ is considered to build the constraint given a specific case, only the initial \gls{ev} state $\bm{\xi}_0$ is considered in order to allow generating linear constraints. We now introduce a brief, but not necessarily complete overview of constraint cases that are considered, which is an extension to the cases in \cite{CesariEtalBorrelli2017}. A complete overview of cases, requirements, and constraint parameters $q_x$, $q_y$, $q_t$ from \eqref{eq:c_smpc} is found in Appendix~\ref{sec:appendixb}. 

We first consider the case where the \gls{tv} is far away from the \gls{ev}, i.e., a longitudinal distance $\gls{dxEVTV} = s_0 - \gls{xtv0}$ which is larger than $r_\text{lar}$. Therefore, no constraint is generated. 
\begin{itemize}
\item[A)] $|\gls{dxEVTV}|  \geq r_\text{lar}$: no constraint 
\end{itemize}
Then, the remaining cases are addressed with a longitudinal distance $|\gls{dxEVTV}|  < r_\text{lar}$. Further cases are distinguished given how close the \gls{tv} is positioned to the \gls{ev}, based on a function \gls{ffclose}, which is further explained in Appendix~\ref{sec:appendixb}. For simplicity, \gls{ffclose} is assumed to be a constant \gls{dclose} here. In case of a vehicle distance $| \gls{dxEVTV}|  > \gls{fclose}$, a vertical constraint is used behind the \gls{tv} (case B) or in front of the \gls{tv} (case C). 
\begin{itemize}
\item[B)] $-\lr{\gls{dxEVTV}}  > \gls{fclose}$: vertical constraint behind \gls{tv} 
\item[C)] $\lr{\gls{dxEVTV}}  > \gls{fclose}$: vertical constraint in front of \gls{tv} 
\end{itemize}
If the \gls{ev} is located closely behind the \gls{tv}, i.e., $s_0 < \gls{xtv0}$ and $-\lr{\gls{dxEVTV}}  \leq \gls{fclose}$, the constraint depends on the respective current \gls{ev} and \gls{tv} lanes \gls{evlane} and \gls{tvlane}. In general, the plan is to overtake \glspl{tv} on the left side. If the vehicles are in the same lane (case D), an inclined constraint is applied allowing the \gls{ev} to switch to a lane left of the \gls{tv}. The constraint slope is bounded to a horizontal or vertical line. If a \gls{tv} is on an adjacent lane to the left of the \gls{ev}, i.e., $\gls{evlane} + w_\text{lane} = \gls{tvlane}$ with lane width $w_\text{lane}$, an inclined (bounded) constraint is also applied (case E).
\begin{itemize}
\item[D)] $\gls{evlane} = \gls{tvlane}$, $-\lr{\gls{dxEVTV}}  \leq \gls{fclose}$: inclined constraint connecting \gls{ev} and \gls{tv} 
\item[E)] $\gls{evlane} + w_\text{lane} = \gls{tvlane}$, $\lr{\gls{xtv0} - s_0}  \leq \gls{fclose}$: inclined constraint connecting \gls{ev} and \gls{tv} 
\end{itemize}
If the \gls{tv} is on a lane to the right of the \gls{ev}, i.e., $\gls{evlane} > \gls{tvlane}$ (case F), a horizontal constraint is employed at the \gls{tv} safety rectangle. If the \gls{tv} is in front of the \gls{tv} and at least two lanes to the left of the \gls{ev}, i.e., $\gls{evlane} + 2w_\text{lane} \leq \gls{tvlane}$ (case G), or behind the \gls{ev} and at least one lane to the left of the \gls{tv} (case H), i.e., $\gls{evlane} + w_\text{lane} \leq \gls{tvlane}$, again a horizontal constraint is employed.
\begin{itemize}
\item[F)] $\gls{evlane} > \gls{tvlane}$, $| \gls{dxEVTV}|  \leq \gls{fclose}$: horizontal constraint left of the \gls{tv} 
\item[G)] $\gls{evlane} + 2w_\text{lane} \leq \gls{tvlane}$, $-\lr{\gls{dxEVTV}}  \leq \gls{fclose}$: horizontal constraint right of the \gls{tv} 
\item[H)] $\gls{evlane} + w_\text{lane} \leq \gls{tvlane}$, $\lr{\gls{dxEVTV}} \leq \gls{fclose}$: horizontal constraint right of the \gls{tv} 
\end{itemize}
Finally, if the \gls{ev} is positioned in front of the \gls{tv} on the same lane, it is assumed that the \gls{tv} keeps its distance to the \gls{ev}. Therefore, no constraint is generated (case J).
\begin{itemize}
\item[J)] $\gls{evlane} = \gls{tvlane}$, $\lr{\gls{dxEVTV}} \leq \gls{fclose}$: no constraint 
\end{itemize}

The presented cases are now used to formulate safety constraints in the \gls{SMPC} \gls{ocp}.

\subsubsection{\gls{SMPC} Optimal Control Problem}
With the definition of the safety constraints, the deterministic optimal control problem replacing the \gls{SMPC} problem is given by
\begin{IEEEeqnarray}{rl}
\IEEEyesnumber \label{eq:smpc}
	V^* &= \min_{\bm{U}} \sum_{k=1}^{\gls{Nsmpc}} \norm{\Delta \bm{\xi}_k}_{\bm{Q}} + \norm{\bm{u}_{k-1}}_{\bm{R}} + \norm{\Delta\bm{u}_{k-1}}_{\bm{S}}
	 \IEEEyessubnumber \IEEEeqnarraynumspace \\
	\text{s.t. } & \bm{\xi}_{k+1} = \bm{f}^\text{d}\lr{\bm{\xi}_0, \bm{\xi}_k, \bm{u}_k} \IEEEyessubnumber \IEEEeqnarraynumspace \label{eq:smpc_dynamics}\\
	& \bm{\xi}^\text{TV}_{k+1} = \bm{A} \gls{xitv}+\bm{B} \gls{tutv} \IEEEyessubnumber \IEEEeqnarraynumspace \label{eq:smpc_tv_dynamics}\\
	& \bm{\xi}_k \in \gls{XX} ~~~~\forall k \in \{1,\dots,\gls{Nsmpc}\} , \IEEEyessubnumber\\
	& \bm{u}_k \in \mathcal{U} ~~~\forall k \in \{0,\dots,\gls{Nsmpc}-1\} ,\IEEEyessubnumber \IEEEeqnarraynumspace\\
	& 0 \geq q_y\lr{\bm{\xi}_0, \gls{xitv}} d_k + q_x\lr{\bm{\xi}_0, \gls{xitv}}  s_k + q_t\lr{\bm{\xi}_0, \gls{xitv}} \IEEEeqnarraynumspace \nonumber \\
	&~~\forall k \in \{0,\dots,\gls{Nsmpc}\} \IEEEyessubnumber  \label{eq:smpc_cc}
\end{IEEEeqnarray}
where 
$\Delta \bm{\xi}_k = \bm{\xi}_k - \bm{\xi}_{k, \text{ref}}$ with \gls{ev} reference state $\bm{\xi}_{k, \text{ref}}$ and the linear function $\bm{f}^\text{d}$ according to \eqref{eq:d_evsys}. For the input difference $\Delta \bm{u}$, $\bm{u}_{-1}$ is set to the applied input of the previous time step. The cost function sum limits are shifted to include a terminal cost for $\bm{\xi}_N$. The weighting matrices are given by $\bm{Q}$, $\bm{S}$, and $\bm{R}$. We consider constant input constraints $\mathcal{U}$ according to \eqref{eq:constraints_input} and state constraints $\bm{\Xi}$ according to \eqref{eq:constraints_state}.

The resulting \gls{SMPC} optimal control problem \eqref{eq:smpc} is a quadratic program with linear constraints, accounting for uncertainty with the chance constraint reformulation described in Section~\ref{sec:cc}. This optimal control problem can be solved efficiently, where the major calculation steps to obtain the linear constraints \eqref{eq:smpc_cc} are performed before the optimization starts.

\subsection{Failsafe Trajectory Planning}
\label{sec:detailsFST}

While the \gls{SMPC} algorithm only accounts for part of the \gls{tv} uncertainty in order to plan an optimistic trajectory, the backup \gls{FT} algorithm needs to consider worst-case uncertainty realizations. This is achieved based on reachability analysis. First, the worst-case \gls{tv} occupancy prediction is determined. Then, linear constraints are generated based on the \gls{tv} predictions. Eventually, given a safe invariant terminal set, the \gls{FT} optimal control problem is solved.

\subsubsection{Target Vehicle Occupancy Prediction}
\label{sec:occpred}
Similar to the \gls{SMPC} algorithm, a rectangular safety area surrounding each \gls{tv} is defined. However, for the \gls{FT} the maximal reachable area needs to be determined. First, it is necessary to define certain traffic rules to which the \gls{tv} adheres, according to Assumption~\ref{ass:trafficrules}:
\begin{itemize}
\item Road boundaries apply.
\item Negative velocities are forbidden.
\item Collisions with vehicles directly in front of the \gls{tv} (in the same lane) must be avoided.
\item Only a single lane change is allowed (within the prediction horizon).
\item No lane change is allowed if the \gls{tv} velocity is below a predefined minimal lane change velocity $v_\text{LC,min}$.
\item No lane change is allowed if the distance to a vehicle on the new lane becomes too small.
\end{itemize}

As linear dynamics are assumed for the \gls{tv} motion, the minimal and maximal possible \gls{tv} inputs are used to determine the maximal reachable set, inspired by \cite{Althoff2013,Magdici2016,PekAlthoff2018}.

The set of all possible locations reachable for a \gls{tv} at prediction step $k$ is denoted by the reachable set \gls{Rtv}, including the \gls{tv} and shape. While referring to \gls{Rtv} as the reachable set of the \gls{tv}, we additionally enlarge this set accounting for the \gls{ev} shape. This is necessary as the set \gls{Rtv} is later used to avoid collisions by keeping the \gls{ev} center outside of \gls{Rtv}. 
Given the solution $\bm{\zeta}\lr{\hat{\bm{\xi}}^\text{TV}_0,\bm{U}}$ to the \gls{tv} dynamics \eqref{eq:tvmodel} starting at the initial state $\hat{\bm{\xi}}^\text{TV}_0$ applying an input sequence $\bm{U}$
, we define the reachable set 
\begin{IEEEeqnarray}{rl}
	& \gls{Rtv} = \left\lbrace \bm{\zeta}\lr{\hat{\bm{\xi}}^\text{TV}_0,\bm{U}}~ \Big| \right. \nonumber  \\
	&  ~~~~\left. \bm{U}(i) \in \mathcal{U}^\text{TV}~\forall i \in \{0, ...,k-1\},~~\hat{\bm{\xi}}^\text{TV}_0 \in \bm{\Xi}^\text{TV}_0 \right\rbrace.  \IEEEeqnarraynumspace  \IEEEyesnumber \label{eq:reachset} 
\end{IEEEeqnarray}
The initial state for the reachable set \gls{Rtv} is not the \gls{tv} state \gls{xitv0}, but depends on the sensor uncertainty as well as the \gls{tv} and \gls{ev} shape. This initial set is given by 
\begin{IEEEeqnarray}{rl}
	& \bm{\Xi}^\text{TV}_0 = \left\lbrace \hat{\bm{\xi}}^\text{TV}_0~  \Big| \right. \nonumber  \\
	& ~~~~\gls{xitv0} + \min\{\gls{wsens}\} - [l_\text{veh},0,w_\text{veh},0]^\top  \leq  \hat{\bm{\xi}}^\text{TV}_0, \nonumber \\
	& ~~~~\left. \hat{\bm{\xi}}^\text{TV}_0  \leq \gls{xitv0} + \max\{\gls{wsens}\} + [l_\text{veh},0,w_\text{veh},0]^\top	 \right\rbrace.  \IEEEeqnarraynumspace  \IEEEyesnumber \label{eq:xi0TV}
\end{IEEEeqnarray}
The initial set $\bm{\Xi}^\text{TV}_0$ in \eqref{eq:xi0TV} can be interpreted as a rectangle defined by two corners. As we assume a linear \gls{tv} prediction model, the reachable set \gls{Rtv} is calculated for prediction steps $k > 0$ by applying the maximal and minimal inputs $\bm{u}^\text{TV} \in \mathcal{U}^\text{TV}$, while adhering to traffic rules.

The reachable set is only calculated at discrete time steps. In order to account for a continuous system, the final reachable set \gls{bRtv} is obtained by building a rectangular convex hull, covering two consecutive prediction steps, i.e., 
\begin{IEEEeqnarray}{c}
\gls{bRtv} = \mathrm{conv} \left\lbrace \mathcal{R}^{\text{TV}}_{k-1}, \mathcal{R}^{\text{TV}}_k  \right\rbrace
\end{IEEEeqnarray}
where $\mathrm{conv}$ denotes the convex hull operation.

A special case is considered if the \gls{tv} is located behind the \gls{ev} in the same lane. The \gls{tv} must not collide with the \gls{ev} in the same lane, however, the \gls{tv} is allowed to switch lanes in order to pass the \gls{ev}. Here, this is accounted for by treating this special case in the following way. Three placeholder \gls{tv} reachable sets describe the possible \gls{tv} behavior. The first placeholder \gls{tv} reachable set is based in the \gls{ev} lane such that collisions with the \gls{ev} are avoided. The other two placeholder \gls{tv} reachable sets cover the admissible adjacent lanes left and right of the \gls{ev}, representing the reachable sets for a potential \gls{tv} lane change. Figure~\ref{fig:ft_occupancy} shows an example of areas possibly occupied by \glspl{tv} for multiple prediction steps.
\begin{figure}
\centering
\includegraphics[width = \columnwidth]{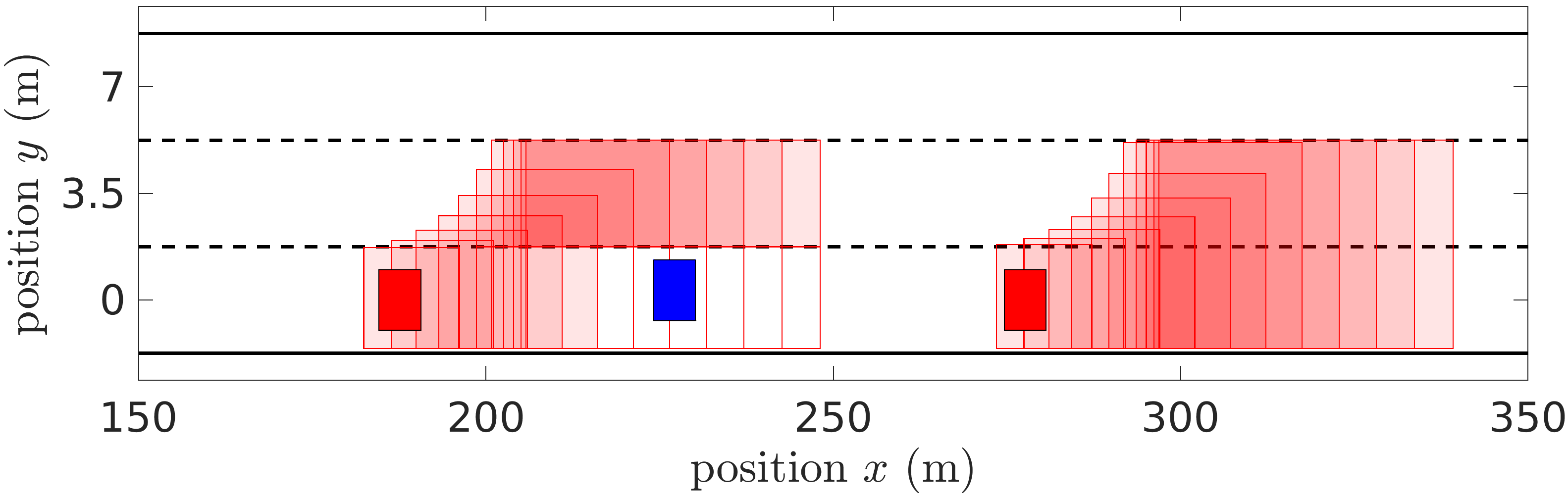}
\caption{Target vehicle occupancy sets for multiple prediction steps with the \gls{ev} in blue and \glspl{tv} in red. Areas shaded in red depict areas possibly occupied by a \gls{tv}. As the \glspl{tv} must avoid collision with vehicles in front, the left \gls{tv} cannot occupy the area in the \gls{ev} lane close to the \gls{ev}. A \gls{tv} double lane change is not considered.}
\label{fig:ft_occupancy}
\end{figure}

\subsubsection{\gls{FT} Constraint Generation}
\label{sec:constraintgeneration_ft}
Once the reachable sets \gls{bRtv} for each \gls{tv} are determined, linear constraints are generated. We again consider different cases regarding varying \gls{ev} and \gls{tv} positions. The cases are similar to those of Section~\ref{sec:constraintgeneration}, with a few variations. \gls{FT} cases are denoted with an asterisk. Same letters indicate similar \gls{SMPC} and \gls{FT} case types. The major \gls{FT} cases are illustrated in Figure~\ref{fig:ft_constraintcases}. Again, a complete overview of the \gls{FT} cases, requirements, and constraint parameters $q_x$, $q_y$, $q_t$ is found in Appendix~\ref{sec:appendixb}.
\begin{figure}
\centering
\includegraphics[width = 0.9\columnwidth,trim=0 2 0 1]{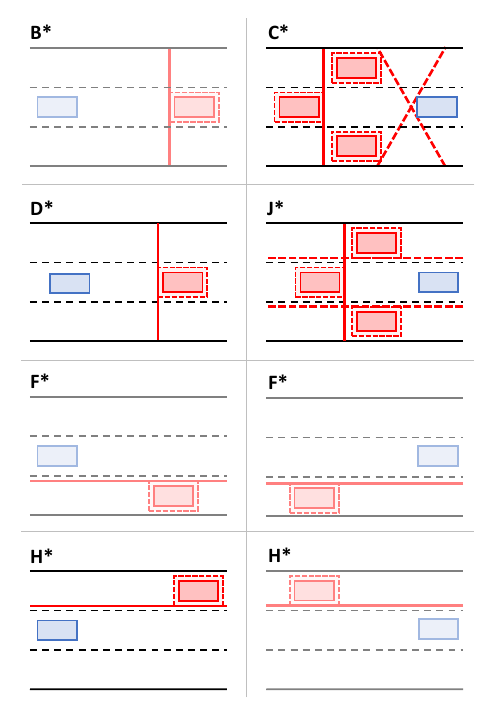}
\caption{Selected constraint generation cases for \gls{FT}. Cases equal to \gls{SMPC} constraint generation cases have lighter colors. Driving direction is from left to right. The \gls{ev} and \gls{tv} are shown in blue and red, respectively. The dashed red line represents the sfaety area around the \gls{tv}.}
\label{fig:ft_constraintcases}
\end{figure}

As for \gls{SMPC}, we initially consider the case where the \gls{tv} is farther away from the \gls{ev} and therefore no constraint is generated (case A${}^*$). 
\begin{itemize}
\item[A${}^*$)] $|\gls{dxEVTV}|  \geq r_\text{lar}$: no constraint
\end{itemize}
Then, all remaining cases are considered with a distance $|\gls{dxEVTV}|  < r_\text{lar}$. First, cases are addressed where the \gls{ev} is behind the \gls{tv} (case B${}^*$) with a distance $-\lr{\gls{dxEVTV}}  > \gls{fcloseft}$, which leads to a vertical constraint behind the \gls{tv}. In contrast to case D of \gls{SMPC}, in \gls{FT} overtaking is not actively planned. Therefore, if the \gls{ev} is in the same lane behind the \gls{tv}, a vertical constraint is employed (case D${}^*$), similar to case B${}^*$. 
\begin{itemize}
\item[B${}^*$)] $-\lr{\gls{dxEVTV}}  > \gls{fcloseft}$: vertical constraint behind \gls{tv}
\item[D${}^*$)] $\gls{evlane} = \gls{tvlane}$, $-\lr{\gls{dxEVTV}}  \leq \gls{fcloseft}$: vertical constraint behind \gls{tv} 
\end{itemize}
We now consider further cases where the \gls{ev} and \gls{tv} are close to each other, i.e., $| \gls{dxEVTV}|  \leq \gls{fcloseft}$. In case the \gls{ev} is located on a different lane than the \gls{tv}, horizontal constraints are generated at the \gls{tv}, independent of the relative longitudinal positioning (cases F${}^*$ and H${}^*$).
\begin{itemize}
\item[F${}^*$)] $\gls{evlane} > \gls{tvlane}$, $| \gls{dxEVTV}|  \leq \gls{fcloseft}$: horizontal constraint left of the \gls{tv}
\item[H${}^*$)] $\gls{evlane} < \gls{tvlane}$, $| \gls{dxEVTV}|  \leq \gls{fcloseft}$: horizontal constraint right of the \gls{tv}
\end{itemize}
The last cases required focuses on the \gls{ev} located in front of the \gls{tv} in the same lane. In \gls{SMPC}, no constraint was generated (case J). However, even though the \gls{tv} is required to avoid collisions with another vehicle in front, for safety reasons we account for \glspl{tv} located behind the \gls{ev} in \gls{FT}. As stated in Section~\ref{sec:occpred}, in these cases up to three placeholder \gls{tv} predictions are made, accounting for possible lane changes and overtaking maneuvers by the \gls{tv}. If the distance between the \gls{ev} and \gls{tv} is smaller, i.e., $\lr{\gls{dxEVTV}} \leq \gls{fcloseft}$, a vertical constraint is employed for the \gls{tv} prediction in the \gls{ev} lane, while horizontal constraints are formed for \gls{tv} predictions in adjacent lanes  (case J${}^*$). If the distance is larger, i.e., $\lr{\gls{dxEVTV}} > \gls{fcloseft}$, inclined constraints are generated for \gls{tv} predictions in lanes next to the \gls{ev}, allowing more \gls{ev} movement (case C${}^*$).
\begin{itemize}
\item[J${}^*$)] $\gls{evlane} = \gls{tvlane}$, $\lr{\gls{dxEVTV}} \leq \gls{fcloseft}$: (maximum of) three \gls{tv} predictions with vertical and horizontal constraints
\item[C${}^*$)] $\gls{evlane} = \gls{tvlane}$, $\lr{\gls{dxEVTV}} > \gls{fcloseft}$: (maximum of) three \gls{tv} predictions with vertical and inclined constraints
\end{itemize}

Overall, the constraints generated for \gls{FT} are more conservative than for \gls{SMPC}. This is due to the \gls{FT} aim of finding a trajectory which ends in a safe state. This would be complicated by incentivizing \gls{FT} to plan overtaking maneuvers. Details on finding a safe terminal state for the \gls{FT} optimal control problem are given in the following.

\subsubsection{Safe Invariant Terminal Set}
In addition to the regular safety constraints, a safe invariant terminal set is required to ensure safe \gls{ev} inputs after the finite \gls{MPC} prediction horizon. If, repeatedly, no solution to the \gls{SMPC} and \gls{FT} optimal control problems is found, all safe backup inputs will be eventually applied. The \gls{FT} inputs are designed in such a way that they remain safe over the prediction horizon. However, after $N_\text{FTP}$ inputs are applied and no new \gls{FT} solution is obtained, an emergency strategy has to be applied to come to a standstill. This is achieved by braking, while maintaining a constant steering angle $\delta = 0$, according to \eqref{eq:usafe1} and \eqref{eq:usafe2}. Therefore, the terminal state of the \gls{FT} optimal control problem needs to fulfill certain requirements. First, the vehicle orientation must be aligned with the road, i.e., $\phi = 0$. This guarantees that braking and a constant steering angle $\delta = 0$ keep the \gls{ev} within its current lane. Second, the distance to a \gls{tv} in front of the \gls{ev} must be large enough that no collision occurs if both vehicles initiate maximal deceleration. This is accounted for by
\begin{IEEEeqnarray}{rl}
\IEEEyesnumber \label{eq:termcon1}
s_N &\leq x_N^\text{TV} - \Delta s_{\gls{Nft},\text{min}} \IEEEyessubnumber \\
v_N &\leq v_{\gls{Nft},\text{max}} \IEEEyessubnumber
\end{IEEEeqnarray}
with minimal terminal safety distance $\Delta s_{\gls{Nft},\text{min}}$ and maximal terminal safety velocity 
\begin{IEEEeqnarray}{c}
v_{\gls{Nft},\text{max}} = v^{\text{TV}}_{\gls{Nft}, \text{min}} - \sqrt{2 \Delta s_{\gls{Nft},\text{min}} a_{x,\text{min}}} \label{eq:termcon2} 
\end{IEEEeqnarray}
where $v^{\text{TV}}_{\gls{Nft}, \text{min}}$ is the lowest predicted longitudinal \gls{tv} velocity. Both \eqref{eq:termcon1} and \eqref{eq:termcon2} combined ensure that the minimal terminal safety distance $\Delta s_{\gls{Nft},\text{min}}$ is large enough such that, given a maximal \gls{ev} velocity $v_{\gls{Nft},\text{max}} $, maximal deceleration of the \gls{ev} guarantees collision avoidance for $k > N_\text{FTP}$. This less intuitive terminal constraint again has the advantage of yielding linear constraints.

\subsubsection{\gls{FT} Optimal Control Problem}
An optimal control problem with a similar structure compared to \eqref{eq:smpc} is applied for the \gls{FT}, yielding
\begin{IEEEeqnarray}{rl}
\IEEEyesnumber \label{eq:ft}
	V^* &= \min_{\bm{U}} \sum_{k=1}^{\gls{Nft}} \norm{\Delta \bm{\xi}_k}_{\bm{Q}} + \norm{\bm{u}_{k-1}}_{\bm{R}} + \norm{\Delta\bm{u}_{k-1}}_{\bm{S}}
	\IEEEyessubnumber \IEEEeqnarraynumspace \\
	\text{s.t. } & \bm{\xi}_{k+1} = \bm{f}^\text{d}\lr{\bm{\xi}_0, \bm{\xi}_k, \bm{u}_k} \IEEEyessubnumber \IEEEeqnarraynumspace \label{eq:ft_dynamics}\\
	& \bm{\xi}_k \in \gls{XX} ~~~~\forall k \in \{1,\dots,\gls{Nft}\} , \IEEEyessubnumber\\
	& \bm{u}_k \in \mathcal{U} ~~~\forall k \in \{0,\dots,\gls{Nft}-1\} ,\IEEEyessubnumber \IEEEeqnarraynumspace\\
	& 0 \geq q_y\lr{\bm{\xi}_0, \gls{bRtv}} d_k + q_x\lr{\bm{\xi}_0, \gls{bRtv}}  s_k \nonumber  \IEEEeqnarraynumspace  \\
	&~~~~~~+ q_t\lr{\bm{\xi}_0, \gls{bRtv}}~~~~\forall k \in \{0,\dots,\gls{Nft}\} \IEEEyessubnumber \label{eq:ft_cc}\\ 
	& s_N \leq x_N^\text{TV} - \Delta s_{\gls{Nft},\text{min}} \IEEEyessubnumber \\ 
	&v_N \leq v_{\gls{Nft},\text{max}} \IEEEyessubnumber
\end{IEEEeqnarray}
with the linear function $\bm{f}^\text{d}$ according to \eqref{eq:d_evsys}. The probabilistic constraint \eqref{eq:smpc_cc} is now changed to constraint \eqref{eq:ft_cc}, accounting for the worst-case \gls{tv} uncertainty realizations. Similar to the \gls{SMPC} optimal control problem, \eqref{eq:ft} is a quadratic program with linear constraints, which can be solved efficiently.

\section{Results}
\label{sec:results}

We evaluate the proposed \gls{SMPCFT} algorithm in different settings. In the following, the simulation setup is introduced first. Then, \gls{SMPCFT} is analyzed and compared to an \gls{SMPC} approach and an \gls{FT} approach in two scenarios.

\subsection{Simulation Setup}

In this simulation section we analyze the scenario illustrated in Figure~\ref{fig:scenario_setup}.
\begin{figure}
\centering
\includegraphics[width = \columnwidth]{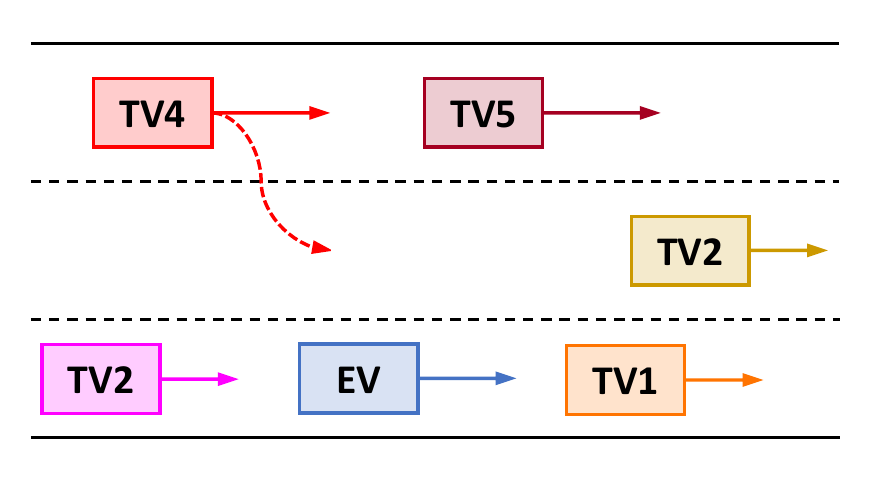}
\caption{Setup for both investigated scenarios (regular and emergency scenario).}
\label{fig:scenario_setup}
\end{figure}
The \gls{ev} is located on the right lane on a three-lane highway. We consider five \glspl{tv} surrounding the \gls{ev} on the highway. The goal for the \gls{ev} is to safely and efficiently maneuver through traffic. The specific aims are to avoid collisions while maintaining a velocity close to a chosen reference velocity. 

Given the initial scenario setup, we consider two different scenarios:
\begin{itemize}
\item[1)] Regular scenario: All \glspl{tv} keep their initial velocities and lanes.
\item[2)] Emergency scenario: One of the \glspl{tv} (TV5) performs an emergency braking maneuver. This causes TV4 to avoid TV5 by moving to the center lane. This is followed by a soft braking maneuver of TV1 to account for possible hazards. Eventually, TV4 moves to the left lane again to pass TV2.
\end{itemize}
The first scenario represents a regular scenario with no unexpected \gls{tv} behavior. The second scenario covers a rare case, where a series of unexpected \gls{tv} actions results in a challenging situation for the autonomous \gls{ev}.

The simulations are carried out in Matlab using the \textit{fmincon} solver on a desktop computer with an AMD Ryzen 7 1700X processor. 
The algorithms are based on the NMPC toolbox \cite{GruenePannek2017}. In the following, setup parameters are introduced which remain constant throughout the different simulations. All quantities are given in SI units. Units are omitted if clear by context.

The road consists of three lanes with lane width $w_\text{lane} = \SI{3.5}{\metre}$. All vehicles have the same rectangular shape with length $l_\text{veh} = \SI{5}{\metre}$ and width $w_\text{veh} = \SI{2}{\metre}$. 

All MPC algorithms use a sampling time $T = \SI{0.2}{\second}$. The MPC optimization horizon is chosen to be the same for each MPC algorithm, i.e., the \gls{SMPC} horizon is $\gls{Nsmpc} = 10$ and the \gls{FT} horizon is $\gls{Nft} = 10$. The linearized, time-discrete \gls{ev} prediction model and constraints follow \eqref{eq:d_evsys}-\eqref{eq:constraints_state} with $l_\text{f} = l_\text{r} = 2$. The \gls{tv} prediction model is given by \eqref{eq:tvmodel}-\eqref{eq:tv_feedback}. The maximum and minimum acceleration and steering angle for the \gls{ev} are $\bm{u}_\text{max} = [5, 0.2]^\top$, $\bm{u}_\text{min} = [-9, -0.2]^\top$, respectively. For the \gls{tv} with a point-mass prediction model the maximal and minimal accelerations in $x$- and $y$-direction are $\bm{u}^{\text{TV}}_\text{max} = [5, 0.4]^\top$, $\bm{u}^{\text{TV}}_\text{min} = [-9, -0.4]^\top$. The \gls{ev} input rate constraints are $-\Delta\bm{u}_\text{min} = \Delta\bm{u}_\text{max} = [9,0.4]^\top$. The lane boundaries follow from the lane width and vehicle width, i.e., the vehicle shape must remain within the road. The maximal velocity is $v_\text{max} = \SI{35}{\metre\per\second}$. The elements of the assumed \gls{tv} feedback controller matrix $\bm{K}$ of \eqref{eq:tv_feedback} are $k_{12} = -0.55$, $k_{21} = -0.63$, and $k_{22} = -1.15$. The \gls{tv} uncertainty covariance matrix is $\gls{wcov_t} = \text{diag}(0.44, 0.09)$ and the sensor noise follows $\gls{wsenscov} = \text{diag}(0.25, 0.25, 0.028, 0.028)$ and $|w^\text{sens}_{0,x}| \leq 0.25$, $|w^\text{sens}_{0,v_x}| \leq 0.25$, $|w^\text{sens}_{0,y}| \leq 0.028$, $|w^\text{sens}_{0,v_y}| \leq 0.028$. 

The safety parameters of Section~\ref{sec:details} for the generation of \gls{tv} safety rectangles and to distinguish between the cases are $\gls{msafe} = \SI{0.01}{\metre}$, $d_\text{lar} = \SI{200}{\metre}$, $\gls{dclose} = \SI{90}{\metre}$, $\gls{dcloseft} = \max\{\SI{10}{\metre},|v_0 N_\text{FTP}T|\}$, $v_{\text{LC, min}} = \SI{10}{\metre\per\second}$, and $\Delta s_{N,\text{min}} = \SI{22.5}{\metre}$.

The weighting matrices of the \gls{SMPC} and \gls{FT} optimal control problems \eqref{eq:smpc} and \eqref{eq:ft} are $\bm{Q} = \text{diag}(0, 0.25, 0.2, 10)$, $\bm{R} = \text{diag}(0.33, 5)$, and $\bm{S} = \text{diag}(0.33, 15)$.

In all scenarios the initial \gls{ev} reference is set to $[d_\text{ref}, \phi_\text{ref}, v_\text{ref}] = [0, 0, 27]$. While the reference orientation and velocity remain constant throughout the simulation, the reference lane is adapted depending on the current \gls{ev} lateral position. The \gls{ev} reference for the lateral position is always set to the current \gls{ev} lane center.

Given this simulation setup, we now investigate the individual scenarios and analyze the proposed \gls{SMPCFT} method.

\subsection{Regular Highway Scenario}
We first analyze a regular highway scenario. The initial states of the vehicles are given in Table~\ref{tab:rs_setup}. The five \glspl{tv} shown in Figure~\ref{fig:scenario_setup} all maintain their initial velocities and lanes, therefore, $\gls{xitvref} = \gls{xitv0}$. The initial longitudinal \gls{tv} position is irrelevant in the computation of \eqref{eq:tv_feedback}.
\begin{table}
\begin{center}
\caption{Initial Vehicle States}
\label{tab:rs_setup}
\begin{tabular}{c  c}
\toprule
vehicle & initial state\\
\midrule
\gls{ev} & $[0,0,0,27]^\top$\\
\addlinespace
\gls{tv}1 & $[70,20,0,0]^\top$\\
\addlinespace
\gls{tv}2 & $[125,20,3.5,0]^\top$\\
\addlinespace
\gls{tv}3 & $[-245,20,0,0]^\top$\\
\addlinespace
\gls{tv}4 & $[-35,32,7,0]^\top$\\
\addlinespace
\gls{tv}5 & $[40,32,7,0]^\top$\\
\addlinespace
\bottomrule
\end{tabular}
\end{center}
\end{table}

In the following the \gls{SMPCFT} solution is shown in detail and comparisons are made to an \gls{SMPC} and an \gls{FT} method.

\subsubsection{\gls{SMPCFT}}

Applying the proposed \gls{SMPCFT} approach to the regular highway scenario yields efficient \gls{ev} behavior in traffic. The \gls{SMPC} risk parameter is chosen to be $\beta = 0.8$. The inputs and important states are shown in Figure~\ref{fig:reg_smpcft_states}. The gray areas in Figure~\ref{fig:reg_smpcft_states} indicate selected sequences of the vehicle motion illustrated in Figure~\ref{fig:reg_smpc}. The vehicle shapes are only shown for every second step to avoid overlapping of vehicle shapes in Figure~\ref{fig:reg_smpc}.
\begin{figure}
\centering
	\includegraphics[width = \columnwidth]{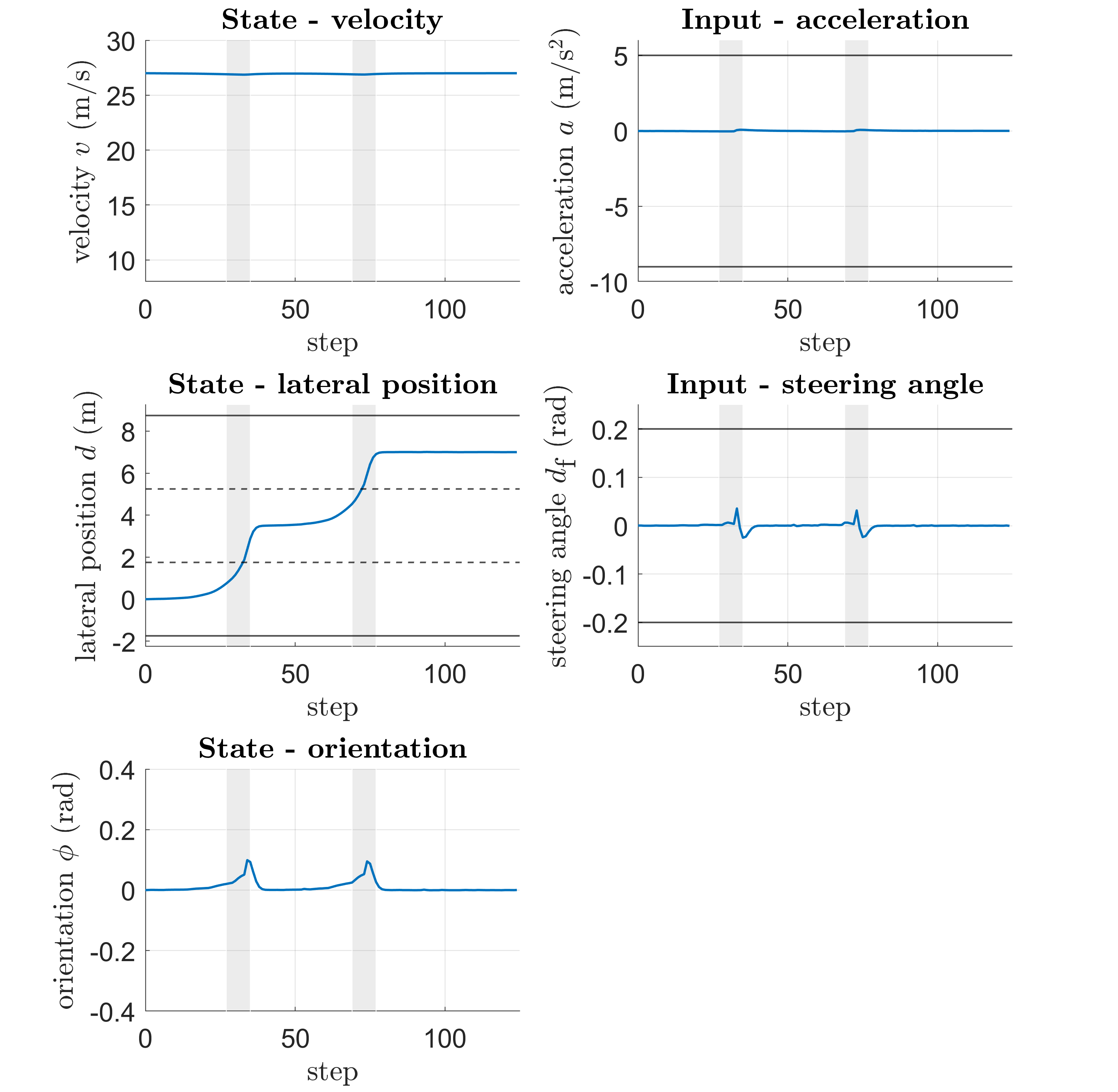}
\caption{\gls{SMPCFT} states and inputs for the regular scenario. Vehicle motion in the gray areas is illustrated in Figure~\ref{fig:reg_smpc}.}
\label{fig:reg_smpcft_states}
\end{figure}
\begin{figure}
\centering
\includegraphics[width = \columnwidth]{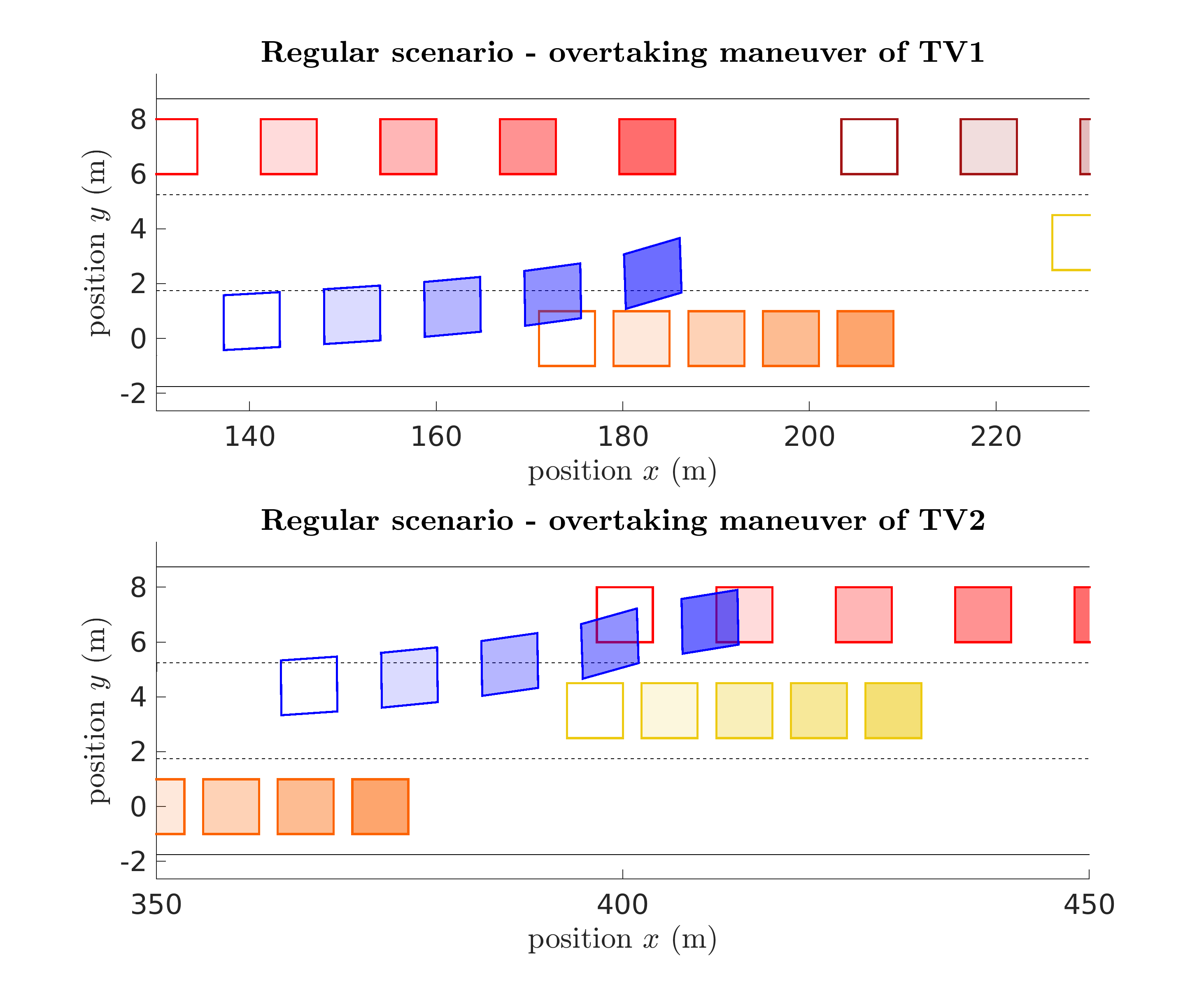}
\caption{Shots of the regular scenario with \gls{SMPCFT}. Fading boxes show past states. The \gls{ev} is shown in blue.}
\label{fig:reg_smpc}
\end{figure}
The \gls{ev} approaches \gls{tv}1 due to the velocity difference. The \gls{ev} then changes lanes to the center lane with a moderate steering angle of $\delta < 0.04$. The center lane is approached with hardly any overshoot. Once \gls{tv}2 is reached, the \gls{ev} again changes lanes. As \gls{tv}4 and \gls{tv}5 are farther ahead, the \gls{ev} smoothly moves to the left lane and eventually passes \gls{tv}2. The vehicle orientation remains at a limited level, i.e., $\phi  < 0.11$. Throughout the scenario, the \gls{ev} maintains the reference velocity, and acceleration inputs are negligible. The average computation time to solve the \gls{SMPC} and \gls{FT} optimal control problems are $\SI{0.11}{\second}$ and $\SI{0.15}{\second}$, respectively. If applied in a setting that requires online computation, it would be possible that the computation cannot be performed successfully in the designated sample time period. This case is treated as if the \gls{FT} optimal control problem is infeasible. Therefore, the previously calculated, still valid safe input sequence would be used.

We will now take a closer look at the constraints for \gls{SMPC} and \gls{FT}. \gls{SMPC} constraints for time step $h = 22$ are illustrated in Figure~\ref{fig:reg_smpc_constraints} for two prediction steps, $k = 1$ and $k = 5$. 
\begin{figure}
\centering
\includegraphics[width = \columnwidth]{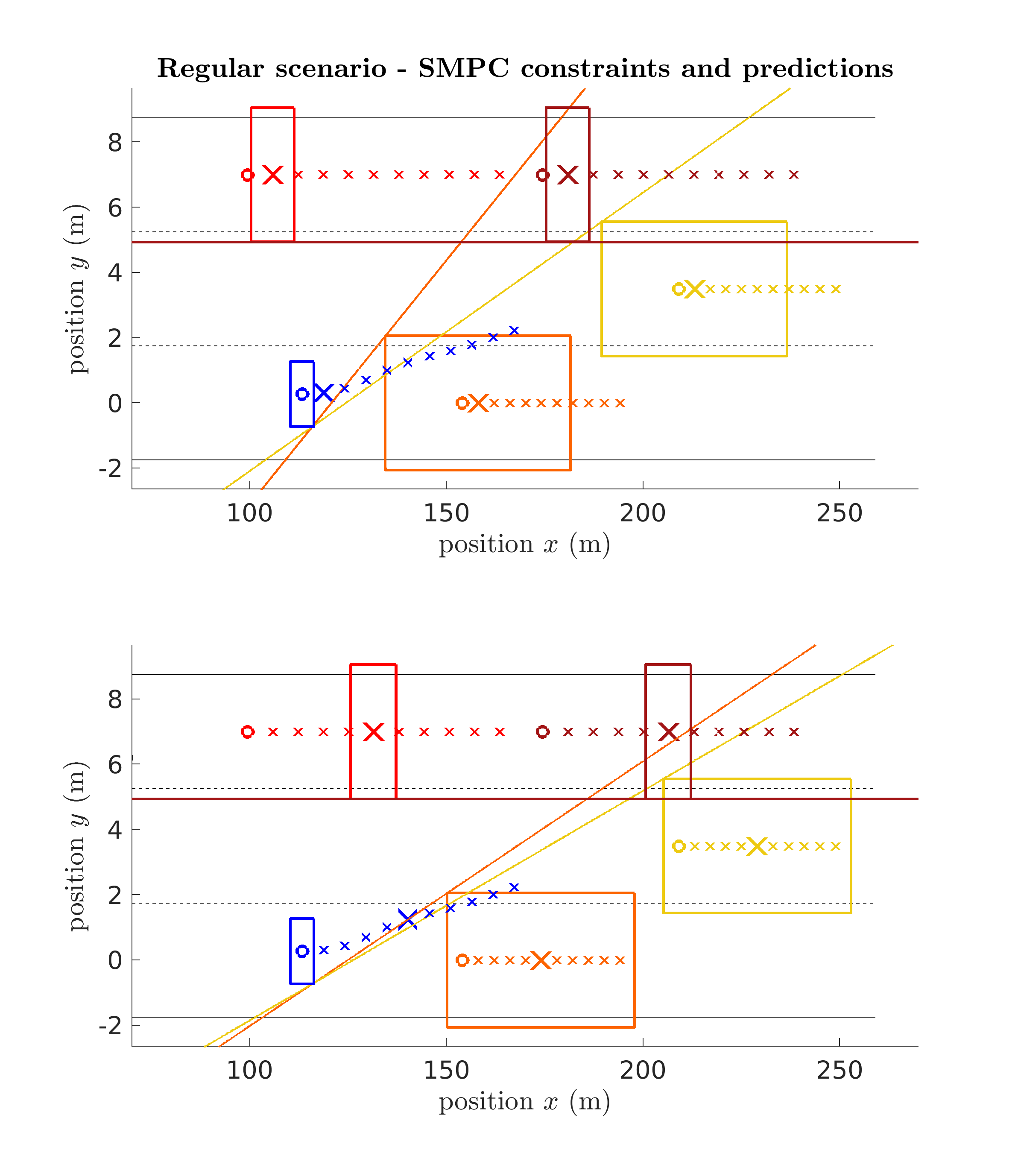}
\caption{\gls{SMPC} constraints for the regular scenario at time step $h=22$ and prediction steps $k = 1$, $k = 5$. The \gls{ev} shape and planned trajectory are shown in blue. \Glspl{tv} as well as respective safety rectangles and constraints have the same color. Initial states are marked by a circle, the prediction step displayed is indicated by a bold cross, other predicted states are represented by smaller crosses.}
\label{fig:reg_smpc_constraints}
\end{figure}
The boxes represent the \gls{ev} shape for the initial \gls{ev} state and the \gls{tv} safety rectangles at the given prediction step. The state of the current prediction step is marked with a bold cross, while other prediction steps are regular crosses. Initial vehicle states are indicated by a bold circle. For \gls{tv}1 in the same lane as the \gls{ev}, an inclined constraint is generated (case D). At each prediction step the constraint connects the initial \gls{ev} shape with the \gls{tv}1 safety rectangle at the predicted position. It is seen that the predicted \gls{SMPC} trajectory for the \gls{ev} stays above the constraint line. It is to note that only the respective predicted state must satisfy the illustrated constraint. Predicted states farther in the future satisfy respective constraints depending on a \gls{tv} safety rectangle for a predicted \gls{tv} position farther ahead. For \gls{tv}2 case E is active, also resulting in an inclined constraint. Both \gls{tv}4 and \gls{tv}5 are two lanes left of the \gls{ev}, yielding cases G and H, resulting in horizontal constraints to the right side of the \glspl{tv}. \gls{tv}3 is not shown in Figure~\ref{fig:reg_smpc_constraints} due to clarity as it is farther behind the other vehicles at this time instance.

The \gls{FT} constraints at step $h = 22$ for prediction steps, $k = 1$ and $k = 7$ are shown in Figure~\ref{fig:reg_ft_constraints}.
\begin{figure}
\centering
\includegraphics[width = \columnwidth]{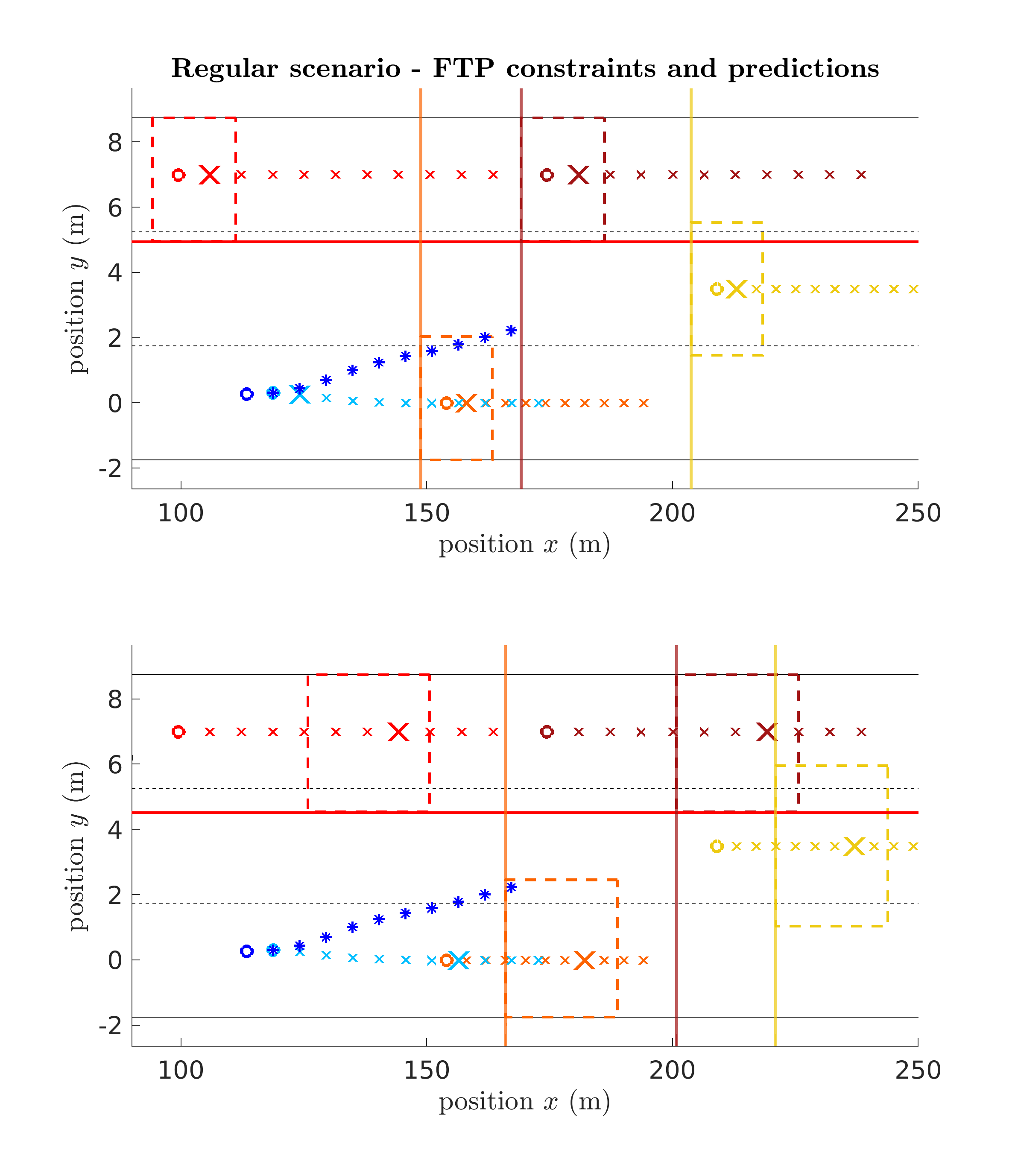}
\caption{\gls{FT} constraints for the regular scenario at time step $h=22$ and prediction steps $k = 1$, $k = 7$. The \gls{ev} is shown in blue. \Glspl{tv} as well as respective reachable sets and constraints have the same color. Initial states are marked by a circle. The initial \gls{FT} state starts after the first \gls{SMPC} input is applied. The prediction step displayed is indicated by a bold cross, other predicted states are represented by smaller crosses. For reference, the planned \gls{SMPC} trajectory is given by dark blue asterisks with a dark blue circle indicating the initial \gls{ev} state.}
\label{fig:reg_ft_constraints}
\end{figure}
The constraints are more conservative compared to the \gls{SMPC} constraints. The reachable \gls{tv} sets extend further to the back than the front, as maximal deceleration is larger than maximal acceleration. Additionally, the convex hull of reachable sets over two consecutive steps is considered. Constraints for \gls{tv}1 and \gls{tv}2 are built according to cases D${}^*$ and B${}^*$, respectively. Both constraints for \gls{tv}4 and \gls{tv}5 are generated given case H${}^*$. The planned \gls{SMPC} trajectory is shown in dark blue for reference, while the \gls{FT} trajectory is shown in light blue. While the \gls{SMPC} trajectory moves towards the center lane to overtake \gls{tv}1, the \gls{FT} trajectory finds a vehicle motion which, for the final prediction step, remains in the current lane with $\phi = 0$ and enough distance to \gls{tv}1, i.e., a safe terminal state. As the \gls{FT} optimal control problem yields a solution, i.e., a safe trajectory, the first input $\bm{u}_{\text{SMPC},0}$ of the planned \gls{SMPC} trajectory is then applied.

\subsubsection{Comparison to \gls{SMPC} and \gls{FT}}

Throughout the entire simulation both the \gls{SMPC} and \gls{FT} optimal control problems remain feasible. Therefore, the \gls{SMPC} inputs are always applied. Only applying an \gls{SMPC} algorithm without \gls{FT} would therefore yield the same result for this regular scenario. 

Unlike \gls{SMPC}, applying only \gls{FT} results in a different solution. As the constraints are more conservative compared to \gls{SMPC}, the \gls{ev} never changes lanes to overtake. As indicated by the \gls{FT} prediction in Figure~\ref{fig:reg_ft_constraints}, the \gls{FT} constraints keep the \gls{ev} in its current lane. There are situations where the \gls{FT} solution leads to a lane change, however, these situations are rare and only occur to avoid an obstacle, not to actively overtake it. 

We will use the following metric to compare the performance of \gls{SMPCFT} and \gls{FT}. Based on the cost function of the optimal control problem, the applied inputs and resulting states for the entire simulation are analyzed according to 
\begin{IEEEeqnarray}{c}
J_\text{sim} = \sum_{k=1}^{N_\text{sim}} \norm{\Delta \bm{\xi}_k}_{\bm{Q}} + \norm{\bm{u}_{k-1}}_{\bm{R}} + \norm{\Delta\bm{u}_{k-1}}_{\bm{S}} \label{eq:overallcost}
\end{IEEEeqnarray}
with the simulation steps $N_\text{sim}$.

The overall cost for \gls{SMPCFT} is $J_\text{sim} = 11.32$, while the overall \gls{FT} cost is $J_\text{sim} = 4.03\mathrm{e}{4}$. As expected, the cost comparison shows that the \gls{SMPCFT} approach yields a more efficient behavior than a safe \gls{FT} approach. In this case increased efficiency results from keeping the velocity close to the reference velocity.

\subsubsection{Risk Parameter Variation}

In the previously discussed simulation, the risk parameter was chosen to be $\beta = 0.8$. Here, we briefly analyze the effect of varying risk parameters on the \gls{ev} performance. The risk parameters analyzed range from $\beta = 0.8$ to $\beta = 0.999$. The overall simulation cost, according to \eqref{eq:overallcost}, for each risk parameter is given in Table~\ref{tab:riskparam}.
\begin{table}
\begin{center}
\caption{Risk Parameter Analysis}
\label{tab:riskparam}
\begin{tabular}{c  c  c  c  c c}
\toprule
risk parameter $\beta$ & $0.8$ & $0.9$ & $0.95$ & $0.99$ & $0.999$\\
\midrule
cost $J_\text{sim}$ & $11.21$ & $11.35$ & $11.58$ & $11.34$ & $11.31$ \\
\bottomrule
\end{tabular}
\end{center}
\end{table}
Intuitively one could expect increasing cost with higher risk parameters. The overall costs of the simulation results show that the \gls{SMPC} behavior and costs for a regular scenario are very similar. However, it can be beneficial regarding the cost to choose a larger risk parameter. While this slightly increases conservatism, inputs are changed more smoothly. In all five examples the \gls{ev} behavior is almost similar.

	\subsubsection{Varying Simulation Settings}
In the previous analysis, only one vehicle configuration is considered. In order to show that the \gls{SMPCFT} method is suitable for various scenarios, we ran 1000 simulations, each consisting of 125 simulation steps, with randomly selected initial vehicle positions and velocities for each simulation run. The \gls{ev} is located on one of the three lanes, i.e., $d_0 \in \{0, 3.5, 7\}$, with initial longitudinal position $s_0 = 0$ and velocity $v=27$. The five \glspl{tv} are randomly placed on one of the three lanes with an initial longitudinal position $x^\text{TV}_0 \in [-100, 200]$, constant velocity $v_x^\text{TV} \in [20, 32]$, and constant $v_y^\text{TV} = 0$. It is ensured that all vehicles positioned on the same lane have an initial longitudinal distance $\Delta x \geq 50$ and that \gls{tv} velocities are chosen such that \glspl{tv} do not collide with each other. 
The \gls{SMPCFT} method successfully handled all 1000 simulation runs and no collisions occurred.

\subsection{Emergency Highway Scenario}
After having shown the efficient \gls{SMPCFT} planning for a regular highway scenario, we now illustrate the safety property of the proposed algorithm in an emergency scenario. The initial vehicle states are the same as in the regular scenario, see Table~\ref{tab:rs_setup}. However, in this emergency scenario the \glspl{tv} change their velocities and lateral positions. Starting at time step $h=20$ \gls{tv}5 initiates an emergency braking maneuver with maximal deceleration until reaching a complete halt. This causes \gls{tv}4 to change lanes to the center lane in order to avoid \gls{tv}5. \gls{tv}1 reduces its velocity to $v_x^\text{TV1} = \SI{10}{\metre\per\second}$. After having passed \gls{tv}5, \gls{tv}4 moves to the left lane to then pass the slower \gls{tv}2. \gls{tv}1 also increases its velocity to $v_x^\text{TV1} = \SI{20}{\metre\per\second}$.

In the following \gls{SMPC} without \gls{FT} is analyzed first. Then the solution of the proposed \gls{SMPCFT} algorithm is presented.

\subsubsection{\gls{SMPC}}

Applying only \gls{SMPC} results in optimistic \gls{ev} trajectory planning, while not considering highly unlikely events. Even though \gls{tv}4 is slowly moving to the center lane, the \gls{ev} still moves to the center lane to overtake \gls{tv}1, as the predicted collision probability with \gls{tv}4 remains below the specified risk parameter. However, at step $h=25$ \gls{tv}4 continues to increase its lateral velocity towards the center lane and \gls{tv}1 decreases its velocity, therefore disallowing the \gls{ev} to return to the right lane. There exists no feasible solution to the \gls{SMPC} optimal control problem anymore which satisfies the chance constraint with the desired risk parameter. This causes the \gls{ev} to collide with \gls{tv}4 due to the concatenation of multiple unlikely and disregarded events. The collision sequence is illustrated in Figure~\ref{fig:emg_smpc}.
\begin{figure}
\centering
\includegraphics[width = \columnwidth]{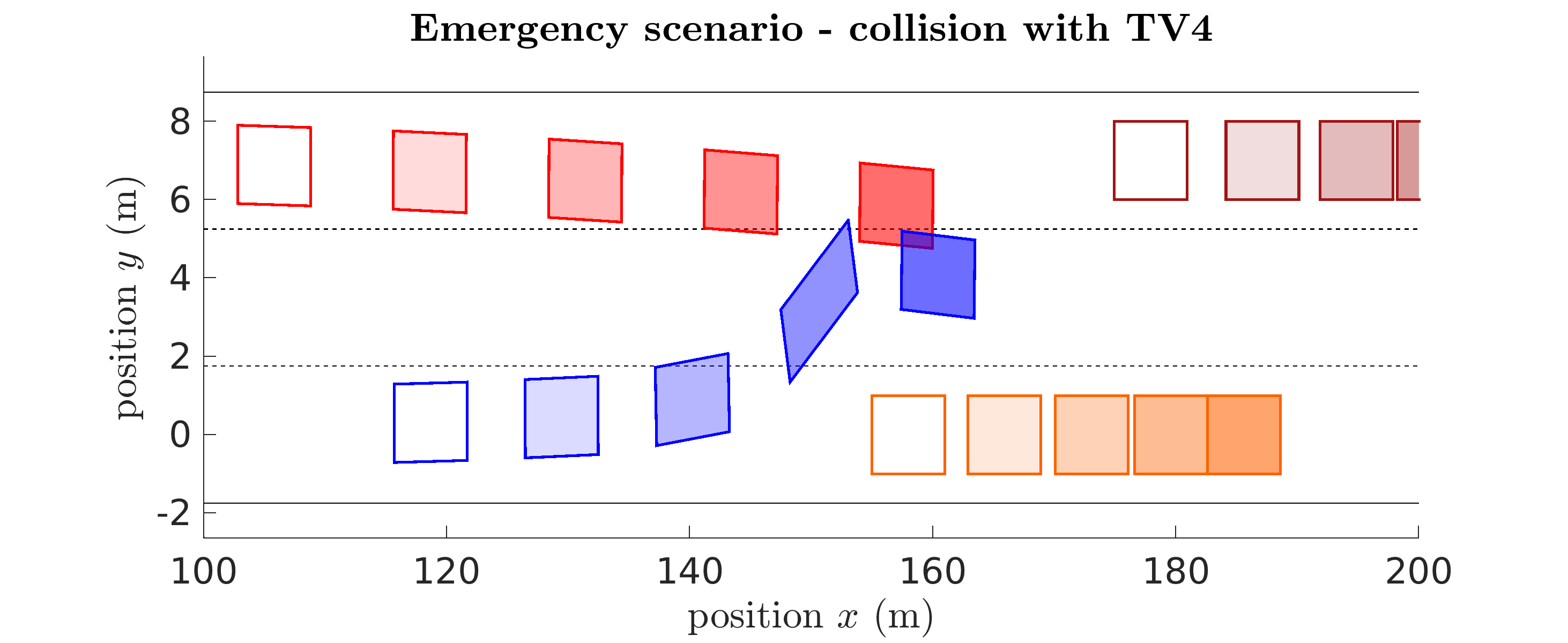}
\caption{Shots of the emergency scenario collision applying only \gls{SMPC}. Fading boxes show past states. The \gls{ev} is shown in blue.}
\label{fig:emg_smpc}
\end{figure}
While \gls{SMPC} performs well in regular scenarios without unlikely uncertainty realizations, these rare situations cause major safety issues. 

\subsubsection{\gls{SMPCFT}}

We now show how the proposed \gls{SMPCFT} method handles the emergency scenario. The \gls{ev} states and inputs are given in Figure~\ref{fig:emg_smpcft_states}. Gray areas represent sequences illustrated in Figure~\ref{fig:emg_smpcft}.
\begin{figure}
\centering
\includegraphics[width = \columnwidth]{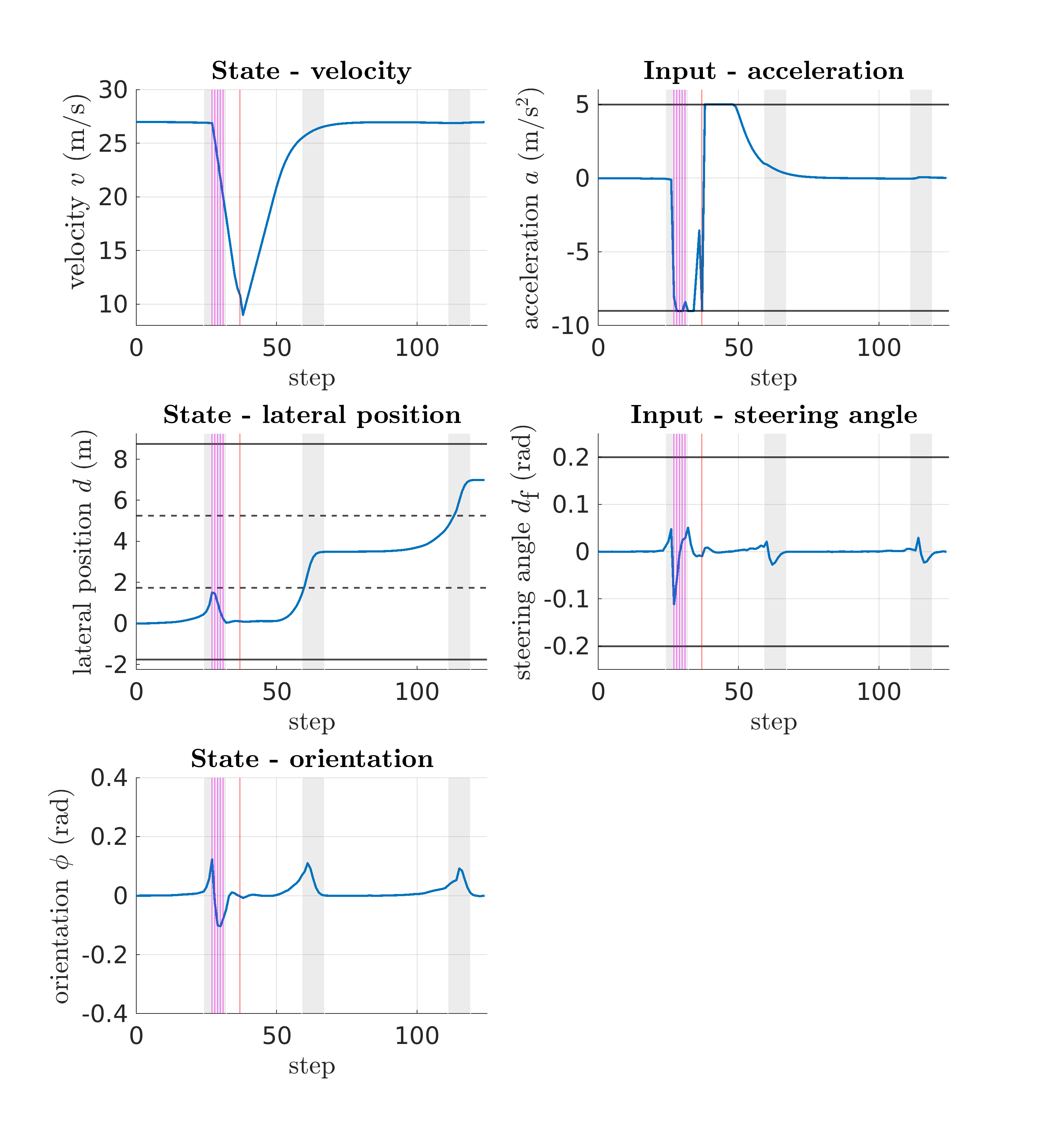}
\caption{\gls{SMPCFT} states and inputs for the emergency scenario. Pink vertical lines represent infeasible \gls{SMPC} and feasible \gls{FT} solutions, red vertical lines show infeasible \gls{FT} solutions. Vehicle motion in the gray areas is illustrated in Figure~\ref{fig:emg_smpcft}.}
\label{fig:emg_smpcft_states}
\end{figure}
\begin{figure}
\centering
\includegraphics[width = \columnwidth]{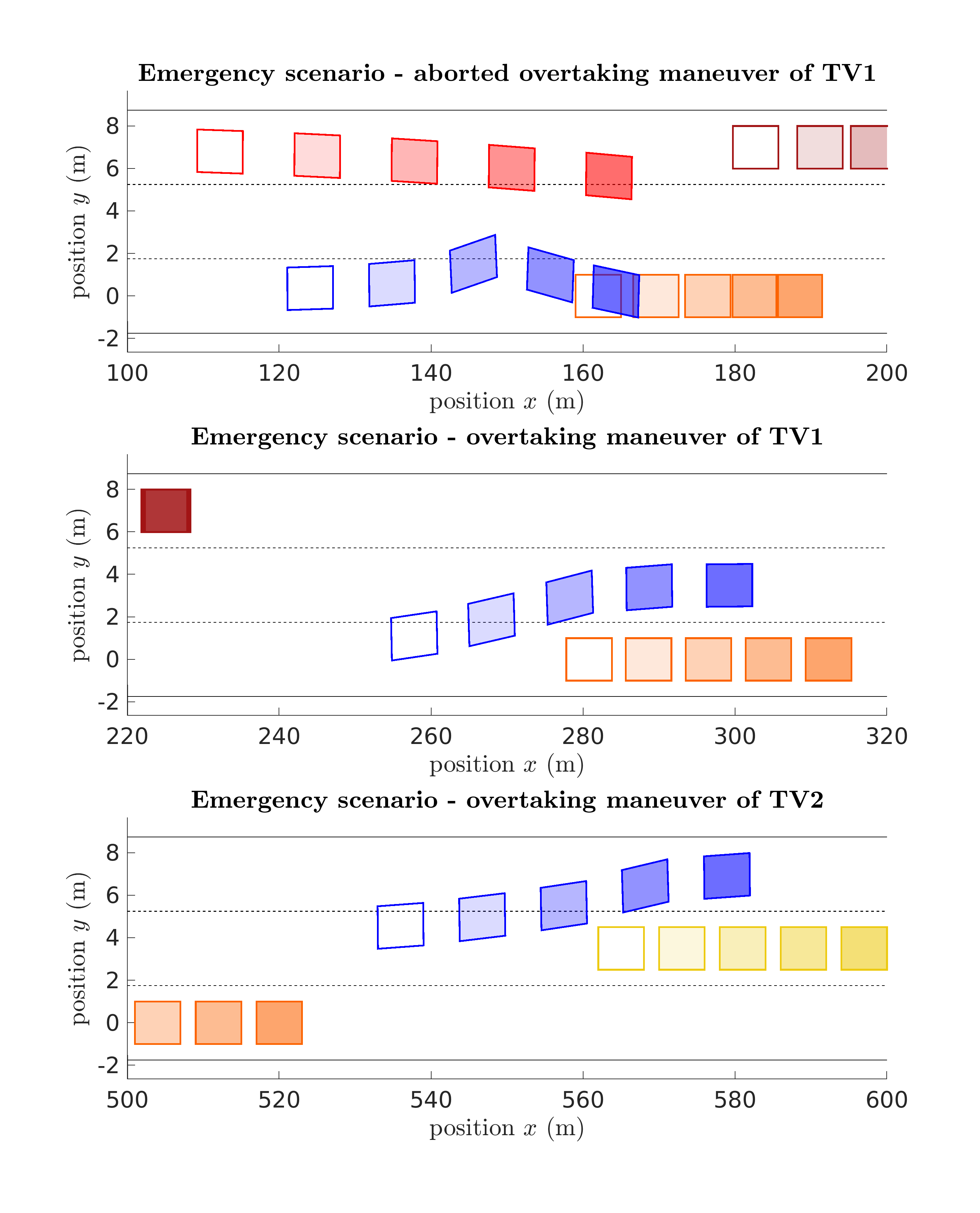}
\caption{Shots of the emergency scenario with \gls{SMPCFT}. Fading boxes show past states. The \gls{ev} is shown in blue.}
\label{fig:emg_smpcft}
\end{figure}

Initially the \gls{ev} attempts to switch lanes and overtake \gls{tv}1. However, at step $h=27$ the \gls{SMPC} is unable to find a solution. The \gls{FT} problem is still solved successfully and the first planned \gls{FT} input is applied. For the next four steps the \gls{SMPC} optimal control problem remains infeasible, indicated by the pink lines in Figure~\ref{fig:emg_smpcft_states}, and the \gls{FT} inputs are applied, which are obtained by successfully solving the \gls{FT} \glspl{ocp}. 
The \gls{ev} slows down and returns to the right lane, as illustrated in the first shot of Figure~\ref{fig:emg_smpcft}. At step $h=37$ the \gls{SMPC} problem is feasible and the \gls{ev} plans to overtake \gls{tv}1 again. However, as \gls{tv}4 is still too close, the \gls{FT} is unable to find a new safe backup trajectory if the next planned \gls{SMPC} input were applied, i.e., the \gls{FT} \gls{ocp} becomes infeasible. Therefore, while the \gls{SMPC} problem is still feasible, the safe input sequence obtained at the previous time step $h=36$ is applied to the \gls{ev}, as indicated by the red line in Figure~\ref{fig:emg_smpcft_states}. The \gls{ev} remains in the right lane until \gls{tv}4 is far enough away to safely change to the center lane. 
Eventually the \gls{ev} passes \gls{tv}2 by smoothly switching to the left lane with a small steering angle change. The average computation time for solving the \gls{SMPC} and \gls{FT} optimal control problems are $\SI{0.15}{\second}$ and $\SI{0.22}{\second}$, respectively. The values are higher compared to the regular scenario, as the computation time for infeasible optimal control problems is significantly larger. 

In this rare emergency situation, the \gls{ev} inputs lead to a less smooth motion. However, this is acceptable as safety is guaranteed even in this challenging situation. 

It is also possible to only apply \gls{FT} in this emergency scenario. While this leads to safe vehicle behavior throughout the simulation, the \gls{ev} does not overtake \gls{tv}1 and \gls{tv}2. The combination of \gls{SMPC} and \gls{FT}, however, enables the \gls{ev} to safely handle this emergency scenario while passing slower \glspl{tv} when possible. Comparing the cost yields the following result. Applying \gls{FT} to the emergency scenario yields a cost of $J_\text{sim} = 4.28\mathrm{e}{4}$, while the \gls{SMPCFT} cost is $J_\text{sim} =3.34\mathrm{e}{4}$.

The \gls{SMPC} and \gls{FT} constraints at step $h=24$ and prediction step $k = 4$ closely before the aborted lane change maneuver are displayed in Figure~\ref{fig:emg_smpcft_constraints}.
\begin{figure}
\centering
\includegraphics[width = \columnwidth]{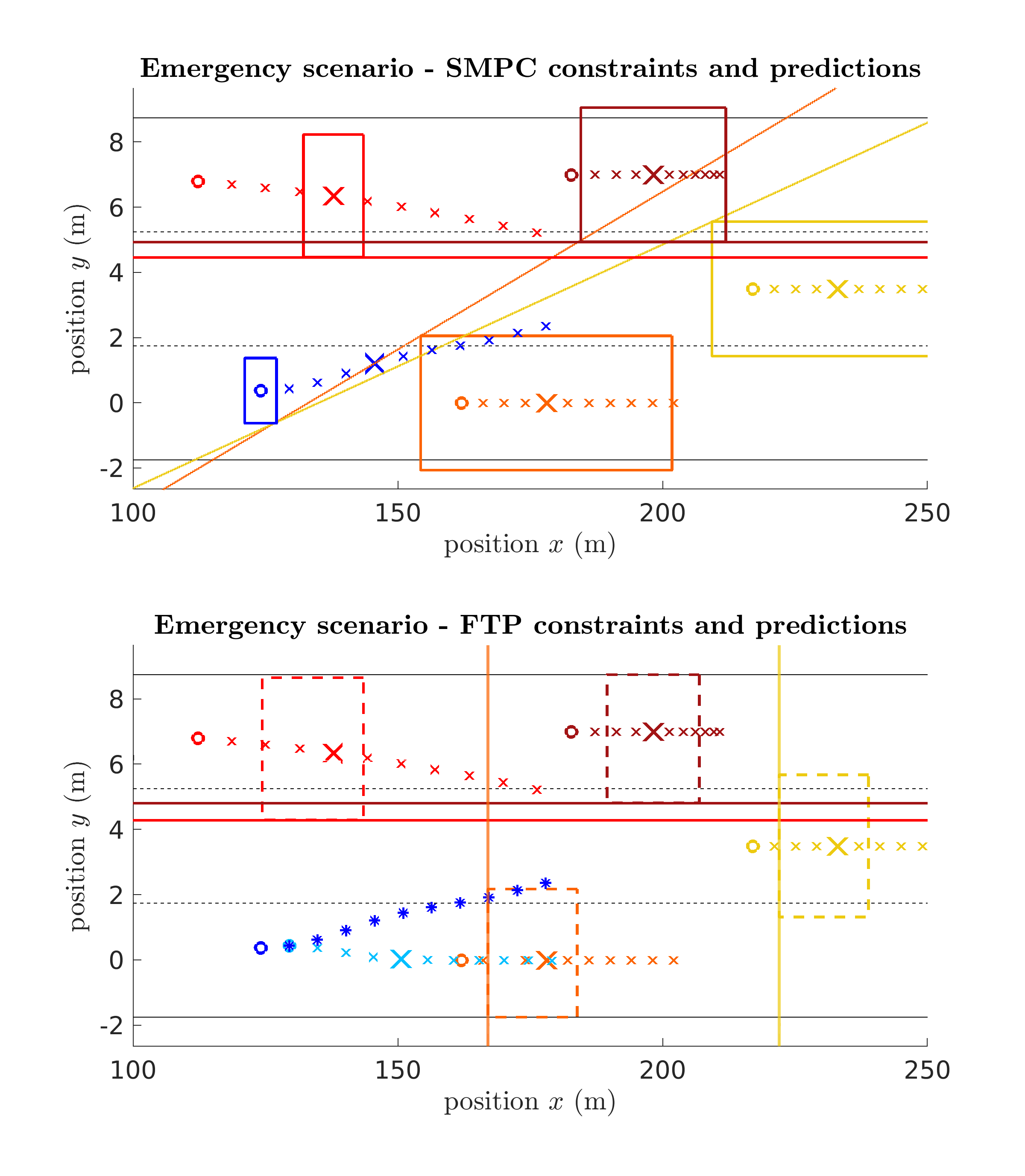}
\caption{\gls{SMPCFT} constraints for the emergency scenario at time step $h=24$ and prediction step $k = 4$. \gls{SMPC} constraints are illustrated in the top, \gls{FT} constraints in the bottom. The \gls{ev} shape and planned trajectory are shown in blue. \Glspl{tv} as well as respective safety rectangles, reachable sets, and constraints have the same color. Initial states are marked by a circle. The initial \gls{FT} state starts after the first \gls{SMPC} input is applied. The prediction step displayed is indicated by a bold cross, other predicted states are represented by smaller crosses. For reference, in the \gls{FT} plot the planned \gls{SMPC} trajectory is given by dark blue asterisks with a dark blue circle indicating the initial \gls{ev} state.}
\label{fig:emg_smpcft_constraints}
\end{figure}
Whereas the \gls{SMPC} prediction plans to steer the \gls{ev} into the center lane, the planned \gls{FT} trajectory remains in the right lane. Once the \gls{FT} is unable to still find a safe trajectory remaining in the right lane, given the first \gls{SMPC} step $\bm{u}_{\text{SMPC},0}$, the \gls{FT} takes control. 

In summary, the simulation scenarios in this section have shown the benefits of the proposed \gls{SMPCFT} method. The \gls{SMPC} part optimistically plans trajectories which are executed as long as there always exists a safe backup trajectory, computed by \gls{FT}. In regular scenarios \gls{SMPCFT} provides benefits known from \gls{SMPC}. In emergency scenarios the safety guarantee of \gls{FT} holds while the \gls{ev} is still more efficient compared to applying pure \gls{FT}.

\section{Discussion}
\label{sec:discussion}

In \gls{MPC} applications, the prediction horizon is a design choice. While in general long prediction horizons are useful, here it is beneficial to select a relatively short \gls{SMPC} horizon. This decreases the chance of the \gls{SMPC} problem becoming infeasible and it is not necessary to employ a long horizon, as \gls{FT} is used to guarantee safety. In some \gls{FT} approaches it is required that the vehicle comes to a standstill at the end of the fail-safe trajectory. Here, we only require a certain distance to vehicles ahead and zero orientation offset with respect to the road for the terminal state. This enables the use of a relatively short \gls{FT} horizon. However, if the \gls{FT} horizon is selected too short, lane changes are not possible anymore, as the \gls{FT} requires a certain amount of prediction steps to fulfill the terminal constraint. For very short \gls{FT} horizons lane change maneuvers are therefore aborted before the lane change actually takes place.

It is possible to get oscillating behavior between applied \gls{SMPC} inputs and the activation of \gls{FT}. In other words, in step one the \gls{SMPC} input is applied, which potentially causes the \gls{FT} to intervene in the next step. Then, a safe state is again achieved, leading to another, potentially over-aggressive \gls{SMPC} input, again requiring \gls{FT} in the subsequent step. This can be avoided by designing the \gls{SMPC} controller and its constraints less aggressively, as done in the simulation study.

Regarding the simulation, simulating each scenario once is adequate. While the \gls{tv} is assumed to behave probabilistically by the \gls{ev}, the actual \gls{tv} behavior here is deterministic. And whereas \gls{SCMPC} depends on drawn samples, which vary between simulations, the applied \gls{SMPC} approach uses a chance constraint reformulation that always yields the same constraint, given the same uncertainty distribution.

In the emergency scenario multiple \glspl{tv} change velocities or lanes. This scenario was chosen such that the \gls{SMPC} method causes a collision, which usually does not happen even for highly unlikely \gls{tv} trajectories. The chance constraint within \gls{SMPC} does allow a small probability of constraint violation, however, in most cases the iterative structure of \gls{MPC} handles potential future constraint violations. Furthermore, constraint violations do not necessarily cause collisions, as the safety area around a \gls{tv} is larger than the actual \gls{tv} shape.

Comparing the planned \gls{SMPC} trajectories for the \gls{ev} at two consecutive time steps without any major environment changes, one would assume that the planned trajectory remains similar. However, this is not the case. The constraints with respect to other \glspl{tv} are generated based on the \gls{ev} state at the beginning of the optimal control problem in order to formulate linear constraints. 
Therefore, in the next step, the constraint generation is based on an updated initial \gls{ev} state, resulting in a slightly different planned \gls{SMPC} trajectory compared to the previously planned trajectory. This could be addressed by using \gls{ev} predictions for the constraint generation, however, this would require nonlinear constraints.

The applied vehicle inputs in the emergency scenario lead to relatively high steering angles. This is not ideal for a smooth vehicle motion. Even though this behavior is acceptable in rare cases, the motion could be optimized by defining more cases for the constraint generation.

The individual \gls{SMPC} and \gls{FT} algorithms in this work are possible controller realizations, specifically designed for highway scenarios with multiple \glspl{tv}. The properties of the combined \gls{SMPCFT} method are not restricted to the suggested \gls{SMPC} and \gls{FT} trajectory planners described in Section~\ref{sec:detailsSMPC} and Section~\ref{sec:detailsFST}, respectively. Other \gls{SMPC} or \gls{FT} approaches can be applied.

In dense traffic or unclear traffic situations, humans often do not wait until the desired vehicle motion is entirely realizable. Instead, humans often slowly initiate maneuvers, causing other vehicles to react. For example, cutting into a lane is often preceded by slight motion towards the other lane so that other vehicles leave extra space. Therefore, it is possible to execute the lane change maneuver successfully, even though it was not possible to safely plan the entire lane change maneuver initially. While perfectly mimicking this human approach by automated vehicles is challenging due to safety reasons, the proposed \gls{SMPCFT} framework enables automated vehicle motion that comes close to this efficient human behavior.

\section{Conclusion}
\label{sec:conclusion}

In this work we presented a safe and efficient \gls{SMPCFT} method for self-driving vehicles. While \gls{SMPC} is used to plan optimistic, efficient vehicle trajectories, a fail-safe trajectory planning (FTP) MPC problem ensures that only those \gls{SMPC} inputs are applied which keep the vehicle in a safe state. The advantages of the proposed \gls{SMPCFT} algorithm are shown in a simulation study, where comparisons are done to pure \gls{SMPC} and pure \gls{FT} methods.

The efficiency of the \gls{SMPCFT} method depends on the proposed constraint generation. Extending and refining the case differentiation will have a positive effect on efficiency. Considering urban automated driving, the \gls{SMPCFT} approach remains valid, however, the case differentiation must be adapted to fit the urban traffic environment.

The presented \gls{SMPCFT} method is suitable to be applied to further safety-critical transportation applications, such as currently developed air taxis. However, it is also possible to extend the application area to non-transportation applications, such as human-robot collaboration, where uncertainty is always present while safety must still be guaranteed.

\appendices

\section{Linearized and Discretized System Matrices}
\label{sec:appendixd}

The linearized, time-discrete system matrices $\bm{A}_\text{d}$ and $\bm{B}_\text{d}$ in \eqref{eq:d_evsys} are given by
\begin{IEEEeqnarray}{rl}
\IEEEyesnumber \label{eq:lindic_matrix}
\bm{A}^\text{d} &= 
\setlength{\delimitershortfall}{1pt} 
\begin{bmatrix}
1 & 0 & -T v \sin z_1 & T \cos z_1 - \frac{ z_2 \sin z_1 }{2 z_4}\\
0 & 1 & T v \cos z_1 & T \sin z_1 + \frac{ z_2 \cos z_1 }{2 z_4} \\
0 & 0 & 1 & \frac{ T \tan{\delta}  }{ z_4}\\
0 & 0 & 0 & 1
\end{bmatrix} \IEEEeqnarraynumspace \IEEEyessubnumber\\
\bm{B}^\text{d} &= 
\setlength{\delimitershortfall}{1pt}
\begin{bmatrix}
\frac{T^2 \cos z_1}{2} & -\frac{T^2 v z_7 \sin z_1 }{2} - \frac{z_8 \sin z_1 }{z_9} \\
\frac{T^2 \sin z_1}{2} & \frac{T^2 v z_7 \cos z_1  }{2} + \frac{z_8 \cos z_1  }{z_9}\\
\frac{ T^2 \tan{\delta}  }{2 z_4} & T z_7 \\
T & 0
\end{bmatrix} \IEEEyessubnumber
\end{IEEEeqnarray}
with
\begin{IEEEeqnarray}{rl}
\IEEEyesnumber \label{eq:lindic_matrix_app}
z_1 &= \phi + \arctan\lr{\frac{l_\text{r} \tan \delta}{l_\text{r} + l_\text{f}}} \IEEEyessubnumber\\
z_2 &= T^2 v \tan{\delta} \IEEEyessubnumber\\
z_3 & = \lr{l_\text{r} \tan \delta}^2 \IEEEyessubnumber\\
z_4 &= \lr{l_\text{r} + l_\text{f}} \lr{\frac{z_3}{\lr{l_\text{r} + l_\text{f}}^2}+1}^{\frac{1}{2}} \IEEEyessubnumber\\
z_5 & = v \lr{\lr{\tan \delta}^2 + 1} \IEEEyessubnumber\\
z_6 &= \lr{l_\text{r} + l_\text{f}}^3 \lr{\frac{z_3}{\lr{l_\text{r} + l_\text{f}}^2}+1}^{\frac{3}{2}} \IEEEyessubnumber\\
z_7 &= \frac{z_5}{z_4} - \frac{z_3z_5}{z_6} \IEEEyessubnumber\\
z_8 &= T l_\text{r} z_5  \IEEEyessubnumber \\
z_9 &= \lr{l_\text{r} + l_\text{f}} \lr{\frac{z_3}{\lr{l_\text{r} + l_\text{f}}^2}+1}. \IEEEyessubnumber
\end{IEEEeqnarray}

\section{Proof of Theorem 1}
\label{sec:appendixa}

\begin{proof}
Recursive feasibility is proved by induction by showing that $\Gamma_h \neq \emptyset \Rightarrow \Gamma_{h+1} \neq \emptyset$ for all $h \in \mathbb{N}$. 
 
At time step $h=0$ it holds that $\gls{chiX}_0^{\bm{U}_{\text{safe},\text{init}}} \in \Gamma_0$, i.e., an initially safe trajectory exists according to Assumption~\ref{ass:FTinitial}. If at step $h=0$ the \gls{FT} \gls{ocp} can be solved, a new safe input set $\bm{U}_{\text{safe},0}$ is obtained according to \eqref{eq:usafe1} or \eqref{eq:usafe2}. This new safe input set $\bm{U}_{\text{safe},0}$ remains valid at step $h=1$ and ensures that a safe trajectory exists, i.e., $\gls{chiX}_1^{\bm{U}_{\text{safe},0}} \in \Gamma_1$. If at step $h=0$ the \gls{FT} \gls{ocp} is infeasible, the shifted previous safe input set 
remains valid, i.e., $\bm{U}_{\text{safe},0} = \bm{U}^\leftarrow_{\text{safe},\text{init}}$ according to Section~\ref{sec:safeinputsequence}. In this case the shifted safe input set $\bm{U}_{\text{safe},0} = \bm{U}^\leftarrow_{\text{safe},\text{init}}$ guarantees that $\gls{chiX}_1^{\bm{U}_{\text{safe},0}} \in \Gamma_1$. Therefore, $\Gamma_0 \neq \emptyset \Rightarrow \Gamma_{1} \neq \emptyset$.

For $h=1$ it holds that $\gls{chiX}_1^{\bm{U}_{\text{safe},0}} \in \Gamma_1$. A feasible \gls{FT} \gls{ocp} yields the new safe input sequences $\bm{U}_{\text{safe},1}$, such that there exists a safe trajectory $\gls{chiX}_{2}^{\bm{U}_{\text{safe},{1}}} \in \Gamma_{2}$. If the \gls{FT} \gls{ocp} is infeasible, reusing the still valid previous safe input set $\bm{U}_{\text{safe},0}$, i.e., setting $\bm{U}_{\text{safe},1} = \bm{U}^\leftarrow_{\text{safe},0}$, ensures that $\gls{chiX}_{2}^{\bm{U}_{\text{safe},1}} \in \Gamma_{2}$.

For time step $h \geq 2$ it holds that $\gls{chiX}_h^{\bm{U}_{\text{safe},h-1}} \in \Gamma_h$. If the \gls{FT} \gls{ocp} is feasible, this yields the new safe input sequences $\bm{U}_{\text{safe},h}$, such that there exists a safe trajectory $\gls{chiX}_{h+1}^{\bm{U}_{\text{safe},{h}}} \in \Gamma_{h+1}$. If the \gls{FT} \gls{ocp} is infeasible, the previous safe input set $\bm{U}_{\text{safe},h-1}$ is still valid and choosing $\bm{U}_{\text{safe},h} = \bm{U}^\leftarrow_{\text{safe},h-1}$ ensures that $\gls{chiX}_{h+1}^{\bm{U}_{\text{safe},h}} \in \Gamma_{h+1}$.

Therefore, $\gls{chiX}_{h+1}^{\bm{U}_{\text{save},{h}}} \in \Gamma_{h+1}$ holds for all $h \in \mathbb{N}$, i.e., the proposed method is safe and recursively feasible.
\end{proof}

\section{Proof of Lemma 1}
\label{sec:appendixc}

\begin{proof}
As longitudinal and lateral motion are uncorrelated, the covariance matrix is given in terms of a block diagonal matrix
\begin{equation}
		\bm{\Sigma}^{e}_k = \begin{bmatrix} \bm{\Sigma}^e_{x,k} & 0 \\ 0 & \bm{\Sigma}^e_{y,k} \end{bmatrix},
\end{equation}
with
\begin{equation}
        \begin{aligned}
		\bm{\Sigma}^{e}_{x,k} &= \begin{bmatrix} \sigma^2_{x,k} & \sigma^2_{xv_x,k} \\ \sigma^2_{xv_x,k} & \sigma^2_{v_x,k} \end{bmatrix}, \\
		\bm{\Sigma}^{e}_{y,k} &= \begin{bmatrix} \sigma^2_{y,k} & \sigma^2_{yv_y,k} \\ \sigma^2_{yv_y,k} & \sigma^2_{v_y,k} \end{bmatrix},
		\end{aligned}
		\label{aeq4.20}
\end{equation}
and each direction is computed independently. We will show that the error for the position coordinate $x$ is distributed with a probability density function only depending on $\sigma^2_{x,k}$. Let $\bm{\xi}_{x,k} = [x_k,~v_{x,k}]^\top$ be the state vector projected onto its longitudinal coordinates with the estimated states $\hat{\bm{\xi}}_{x,k} = [\hat{x}_k,~\hat{v}_{x,k}]^\top$ and $\bm{e}_{x,k} = \hat{\bm{\xi}}_{x,k} - \bm{\xi}_{x,k}$, then $\bm{e}_{x,k}$ is Gaussian distributed with the bivariate probability function
\begin{equation}
		f(\bm{e}_{x,k}) = \frac{1}{2\pi\sqrt{ \mathrm{det} \left(\bm{\Sigma}^{e}_{x,k} \right)}} \exp \left(-\frac{1}{2} \bm{e}_{x,k}^\top \left(\bm{\Sigma}^{e}_{x,k} \right)^{-1} \bm{e}_{x,k} \right). \IEEEeqnarraynumspace
		\label{aeq4.21}
\end{equation}
We obtain the marginal probability density function by integrating $f(\bm{e}_{x,k})$ over the longitudinal velocity, i.e.,
\begin{equation}
	f_x(\bm{e}_{x,k}) = \int_{-\infty}^{\infty} f(\bm{e}_{x,k}) \mathrm{d}v_{x,k},
	\label{aeq4.22}
\end{equation}
which yields
\begin{equation}
	f_x(\bm{e}_{x,k}) = \frac{1}{\sqrt{2 \pi \sigma^2_{x,k}}} \exp \left(- \frac{1}{2 \sigma^2_{x,k}} x_k^2 \right).
	\label{aeq4.24}
\end{equation}
As \eqref{aeq4.24} corresponds to the univariate Gaussian distribution of the $x$-coordinate with only $\sigma_{x,k}$, the $x$-direction is computed independently of $\sigma_{v_{x},k}$ and $\sigma_{xv_x,k}$, which proves the lemma. The same proof holds for the lateral direction.
\end{proof}

\section{Constraint Generation}
\label{sec:appendixb}

Here, we give a complete overview of the cases considered. The cases and conditions for \gls{SMPC} are found in Table~\ref{tab:smpc_constraints}. The values $c^\text{EV/TV}_{i,x/y,k}$, $i \in \{1,2,3,4\}$ indicate the corner $x$-position and $y$-position of the \gls{ev} shape or \gls{tv} safety rectangle, according to Figure~\ref{fig:corners}. 
\begin{figure}
\centering
\includegraphics[width = \columnwidth]{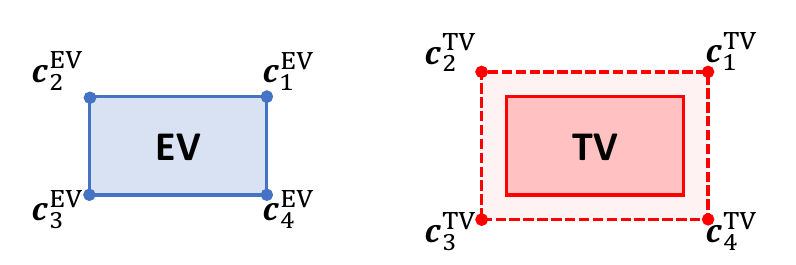}
\caption{Corner description for the \gls{ev} shape and \gls{tv} safety rectangle.}
\label{fig:corners}
\end{figure}
While the main idea of the cases is already given in Section~\ref{sec:constraintgeneration}, some extra details were not previously mentioned for clarity. The function \gls{ffclose} was considered to be a simple constant, however, the full function consists of a constant part \gls{dclose} and a variable part $\max\{0,\sign{\Delta v} \Delta v N \Delta t\}$, i.e., the necessary distance between the \gls{ev} and \gls{tv} depends on the velocity difference. A larger velocity difference results in greater difference of distance covered by the \gls{ev} and \gls{tv} within the prediction time $N \Delta t$. This yields 
\begin{IEEEeqnarray}{C}
\gls{ffclose} = \gls{dclose} + \max\{0,\sign{\Delta v} \Delta v N \Delta t\}.
\end{IEEEeqnarray}
As the general plan for the \gls{ev} is to overtake only on the left, case E is extended slightly compared to Section~\ref{sec:constraintgeneration}. When the \gls{ev} is too close to the \gls{tv}, based on a left-lane margin $r_\text{llm}$, a vertical constraint behind the \gls{tv} replaces the inclined constraint (case E${}_3$). It is only planned to overtake \glspl{tv} on a lane left of the \gls{ev} if the \gls{ev} velocity is larger than the \gls{tv} velocity (case E${}_2$). For cases D and E the constraint slope is bounded such that it does not lie within with the \gls{ev} shape or the \gls{tv} safety rectangle. 
\begin{table*}
\center
\caption{\gls{SMPC} Constraints}
\label{tab:smpc_constraints}
\begin{tabular}{c  l l  c c c}
\toprule
case & longitudinal conditions (pos. and vel.) & lateral conditions (pos.) & $q_{x,k}$ & $q_{y,k}$ & $q_{t,k}$\\
\midrule
A & $|\gls{dxEVTV}| \geq r_\text{lar}$ & - & 0 & 0 & 0\\
\addlinespace
B & $\gls{ffclose} < -\lr{\gls{dxEVTV}} < r_\text{lar}$ & - & 1 & 0 & $-c^\text{TV}_{2,x,k}$\\
\addlinespace
C & $\gls{ffclose} < \lr{\gls{dxEVTV}} < r_\text{lar}$ & - & -1 & 0 & $c^\text{TV}_{1,x,k}$\\
\addlinespace
D & $ 0 \leq -\lr{\gls{dxEVTV}} \leq \gls{ffclose}$ & $\gls{evlane} = \gls{tvlane}$ & $\max \left\lbrace 0, \frac{c^\text{EV}_{4,y,0} - c^\text{TV}_{2,y,k}}{c^\text{EV}_{4,x,0} - c^\text{TV}_{2,x,k}} \right\rbrace$ & -1 & $c^\text{TV}_{2,y,k} - s_{x,k} c^\text{EV}_{4,x,0}$\\
\addlinespace
E & \begin{tabular}{@{}l@{}} $-\lr{\gls{dxEVTV}} \leq \gls{ffclose}$, \\ $s_0 + 0.5 w_\text{veh} + r_\text{llm} \leq \gls{xtv0}$, \\ $v_0 - v_{x,0}^\text{TV} > 0$ \end{tabular} & $\gls{evlane} + w_\text{lane} = \gls{tvlane}$ & $\max \left\lbrace 0, \frac{c^\text{EV}_{4,y,0} - c^\text{TV}_{2,y,k}}{c^\text{EV}_{4,x,0} - c^\text{TV}_{2,x,k}} \right\rbrace$ & -1 & $c^\text{TV}_{2,y,k} - s_{x,k} c^\text{EV}_{4,x,0}$\\
\addlinespace
E${}_2$ & \begin{tabular}{@{}l@{}} $-\lr{\gls{dxEVTV}} \leq \gls{ffclose}$, \\ $s_0 + 0.5 w_\text{veh} + r_\text{llm} \leq \gls{xtv0}$, \\ $v_0 - v_{x,0}^\text{TV} > 0$ \end{tabular} & $\gls{evlane} + w_\text{lane} = \gls{tvlane}$ & 1 & 0 & $-c^\text{TV}_{2,x,k}$\\
\addlinespace
E${}_3$ & \begin{tabular}{@{}l@{}} $0 \leq -\lr{\gls{dxEVTV}} \leq \gls{ffclose}$, \\ $s_0 + 0.5 w_\text{veh} + r_\text{llm} > \gls{xtv0}$ \end{tabular} & $\gls{evlane} + w_\text{lane} = \gls{tvlane}$ & 0 & 1 & $-c^\text{TV}_{3,y,k}$\\
\addlinespace
F & $|\gls{dxEVTV}| \leq \gls{ffclose}$ & $\gls{evlane} > \gls{tvlane}$ & 0 & -1 & $c^\text{TV}_{2,y,k}$\\
\addlinespace
G & $0 \leq -\lr{\gls{dxEVTV}} \leq \gls{ffclose}$ & $\gls{evlane} + 2w_\text{lane} \leq \gls{tvlane}$ & 0 & 1 & $-c^\text{TV}_{3,y,k}$\\
\addlinespace
H & $0 < \lr{\gls{dxEVTV}} \leq \gls{ffclose}$ & $\gls{evlane} \leq \gls{tvlane}$ & 0 & 1 & $-c^\text{TV}_{4,y,k}$\\
\addlinespace
J & $0 < \lr{\gls{dxEVTV}} \leq \gls{ffclose}$ & $\gls{evlane} = \gls{tvlane}$ & 0 & 0 & 0\\
\bottomrule
\end{tabular}
\end{table*}

The cases and conditions for \gls{FT} are similar to the \gls{SMPC} cases. A complete description is given in Table~\ref{tab:ft_constraints}. The main idea of the \gls{FT} cases are described in Section~\ref{sec:constraintgeneration_ft}. Here, we give a detailed description of the cases where three placeholder \gls{tv} predictions are considered (cases J${}^*$ and C${}^*$). The cases with a \gls{tv} prediction in the same lane as the \gls{ev} are denoted by J${}^*_\text{S}$, C${}^*_\text{S}$, while cases with a \gls{tv} prediction in a lane to the left or right of the \gls{tv} are denoted by J${}^*_\text{L}$, C${}^*_\text{L}$ and J${}^*_\text{R}$, C${}^*_\text{R}$, respectively. The slopes for cases C${}^*_\text{L}$ and C${}^*_\text{R}$ are limited to $q_{x,k} \leq 0$ and $q_{x,k} \geq 0$, respectively. If the values for $q_{x,k}$ exceed the respective limits, the cases C${}^*_\text{L,lim}$ and C${}^*_\text{R,lim}$ are applied. The cases F${}_2^*$ and H${}_2^*$ represent scenarios where the \gls{ev} center is not in the \gls{tv} lane, but the \gls{ev} shape is already in the \gls{tv} lane. In these cases, where the \gls{ev} is behind the \gls{tv}, vertical constraints behind the \gls{tv} are built. The cases F${}^*$ and H${}^*$ are split into F${}_a^*$, F${}_b^*$ and H${}_a^*$, H${}_b^*$, respectively.
\begin{table*}
\center
\caption{\gls{FT} Constraints}
\label{tab:ft_constraints}
\begin{tabular}{c  l l  c c c}
\toprule
case & longitudinal conditions (pos. and vel.) & lateral conditions (pos.) & $q_{x,k}$ & $q_{y,k}$ & $q_{t,k}$\\
\midrule
A${}^*$ & $|\gls{dxEVTV}| \geq r_\text{lar}$ & - & 0 & 0 & 0\\
\addlinespace
B${}^*$ & $\gls{ffcloseft} < -\lr{\gls{dxEVTV}} < r_\text{lar}$ & - & 1 & 0 & $-c^\text{TV}_{2,x,k}$\\
\addlinespace
D${}^*$ & $ 0 \leq -\lr{\gls{dxEVTV}} \leq \gls{ffcloseft}$ & $\gls{evlane} = \gls{tvlane}$ & 1 & 0 & $-c^\text{TV}_{2,x,k}$\\
\addlinespace
F${}_a^*$ & $0 < \lr{\gls{dxEVTV}} \leq \gls{ffcloseft}$ & $\gls{evlane} > \gls{tvlane}$& 0 & -1 & $c^\text{TV}_{2,y,k}$\\
\addlinespace
F${}_b^*$ & $0 \leq -\lr{\gls{dxEVTV}} \leq \gls{ffcloseft}$ & \begin{tabular}{@{}l@{}} $\gls{evlane} > \gls{tvlane}$, \\ $d_0 \geq \gls{tvlane} + 0.5w_\text{lane} + 0.5w_\text{veh}$ \end{tabular} & 0 & -1 & $c^\text{TV}_{2,y,k}$\\
\addlinespace
F${}_2^*$ & $0 \leq -\lr{\gls{dxEVTV}} \leq \gls{ffcloseft}$ & \begin{tabular}{@{}l@{}} $\gls{evlane} > \gls{tvlane}$, \\ $d_0 < \gls{tvlane} + 0.5w_\text{lane} + 0.5w_\text{veh}$ \end{tabular} & 1 & 0 & $-c^\text{TV}_{2,x,k}$\\
\addlinespace
H${}_a^*$ & $0 < \lr{\gls{dxEVTV}} \leq \gls{ffcloseft}$ & $\gls{evlane} < \gls{tvlane}$ & 0 & 1 & $-c^\text{TV}_{3,y,k}$\\
\addlinespace
H${}_b^*$ & $0 \leq -\lr{\gls{dxEVTV}} \leq \gls{ffcloseft}$ & \begin{tabular}{@{}l@{}} $\gls{evlane} < \gls{tvlane}$, \\ $d_0 \leq \gls{tvlane} - 0.5w_\text{lane} - 0.5w_\text{veh}$ \end{tabular} & 0 & 1 & $-c^\text{TV}_{3,y,k}$\\
\addlinespace
H${}_2^*$ & $0 \leq -\lr{\gls{dxEVTV}} \leq \gls{ffcloseft}$ & \begin{tabular}{@{}l@{}} $\gls{evlane} < \gls{tvlane}$, \\ $d_0 > \gls{tvlane} - 0.5w_\text{lane} - 0.5w_\text{veh}$ \end{tabular} & 1 & 0 & $-c^\text{TV}_{2,x,k}$\\
\addlinespace
J${}^*_\text{S}$ & $0 < \lr{\gls{dxEVTV}} \leq \gls{ffcloseft}$ & $\gls{evlane} = \gls{tvlane}$ & -1 & 0 & $c^\text{TV}_{1,x,k}$\\
\addlinespace
J${}^*_\text{L}$ & $0 < \lr{\gls{dxEVTV}} \leq \gls{ffcloseft}$ & $\gls{evlane} = \gls{tvlane}$ & 0 & 1 & $-c^\text{TV}_{4,y,k}$\\
\addlinespace
J${}^*_\text{R}$ & $0 < \lr{\gls{dxEVTV}} \leq \gls{ffcloseft}$ & $\gls{evlane} = \gls{tvlane}$ & 0 & -1 & $-c^\text{TV}_{1,y,k}$\\
\addlinespace
C${}^*_\text{S}$ & $\gls{ffcloseft} < \lr{\gls{dxEVTV}} < r_\text{lar}$ & $\gls{evlane} = \gls{tvlane}$ & -1 & 0 & $c^\text{TV}_{1,x,k}$\\
\addlinespace
C${}^*_\text{L}$ & $\gls{ffcloseft} < \lr{\gls{dxEVTV}} < r_\text{lar}$ & $\gls{evlane} = \gls{tvlane}$ & $ \frac{c^\text{EV}_{3,y,0} - c^\text{TV}_{1,y,k}}{c^\text{EV}_{3,x,0} - c^\text{TV}_{1,x,k}} $ & -1 & $c^\text{TV}_{1,y,k} - s_{x,k} c^\text{EV}_{3,x,0}$\\
\addlinespace
C${}^*_\text{L,lim}$ & $\gls{ffcloseft} < \lr{\gls{dxEVTV}} < r_\text{lar}$ & $\gls{evlane} = \gls{tvlane}$ & -1 & 0 & $c^\text{TV}_{1,x,k}$\\
\addlinespace
C${}^*_\text{R}$ & $\gls{ffcloseft} < \lr{\gls{dxEVTV}} < r_\text{lar}$ & $\gls{evlane} = \gls{tvlane}$ & $- \frac{c^\text{EV}_{2,y,0} - c^\text{TV}_{4,y,k}}{c^\text{EV}_{2,x,0} - c^\text{TV}_{4,x,k}}$ & 1 & $-\lr{c^\text{TV}_{4,y,k} - s_{x,k} c^\text{EV}_{2,x,0}}$\\
\addlinespace
C${}^*_\text{R,lim}$ & $\gls{ffcloseft} < \lr{\gls{dxEVTV}} < r_\text{lar}$ & $\gls{evlane} = \gls{tvlane}$ & 1 & 0 & $-c^\text{TV}_{4,x,k}$\\
\bottomrule
\end{tabular}
\end{table*}

\section*{Acknowledgment}

The authors thank Daniel Althoff, Matthias Althoff, and Christian Pek for valuable discussions.

\ifCLASSOPTIONcaptionsoff
  \newpage
\fi

\bibliography{./references/root}

\begin{thebibliography}{10}

\bibitem{Keller20142}
M.~Keller, C.~Hass, A.~Seewald, and T.~Bertram.
\newblock Driving simulator study on an emergency steering assist.
\newblock In {\em International IEEE Conference on Systems, Man, and
  Cybernetics, San Diego, California/USA}, pages 3008--3013, 2014.

\bibitem{LevinsonEtalThrun2011}
J.~Levinson, J.~Askeland, J.~Becker, J.~Dolson, D.~Held, S.~Kammel, J.~Z.
  Kolter, D.~Langer, O.~Pink, V.~Pratt, M.~Sokolsky, G.~Stanek, D.~Stavens,
  A.~Teichman, M.~Werling, and S.~Thrun.
\newblock Towards fully autonomous driving: Systems and algorithms.
\newblock In {\em 2011 IEEE Intelligent Vehicles Symposium (IV)}, pages
  163--168, Baden-Baden, Germany, June 2011.

\bibitem{KatrakazasEtalDeka2015}
C.~Katrakazas, M.~Quddus, W.~Chen, and L.~Deka.
\newblock Real-time motion planning methods for autonomous on-road driving:
  State-of-the-art and future research directions.
\newblock {\em Transportation Research Part C: Emerging Technologies},
  60:416--442, 2015.

\bibitem{Mayne2014}
D.Q. Mayne.
\newblock Model predictive control: Recent developments and future promise.
\newblock {\em Automatica}, 50(12):2967 -- 2986, 2014.

\bibitem{SolopertoEtalMueller2019}
R.~{Soloperto}, J.~{Köhler}, F.~{Allgöwer}, and M.~A. {Müller}.
\newblock Collision avoidance for uncertain nonlinear systems with moving
  obstacles using robust model predictive control.
\newblock In {\em 2019 18th European Control Conference (ECC)}, pages 811--817,
  2019.

\bibitem{DixitEtalFallah2020}
S.~{Dixit}, U.~{Montanaro}, M.~{Dianati}, D.~{Oxtoby}, T.~{Mizutani},
  A.~{Mouzakitis}, and S.~{Fallah}.
\newblock Trajectory planning for autonomous high-speed overtaking in
  structured environments using robust mpc.
\newblock {\em IEEE Transactions on Intelligent Transportation Systems},
  21(6):2310--2323, 2020.

\bibitem{Mesbah2016}
A.~Mesbah.
\newblock Stochastic model predictive control: An overview and perspectives for
  future research.
\newblock {\em IEEE Control Systems}, 36(6):30--44, Dec 2016.

\bibitem{FarinaGiulioniScattolini2016}
M.~Farina, L.~Giulioni, and R.~Scattolini.
\newblock Stochastic linear model predictive control with chance constraints
  – a review.
\newblock {\em Journal of Process Control}, 44(Supplement C):53 -- 67, 2016.

\bibitem{CarvalhoEtalBorrelli2014}
A.~Carvalho, Y.~Gao, S.~Lefevre, and F.~Borrelli.
\newblock Stochastic predictive control of autonomous vehicles in uncertain
  environments.
\newblock In {\em 12th International Symposium on Advanced Vehicle Control},
  Tokyo, Japan, 2014.

\bibitem{CesariEtalBorrelli2017}
G.~Cesari, G.~Schildbach, A.~Carvalho, and F.~Borrelli.
\newblock Scenario model predictive control for lane change assistance and
  autonomous driving on highways.
\newblock {\em IEEE Intelligent Transportation Systems Magazine}, 9(3):23--35,
  Fall 2017.

\bibitem{Magdici2016}
S.~Magdici and M.~Althoff.
\newblock Fail-safe motion planning of autonomous vehicles.
\newblock In {\em 2016 IEEE 19th International Conference on Intelligent
  Transportation Systems (ITSC)}, pages 452--458, 2016.

\bibitem{Althoff2018}
S.~S\"ontges and M.~Althoff.
\newblock Computing the drivable area of autonomous road vehicles in dynamic
  road scenes.
\newblock {\em IEEE Transactions on Intelligent Transportation Systems},
  19(6):1855--1866, 2018.

\bibitem{BruedigamEtalLeibold2021a}
T.~Br{\" u}digam, M.~Olbrich, D.~Wollherr, and M.~Leibold.
\newblock Stochastic model predictive control with a safety guarantee for
  autonomous driving.
\newblock {\em IEEE Transactions on Intelligent Vehicles}, pages 1--1, 2021.

\bibitem{HubmannEtalStiller2018}
C.~{Hubmann}, J.~{Schulz}, M.~{Becker}, D.~{Althoff}, and C.~{Stiller}.
\newblock Automated driving in uncertain environments: Planning with
  interaction and uncertain maneuver prediction.
\newblock {\em IEEE Transactions on Intelligent Vehicles}, 3(1):5--17, 2018.

\bibitem{MirchevskaEtalBoedecker2018}
B.~{Mirchevska}, C.~{Pek}, M.~{Werling}, M.~{Althoff}, and J.~{Boedecker}.
\newblock High-level decision making for safe and reasonable autonomous lane
  changing using reinforcement learning.
\newblock In {\em 2018 21st International Conference on Intelligent
  Transportation Systems (ITSC)}, pages 2156--2162, 2018.

\bibitem{RosoliaCarvalhoBorrelli2017}
U.~{Rosolia}, A.~{Carvalho}, and F.~{Borrelli}.
\newblock Autonomous racing using learning model predictive control.
\newblock In {\em 2017 American Control Conference (ACC)}, pages 5115--5120,
  2017.

\bibitem{KabzanEtalZeilinger2019}
J.~{Kabzan}, L.~{Hewing}, A.~{Liniger}, and M.~N. {Zeilinger}.
\newblock Learning-based model predictive control for autonomous racing.
\newblock {\em IEEE Robotics and Automation Letters}, 4(4):3363--3370, 2019.

\bibitem{WischnewskiBetzLohmann2019}
A.~{Wischnewski}, J.~{Betz}, and B.~{Lohmann}.
\newblock A model-free algorithm to safely approach the handling limit of an
  autonomous racecar.
\newblock In {\em 2019 IEEE International Conference on Connected Vehicles and
  Expo (ICCVE)}, 2019.

\bibitem{StahlEtalLienkamp2019}
T.~{Stahl}, A.~{Wischnewski}, J.~{Betz}, and M.~{Lienkamp}.
\newblock Multilayer graph-based trajectory planning for race vehicles in
  dynamic scenarios.
\newblock In {\em 2019 IEEE Intelligent Transportation Systems Conference
  (ITSC)}, pages 3149--3154, 2019.

\bibitem{MasseraTerraWolf2017}
C.~{Massera Filho}, M.~H. {Terra}, and D.~F. {Wolf}.
\newblock Safe optimization of highway traffic with robust model predictive
  control-based cooperative adaptive cruise control.
\newblock {\em IEEE Transactions on Intelligent Transportation Systems},
  18(11):3193--3203, 2017.

\bibitem{KazemiEtalFallah2018}
H.~{Kazemi}, H.~N. {Mahjoub}, A.~{Tahmasbi-Sarvestani}, and Y.~P. {Fallah}.
\newblock A learning-based stochastic mpc design for cooperative adaptive
  cruise control to handle interfering vehicles.
\newblock {\em IEEE Transactions on Intelligent Vehicles}, 3(3):266--275, 2018.

\bibitem{GutjahrGroellWerling2017}
B.~{Gutjahr}, L.~{Gröll}, and M.~{Werling}.
\newblock Lateral vehicle trajectory optimization using constrained linear
  time-varying mpc.
\newblock {\em IEEE Transactions on Intelligent Transportation Systems},
  18(6):1586--1595, June 2017.

\bibitem{YiEtalStiller2019}
B.~{Yi}, P.~{Bender}, F.~{Bonarens}, and C.~{Stiller}.
\newblock Model predictive trajectory planning for automated driving.
\newblock {\em IEEE Transactions on Intelligent Vehicles}, 4(1):24--38, 2019.

\bibitem{RasekhipourEtalLitkouhi2017}
Y.~{Rasekhipour}, A.~{Khajepour}, S.~{Chen}, and B.~{Litkouhi}.
\newblock A potential field-based model predictive path-planning controller for
  autonomous road vehicles.
\newblock {\em IEEE Transactions on Intelligent Transportation Systems},
  18(5):1255--1267, 2017.

\bibitem{PekAlthoff2018}
C.~{Pek} and M.~{Althoff}.
\newblock Computationally efficient fail-safe trajectory planning for
  self-driving vehicles using convex optimization.
\newblock In {\em 2018 21st International Conference on Intelligent
  Transportation Systems (ITSC)}, pages 1447--1454, Nov 2018.

\bibitem{Gullila2013}
Jeremy~H. Gillula.
\newblock {\em Guaranteeing Safe Online Machine Learning via Reachability
  Analysis}.
\newblock Dissertation, Stanford University, 2013.

\bibitem{SchuermannKochdumperAlthoff2018}
B.~{Schürmann}, N.~{Kochdumper}, and M.~{Althoff}.
\newblock Reachset model predictive control for disturbed nonlinear systems.
\newblock In {\em 2018 IEEE Conference on Decision and Control (CDC)}, pages
  3463--3470, 2018.

\bibitem{Althoff2013}
M.~Althoff, D.~He{\ss}, and F.~Gambert.
\newblock Road occupancy prediction of traffic participants.
\newblock In {\em 16th International IEEE Conference on Intelligent
  Transportation Systems (ITSC)}, pages 99--105, 2013.

\bibitem{AlthoffKoschiManzinger2017}
M.~Althoff, M.~Koschi, and S.~Manzinger.
\newblock Commonroad: Composable benchmarks for motion planning on roads.
\newblock In {\em 2017 IEEE Intelligent Vehicles Symposium (IV)}, pages
  719--726, Los Angeles, USA, 2017.

\bibitem{ManzingerPekAlthoff2020}
S.~{Manzinger}, C.~{Pek}, and M.~{Althoff}.
\newblock Using reachable sets for trajectory planning of automated vehicles.
\newblock {\em IEEE Transactions on Intelligent Vehicles}, pages 1--1, 2020.

\bibitem{SchwarmNikolaou1999}
A.T. Schwarm and M.~Nikolaou.
\newblock Chance-constrained model predictive control.
\newblock {\em AIChE Journal}, 45(8):1743--1752, 1999.

\bibitem{BlackmoreEtalWilliams2010}
L.~Blackmore, M.~Ono, A.~Bektassov, and B.C. Williams.
\newblock A probabilistic particle-control approximation of chance-constrained
  stochastic predictive control.
\newblock {\em Trans. Rob.}, 26(3):502--517, June 2010.

\bibitem{LenzEtalKnoll2015}
D.~Lenz, T.~Kessler, and A.~Knoll.
\newblock Stochastic model predictive controller with chance constraints for
  comfortable and safe driving behavior of autonomous vehicles.
\newblock In {\em 2015 IEEE Intelligent Vehicles Symposium (IV)}, pages
  292--297, Seoul, South Korea, 2015.

\bibitem{SuhChaeYi2018}
J.~{Suh}, H.~{Chae}, and K.~{Yi}.
\newblock Stochastic model-predictive control for lane change decision of
  automated driving vehicles.
\newblock {\em IEEE Transactions on Vehicular Technology}, 67(6):4771--4782,
  June 2018.

\bibitem{SchildbachBorrelli2015}
G.~Schildbach and F.~Borrelli.
\newblock Scenario model predictive control for lane change assistance on
  highways.
\newblock In {\em 2015 IEEE Intelligent Vehicles Symposium (IV)}, pages
  611--616, Seoul, South Korea, 2015.

\bibitem{CampiGaratti2011}
M.C. Campi and S.~Garatti.
\newblock A sampling-and-discarding approach to chance-constrained
  optimization: Feasibility and optimality.
\newblock {\em Journal of Optimization Theory and Applications},
  148(2):257--280, Feb 2011.

\bibitem{SchildbachEtalMorari2014}
G.~Schildbach, L.~Fagiano, C.~Frei, and M.~Morari.
\newblock The scenario approach for stochastic model predictive control with
  bounds on closed-loop constraint violations.
\newblock {\em Automatica}, 50(12):3009 -- 3018, 2014.

\bibitem{BruedigamEtalWollherr2018b}
T.~Br\"udigam, M.~Olbrich, M.~Leibold, and D.~Wollherr.
\newblock Combining stochastic and scenario model predictive control to handle
  target vehicle uncertainty in autonomous driving.
\newblock In {\em 2018 21st IEEE International Conference on Intelligent
  Transportation Systems (ITSC)}, pages 1317--1324, 2018.

\bibitem{BruedigamEtalLeibold2020c}
T.~Br\"udigam, F.~di~Luzio, L.~Pallottino, D.~Wollherr, and M.~Leibold.
\newblock Grid-based stochastic model predictive control for trajectory
  planning in uncertain environments.
\newblock In {\em 2020 IEEE 23rd International Conference on Intelligent
  Transportation Systems (ITSC)}, pages 1--8, 2020.

\bibitem{SteyerTanzmeisterWollherr2018}
S.~{Steyer}, G.~{Tanzmeister}, and D.~{Wollherr}.
\newblock Grid-based environment estimation using evidential mapping and
  particle tracking.
\newblock {\em IEEE Transactions on Intelligent Vehicles}, 3(3):384--396, Sep.
  2018.

\bibitem{SteyerEtalWollherr2019}
S.~{Steyer}, C.~{Lenk}, D.~{Kellner}, G.~{Tanzmeister}, and D.~{Wollherr}.
\newblock Grid-based object tracking with nonlinear dynamic state and shape
  estimation.
\newblock {\em IEEE Transactions on Intelligent Transportation Systems}, pages
  1--20, 2019.

\bibitem{KongEtalBorrelli2015}
J.~Kong, M.~Pfeiffer, G.~Schildbach, and F.~Borrelli.
\newblock Kinematic and dynamic vehicle models for autonomous driving control
  design.
\newblock In {\em 2015 IEEE Intelligent Vehicles Symposium (IV)}, pages
  1094--1099, Seoul, South Korea, 2015.

\bibitem{GruenePannek2017}
L.~Gr\"{u}ne and J.~Pannek.
\newblock {\em Nonlinear Model Predictive Control}.
\newblock Springer-Verlag, London, 2017.

\end{thebibliography}
\bibliographystyle{unsrt}

\end{document}